\documentclass[12pt]{article}
\usepackage{times}
\linespread{1.5}
\usepackage{amsfonts}
\usepackage{amsmath}
\usepackage{enumitem}
\usepackage{mathtools}
\usepackage{centernot}
\usepackage{soul}
\usepackage{booktabs}
\usepackage{latexsym}
\usepackage[ruled]{algorithm2e}
\SetAlFnt{\small}
\SetAlCapFnt{\small}
\SetAlCapNameFnt{\small}
\SetAlCapHSkip{0pt}
\IncMargin{-\parindent}
\usepackage{algorithmic}
\usepackage{nicefrac}
\usepackage[
  bookmarks=true,
  bookmarksnumbered=true,
  bookmarksopen=true,
  pdfborder={0 0 0},
  breaklinks=true,
  colorlinks=true,
  linkcolor=black,
  citecolor=black,
  filecolor=black,
  urlcolor=black,
]{hyperref}
\usepackage{graphicx}
\usepackage{xcolor}
\usepackage[margin=1in]{geometry}
\usepackage{amssymb}
\usepackage{amsmath}
\usepackage[square]{natbib}
\setcitestyle{numbers}
\usepackage{amsthm}
\usepackage[capitalize]{cleveref}
\usepackage{bbm}
\usepackage[normalem]{ulem}
\usepackage{tikz}
\usetikzlibrary{decorations.pathreplacing,angles,quotes}
\usepackage{MnSymbol}
\theoremstyle{plain}
\newtheorem{theorem}{Theorem}[section]
\newtheorem{corollary}[theorem]{Corollary}

\newtheorem{lemma}[theorem]{Lemma}
\newtheorem{proposition}[theorem]{Proposition}
\newtheorem{observation}[theorem]{Observation}
\newtheorem{remark}[theorem]{Remark}

\newtheorem{definition}[theorem]{Definition}

\newtheorem*{question*}{Open Question}
\DeclarePairedDelimiter\parentheses{(}{)}
\DeclarePairedDelimiter\braces{\{}{\}}
\DeclarePairedDelimiter\brackets{[}{]}
\DeclarePairedDelimiter\absolute{|}{|}

\newenvironment{lp*}{\begin{equation*}  \begin{array}{l>{\displaystyle}l>{\displaystyle}l}}{\end{array}\end{equation*}}

\newcommand{\supp}{{\mathrm{supp}}}
\newcommand{\poly}{{\mathrm{poly}}}

\begin{document}
\title{Algorithmic Cheap Talk}
\date{September 9, 2024}
\author{Yakov Babichenko\thanks{Technion--Israel Institute of Technology | \emph{E-mail}: \href{mailto:yakovbab@technion.ac.il}{yakovbab@technion.ac.il}.} 
\and 
Inbal Talgam-Cohen\thanks{Tel Aviv University and Technion--Israel Institute of Technology | \emph{E-mail}: \href{mailto:inbaltalgam@gmail.com}{inbaltalgam@gmail.com}.}
\and 
Haifeng Xu \thanks{University of Chicago | \emph{E-mail}: \href{mailto:haifengxu@uchicago.edu}{haifengxu@uchicago.edu}.} \and Konstantin Zabarnyi\thanks{Technion--Israel Institute of Technology | \emph{E-mail}: \href{mailto:konstzab@gmail.com}{konstzab@gmail.com}.}}
\maketitle
\begin{abstract}
The literature on strategic communication originated with the influential \emph{cheap talk} model, which precedes the Bayesian persuasion model by three decades. This model describes an interaction between two agents: \emph{sender} and \emph{receiver}. The sender knows some \emph{state} of the world which the receiver does not know, and tries to influence the receiver's action by communicating a cheap talk message to the receiver.

This paper initiates the systematic algorithmic study of cheap talk in a finite environment (i.e., a finite number of states and receiver's possible actions). We first prove that approximating the sender-optimal or the welfare-maximizing cheap talk equilibrium up to a certain additive constant or multiplicative factor is NP-hard. We further prove that deciding whether there exists an equilibrium in which the receiver gets utility higher than the trivial utility he can guarantee is NP-hard. Fortunately, we identify two naturally-restricted cases that admit efficient algorithms for finding a sender-optimal equilibrium -- a constant number of states or a receiver having only two actions.
\end{abstract}

\section{Introduction}
\label{sec:intro}
What information should an informed and rational player (the \emph{sender}) communicate to a decision maker (the \emph{receiver}) so as to optimally influence the decision? Starting with the seminal 1982 work of Crawford and Sobel~\cite{crawford1982strategic} (see also~\cite{milgrom1981good,grossman1981informational} and~\cite{green1980two} who introduced similar ideas in an unpublished paper before~\cite{crawford1982strategic}), this question has been extensively studied in economics for four decades. The two leading branches of the economic literature on strategic information transmission are \emph{cheap talk} and \emph{Bayesian persuasion}. The fundamental difference between these two branches lies in the ability or disability of the sender to commit. The cheap talk model assumes no commitment power on the sender's side, while 
in the Bayesian persuasion model, the sender can commit to a strategically chosen information revelation policy in a trustworthy way.

The work of~\cite{dughmi2016algorithmic} initiated the study of algorithmic aspects of Bayesian persuasion, including the complexity of computing the sender-optimal signaling policy. Since then, Bayesian persuasion and its various extensions have been extensively studied in the algorithmic game theory (AGT) and theory of computation (ToC) literatures (see, e.g.,~\cite{babichenko2017algorithmic,dughmi2017algorithmic2,arieli2019private,cummings2020algorithmic,castiglioni2020online,xu2020tractability,GradwohlHHS21,hoefer2021algorithmic,castiglioni2021multi,fair-banerjee} and an algorithmic survey in \cite{dughmi2017algorithmic}). Additionally, signaling policies with commitment have been addressed in many contexts of the AGT literature, including auctions  \cite{DughmiIR14,cheng2015mixture,DaskalakisPT16,badanidiyuru2018targeting,alijani2022limits}), routing games \cite{bhaskar2016hardness,zhou2022algorithmic,griesbach2022public},  abstract games  \cite{dughmi2014hardness,rubinstein2015eth,bhaskar2016hardness} and recently in optimal stopping problems \cite{DingFHTX23}.

The complementary model of cheap talk is arguably more natural for strategic information communication; indeed, the economic literature mainly focused on cheap talk for three decades (see, e.g.,~\cite{forges1984note,Farrell87,matthews1989pre,austen1990information,forges1990equilibria,FarrellR96,crawford1998survey,Battaglini2002,AumannH2003,OttavianiS2006,Krishna2016}), before Bayesian persuasion was introduced in 2011 by Kamenica and Gentzkow~\cite{kamenica2011bayesian}. This early focus on cheap talk for modeling strategic communication stems from its minimalistic assumptions -- in many scenarios, it is unreasonable to assume that the sender credibly commits to a signaling policy as in Bayesian persuasion, whereas ``ordinary, informal talk’’~\cite{FarrellR96} more accurately describes the communication. \citet{lipnowski2020cheap}~give many examples for relevant scenarios, such as a political think tank advising a lawmaker or a broker advising an investor. The cheap talk model addresses such ordinary communication mathematically, via equilibrium concepts.

Despite being well-established and more realistic in many scenarios, to the best of our knowledge, cheap talk has not yet been systematically approached from an algorithmic perspective. \emph{The goal of this paper is to investigate the algorithmic aspects of cheap talk in its most basic form}. 

\subsection{Our Model and Results}
\label{sub:res_tec}
We study the classic cheap talk model in its most basic form with discrete states and actions. In this model, a \emph{state} of nature is drawn from a publicly known \emph{prior} distribution with a finite support. The setting includes two players: a \emph{sender} who eventually observes the state and a \emph{receiver} who does not observe it. For clarity of presentation, we use the pronouns she/her for the sender and he/his for the receiver. The sender can ``talk'' to the receiver to convey information ``cheaply". That is, the communication is without any cost, and the conveyed information -- a.k.a.~\emph{signal} -- can take any form, and it does not have to be accurate or correct. The receiver may then use or ignore the signal when choosing an action from a finite set of possibilities. The action coupled with the state determines the utilities of both the sender and the receiver.

As is usual in equilibrium analysis, we assume the sender is playing according to a strategy, known in our context as a \emph{signaling policy}. The signaling policy specifies the probability of sending different signals conditional on every possible state. Upon learning the true state, the sender draws a signal according to the marginal distribution induced by the signaling policy and sends it to the receiver. The receiver knows the prior distribution over the states, as well as the sender's signaling policy. After getting the signal, he applies a Bayesian update to deduce his \emph{posterior} belief (distribution) over the states. Based on this updated belief, he then chooses his (possibly random) action according to his own strategy. 

The two players' strategies form a \emph{cheap talk equilibrium}, or more formally a \emph{perfect Bayesian equilibrium}~\cite{fudenberg1991perfect}, if no agent has an incentive to deviate. Unlike general Nash equilibrium, finding a cheap talk equilibrium turns out to be a trivial task, since revealing no information always forms a cheap talk equilibrium (a.k.a.~the \emph{babbling equilibrium}). A popular perspective adopted in the literature on strategic information transmission is the sender's point of view: The goal is to analyze the \emph{sender-optimal} equilibrium, i.e., the equilibrium at which the sender's expected utility is maximized~\cite{crawford1982strategic,kartik2007credulity,lipnowski2020cheap,lin2022credible}. Therefore, the sender-optimal equilibrium will also be the main focus of our computational study, except in Subsection~\ref{sub:receiver} in which we discuss the implications of our results from the receiver's perspective.

Our main negative result shows that it is NP-hard to approximate the sender's utility value at the sender-optimal cheap talk equilibrium up to a certain multiplicative or additive constant (see Theorem \ref{thm:additive}). This result stands in contrast to the Bayesian persuasion model, in which a sender-optimal equilibrium and the sender's corresponding utility can be computed efficiently by a simple linear program. This may serve as a complexity-theoretic explanation of the explosive interest in Bayesian persuasion in the recent economic literature after its introduction in~\cite{kamenica2011bayesian}.

Our computational hardness result has implications for the economic analysis of cheap talk. It indicates that a simple characterization for the sender's optimal expected utility in cheap talk equilibria is unlikely to exist -- or, at least, the characterization has to be complex enough to encode certain NP-hard problems. This again stands in contrast to the following fundamental results in strategic information transmission:\footnote{The concavification of a function $f$ is the pointwise-smallest concave function weakly above $f$; similarly, the quasi-concavification of $f$ is the pointwise-smallest quasi-concave function weakly above $f$.}
\begin{enumerate}
    \item In the Bayesian persuasion model, the maximal sender's expected utility is precisely characterized as the concavification of the sender's \emph{indirect utility function} -- i.e., sender's utility as a function of the posterior distribution; see~\cite{kamenica2011bayesian}.
    \item In the cheap talk model with the simplification of state-independent sender's utility, the sender's optimal expected utility at an equilibrium is precisely characterized as the quasi-concavification of the sender's indirect utility function; see~\cite{lipnowski2020cheap}.
\end{enumerate}

Towards our positive results, we first show the existence of a sender-optimal equilibrium in which the sender uses no more signals than the number of states (Proposition~\ref{pro:signals}). Besides being an elegant mathematical property of cheap talk equilibria, this result also implies the NP-membership of deciding whether the sender can ensure certain utility through cheap talk. 

It is known from~\cite{vohra2023bayesian} that it is computationally tractable to find a sender-optimal cheap talk equilibrium when the sender's utility is state-independent (see also~\cite{aybas2019persuasion} for a closed formula for the corresponding sender's utility); this result follows from the characterization by~\cite{lipnowski2020cheap} of the utility the sender can guarantee at a cheap talk equilibrium in terms of the quasi-concavification of the sender's utility function. Our positive results identify further interesting tractable special cases of cheap talk. We investigate the source for the hardness of cheap talk -- is the problem hard due to a large number of states and/or due to a large number of actions? Our NP-hardness reduction uses cheap talk instances with many states and actions, and our positive results suggest that this hardness may reside in the combination of both. In particular, we show that for any constant number of states (and many actions), the sender-optimal equilibrium can be computed in polynomial time (Proposition~\ref{pro:states}). Also, for two actions (and many states), the sender-optimal equilibrium can be computed in linear time (Proposition~\ref{pro:binary}). We view this as a significant positive result since the \emph{binary-action} setting is a well-studied special case in the existing literature on strategic communication~\cite{li2016competitive,dughmi2017algorithmic2,kolotilin2017persuasion,guo2019interval,babichenko2022regret,feng2022rationality}. An interesting open question is whether there exists a polynomial-time algorithm for finding a sender-optimal cheap talk equilibrium with \emph{any constant} number of actions.

\subsection{Our Techniques}
\label{sub:tec}
There are two key differences between cheap talk and Bayesian persuasion that stem from the lack of the sender's commitment power in the former:

\begin{enumerate}
    \item Unlike Bayesian persuasion, in the cheap talk model the sender in each state must only transmit signals that are optimal in that state (i.e., signals that yield the best possible sender's utility given the receiver's strategy).
    \item In Bayesian persuasion, the sender can deviate to a strategy inducing a posterior distribution that is not induced by any signal under the original strategy. However, it will \emph{not} happen at any cheap talk equilibrium, as the receiver can  correctly interpret only the signals that are sent with \emph{strictly positive probability} according to the sender's strategy. To induce a new posterior, the sender has to use a new signal that the receiver is unfamiliar with, in which case his response can be arbitrary and hence potentially harmful to the sender.
\end{enumerate}

Each of the above differences is insufficient on its own to deduce a hardness result -- the linear programming approach that is useful in Bayesian persuasion can be adjusted to tackle it. The combination of \emph{both} these differences is the leading force of our hardness result. A key challenge is the continuous nature of (even the finite) cheap talk problem. Several techniques have been developed in the literature to reduce combinatorial problems to continuous game-theoretic ones. Unfortunately, we have not found the existing techniques applicable to cheap talk (for further discussion, see Subsection~\ref{sub:related} below). Instead, we develop a novel technique that uses a reduction from a variation of $3$SAT (see Section~\ref{sec:hard}).

\subsection{Related Literature}
\label{sub:related}
\textbf{Connections to designing AI agents for strategic communication.} Our theory work is of possible interest to more applied, cutting-edge research in AI: Driven by the recent maturity of generative language models, the AI/ML field is undergoing a shift of interest. From solving pure tactic-based games (such as Go~\cite{silver2017mastering}, Poker~\cite{brown2019superhuman}, or StarCraft~\cite{vinyals2019grandmaster}), the focus is shifting to games that require strategic communication via natural language, often in the style of cheap talk. Notable examples include a recent influential work on developing a human-level AI agent for the game of \emph{Diplomacy}, by combining language models with strategic reasoning~\cite{meta2022human}. \emph{Diplomacy} starts with a communication phase to exchange agents' private information -- in the style of cheap talk without any commitment --  followed by a decision-making phase that utilizes learned information from communication. Despite the short period since its publication, this work has spurred a rapidly-growing body of follow-up works on combining language communication with strategic reasoning, ranging from simple matrix games~\cite{gandhi2023strategic} to board games like \emph{Mafia}~\cite{o2023hoodwinked} or \emph{Werewolf}~\cite{xu2023language,xu2023exploring}, and even to complex real-world negotiation~\cite{abdelnabi2023llm}. In all these applications, the bare bones of the problem are a cheap talk game in which the sender tries to gain a strategic advantage by communicating information to another agent without commitment. An appropriate solution concept for such sequential Bayesian games is the perfect Bayesian equilibrium~\cite{fudenberg1991perfect} we study. We view our work as a potential complexity-theoretic contribution to this trend, since it studies the algorithmics of computing ``good'' perfect Bayesian equilibria in arguably the simplest possible situations.

\vspace{2mm}
\noindent \textbf{Complexity of equilibrium computation.} The complexity of equilibrium computation has been one of the core questions in algorithmic game theory. In a series of important works~\cite{lipton2003playing,daskalakis2009complexity,chen2009settling,rubinstein2016settling}, the complexity of finding a Nash equilibrium and the complexity of finding an approximate Nash equilibrium in a bi-matrix game was settled. From the computational perspective, the cheap talk game model we study has almost the same bi-matrix format, except that one of its dimensions represents the receiver's action, whereas another dimension represents the state, drawn from a known prior distribution. Thus, the sender's strategy cannot be distributed arbitrarily as in Nash equilibrium for a bi-matrix game. This subtle difference turns out to be fundamental --  finding a cheap talk equilibrium immediately becomes a trivial task since revealing no information is always an equilibrium.

\vspace{2mm}
\noindent \textbf{Complexity of ``optimal'' equilibrium computation.}
For Nash equilibria, it is known that the maximal clique problem~\cite{gilboa1989nash}, the Monotone-$1$-in-$3$SAT problem~\cite{deligkas2017computing} and the Boolean formula satisfaction problem~\cite{conitzer2003complexity} can be reduced to computing the ``optimal'' equilibrium. However, the problem of computing the optimal Nash equilibrium is significantly different from the sender-optimal cheap talk equilibrium. One difference is the \emph{non-local nature} of the sender's strategy, which is not present at Nash equilibrium. In particular, the hardness of finding the optimal Nash equilibrium in almost all the reductions above lies in identifying a subset of player actions -- often hinted by the solution to a maximal clique or the satisfying variable assignments -- so that mixing them leads to desirable equilibrium utility. Unlike such \emph{selection} of a subset of actions, the sender's signaling policy has to be a \emph{decomposition} of the prior distribution over all the states. We introduce a new variant of the $3$\texttt{SAT} problem and reduce to computing a sender-optimal cheap talk equilibrium from this variant, which better suits the decomposition interpretation of sender's~policies.

\vspace{2mm}
\noindent \textbf{Complexity of Bayesian persuasion.}
The hardness in signaling with commitment often comes from the difficulty of coordinating multiple receivers' actions, e.g., in zero-sum games~\cite{dughmi2014hardness,rubinstein2015eth} or routing~\cite{bhaskar2016hardness,zhou2022algorithmic}. The hardness in cheap talk, however, comes from a fundamentally different reason -- indeed, computing a sender-optimal cheap talk equilibrium is NP-hard even when there is a single receiver; in this case, Bayesian persuasion can be solved via linear programming.

\vspace{2mm}
\noindent \textbf{Cheap talk-related algorithms.} To the best of our knowledge, we are the first to systematically study computational aspects of the cheap talk model. Some recent works include algorithms related to cheap talk: The work of~\citet{lyu2022information} provides an algorithm that outputs the sender's optimal expected utility in a game combining cheap talk with Bayesian persuasion. Specifically, the sender in their model can commit to a signaling policy, but not to truthfully reporting the observed state as an input to the signaling policy. Unlike us, they do not focus on computational complexity. A concurrent work~\cite{condorelli2023cheap} studies cheap talk with the sender and the receiver being reinforcement learning agents with quadratic utilities; they empirically show that both agents' strategies converge to the unique Pareto efficient equilibrium.

\vspace{2mm}
\noindent \textbf{Other works on strategic communication.}  Besides the rich body of work on algorithmic Bayesian persuasion mentioned at the beginning, another related branch of literature on strategic information transmission studies the revelation of information that is \emph{ex-post verifiable}, prohibiting the sender from sending a false message. As a consequence, the sender cannot ``lie'' in her signaling policy; see, e.g.,~\cite{grossman1980disclosure,grossman1981informational,glazer2004optimal,HartKP17,titova2022persuasion}. The work of~\cite{HaghtalabIL+} studies a novel variant in which the sender wishes both to persuade and to inform. In their model, the sender communicates with ``anecdotes'' -- signals that are always truthful -- with and without commitment. Their focus is on equilibrium analysis rather than computation. \cite{hoefer2021algorithmic} also study persuasion using evident signals, both with and without commitment. They focus on a special case with binary receiver actions \emph{and} state-independent sender's utility. Interestingly, in their setup with constrained communications, optimal commitment is hard, while cheap talk is easy.

\subsection{Paper Organization}
\label{sub:organization}
In Section~\ref{sec:model}, we formally introduce cheap talk and provide a preliminary result that bounds the required number of signals in a sender-optimal cheap talk equilibrium; as a corollary, this implies that the decision version of computing the maximal sender's expected utility at an equilibrium belongs to NP. Section~\ref{sec:hard} shows that approximating a sender-optimal, or a welfare-maximizing, cheap talk equilibrium up to a certain multiplicative/additive constant is NP-hard. In subsection~\ref{sub:receiver}, we show that deciding whether the receiver can get equilibrium utility higher than at a babbling equilibrium is an NP-complete problem. Section~\ref{sec:tractble} analyzes two cases in which computing such an equilibrium is tractable -- a constant number of states and a binary-action receiver. Section~\ref{sec:discussion} summarizes, briefly discusses the computational complexity of social welfare maximization in cheap talk and mentions various directions for future research.

\section{The Model}
\label{sec:model}
\paragraph{Normal form cheap talk games.}
We study the cheap talk model in arguably its simplest representation, which we refer to as \emph{normal form}.\footnote{Many economic papers focus on cheap talk settings with continuum-sized state and action spaces. These settings are often structured; the utility functions are assumed to satisfy continuity and other regularity properties. See, e.g.,~\cite{crawford1982strategic,chen2008selecting}.} In a normal form cheap talk game $\langle A,\Omega,\mu,u_S,u_R \rangle$, there are two players: \emph{sender} (she) and \emph{receiver} (he). The receiver is a decision maker choosing which \emph{action} $a$ to take from a given finite set of $m$ actions $A=\braces*{a_1,\ldots,a_m}$. Both players' utilities are determined by the receiver's action, as well as a random \emph{state} of nature $\omega$, supported on a finite set $\Omega=\braces*{\omega_1,\ldots,\omega_n}$ of size $n$. We use $u_S\parentheses*{\omega, a}$ and $u_R\parentheses*{\omega, a}$ to denote the sender's and receiver's respective \emph{utilities}. The players share a common \emph{prior belief} $\mu=\parentheses*{\mu\parentheses*{\omega}}_{\omega\in\Omega}\in\Delta\parentheses*{\Omega}$ about the random state, where  $\mu\parentheses*{\omega}$ is the probability that the realized state is $\omega$, and given some set $S$, the simplex $\Delta\parentheses*{S}$ is the set of all distributions over $S$. This makes cheap talk a Bayesian game.

Let $\Sigma$ be an abstract, publicly known, sufficiently rich discrete set of \emph{signals} (e.g., natural language messages up to a certain length).\footnote{The assumption that $\Sigma$ is a discrete set is not essential; we only impose it to facilitate the presentation.} Specifically, we require $\absolute*{\Sigma}\geq n$ (since by Proposition~\ref{pro:signals} below, $n$ signals always suffice for some sender-optimal equilibrium). The game then proceeds as a two-step sequential interaction, as follows:

\begin{itemize}[leftmargin=60pt]
   \item[Step 1] Nature draws a state $\omega \sim \mu$ from the prior distribution. The sender observes $\omega$ and transmits a (possibly random) signal $\sigma \in \Sigma$ to the receiver.
    \item[Step 2] After observing $\sigma$, the receiver deduces a posterior distribution over $\Omega$ and chooses an action $a\in A$. The sender (resp., receiver) gets utility $u_S\parentheses*{\omega,a}$ (resp., $u_R\parentheses*{\omega,a}$).
\end{itemize}

Like the classic cheap talk literature, we analyze this two-step game via the standard equilibrium notion for such Bayesian sequential games, known as perfect Bayesian equilibrium~\cite{fudenberg1991perfect}.

\paragraph{Strategies and beliefs.}
\emph{The sender's (mixed) strategy} in the game is a mapping $\pi:\Omega\to\Delta\parentheses*{\Sigma}$ that maps observed states to distributions over signals; it is called a \emph{signaling policy}. Let $\pi\parentheses*{\sigma\mid \omega}$ denote the probability with which the sender transmits the signal $\sigma$ conditional on observing the state~$\omega$. We use $\pi\brackets*{\omega} \in \Delta\parentheses*{\Sigma}$ to denote the distribution of signals conditional on the state~$\omega$. We write $\sigma \sim \pi$ to denote the \emph{unconditional} distribution of $\sigma$, i.e., this notation describes a two-step process in which we first draw a state $\omega \sim \mu$, and thereafter -- a signal $\sigma\sim\pi\brackets*{\omega}$.

Upon receiving a signal $\sigma$ generated from the sender's signaling policy $\pi$, the receiver infers a \emph{posterior belief} about the underlying state $\omega$ via a standard Bayesian update:

 \begin{equation}\label{eq:posterior-update}     p_{\sigma}\parentheses*{\omega}:=p_{\sigma}^{\pi}\parentheses*{\omega}:=\mathbb{P}\parentheses*{\omega \mid \sigma}= \frac{ \mathbb{P}\parentheses*{\omega,  \sigma} }{\mathbb{P}(\sigma) } = \frac{ \pi\parentheses*{\sigma \mid \omega} \cdot \mu\parentheses*{\omega}}{ \sum_{\omega' \in \Omega} \pi\parentheses*{\sigma \mid \omega'} \cdot \mu\parentheses*{\omega'} }.
\end{equation}

We denote by $p_\sigma:=\parentheses*{p_\sigma\parentheses*{\omega}}_{\omega \in \Omega}$ the $n$-dimensional vector representing the posterior distribution upon receiving signal $\sigma$. The \emph{support} of a distribution $\nu$ over a discrete set $X$ is $\supp\parentheses*{\nu}:=\braces*{x\in X:\;\mathbb{P}_{\nu}\brackets*{x}>0}$. In particular, the support of~$\pi$ is the set of all signals in $\Sigma$ sent with positive probabilities. Notably, for $\sigma\notin \supp\parentheses*{\pi}$, the posterior $p_\sigma$ remains undefined.

\begin{remark}
\label{rem:deviate-in-support}
In the equilibrium notion suitable for cheap talk -- perfect Bayesian equilibrium -- the sender has no incentives to deviate to sending any signal outside of $\supp\parentheses*{\pi}$. More formally, one can define the receiver's posterior belief at off-path signals (i.e., signals outside of $\supp\parentheses*{\pi}$) as inducing the same posterior as some on-path message, which allows to avoid creating beneficial deviations for the~sender.
\end{remark}

\emph{The receiver's (mixed) strategy} is a mapping $s: \Sigma \to \Delta\parentheses*{A}$ that maps signals to distributions over the action set $A$. This strategy is set simultaneously with the sender's strategy and determines how the receiver plays in Step~2. Let $s\parentheses*{a\mid \sigma}$ denote the receiver's probability of taking action $a$ conditional on receiving signal~$\sigma$. We use $s\brackets*{\sigma} \in \Delta \parentheses*{A}$ to denote the distribution of actions conditional on signal~$\sigma$.

\paragraph{Cheap talk equilibrium.}
In cheap talk, no players have commitment power, and thus they set their strategies simultaneously. At equilibrium, a player's strategy must be a \emph{best response} to the opponent's strategy. Consider first the receiver. Given the sender's signal $\sigma$ and the sender's strategy $\pi$ (according to which the receiver performs the Bayesian update as in Equation~\eqref{eq:posterior-update} and deduces posterior $p_\sigma$), the receiver derives his expected utility from each action $a$:

$$
\mathbb{E}_{\omega \sim p_\sigma} \brackets*{u_R\parentheses*{\omega, a}} = \sum_{\omega \in \Omega} p_\sigma\parentheses*{\omega} u_R\parentheses*{\omega, a}.
$$

The receiver's \emph{best response action set} given $\sigma$ and $\pi$ is:
\begin{equation}
\texttt{B}_R\parentheses*{\sigma,\pi} = \bigg\{a \in A : \mathbb{E}_{\omega \sim p_\sigma} \brackets*{u_R\parentheses*{\omega, a}} \geq \mathbb{E}_{\omega \sim p_\sigma} \brackets*{u_R\parentheses*{\omega, a'}}, \forall a' \in A \bigg\}.
\end{equation}

The receiver's distribution over actions $s\brackets*{\sigma}$ is a \emph{best response} to $\sigma$ under $\pi$ if it mixes over only best response actions, i.e.,~$\supp\parentheses*{s\brackets*{\sigma}} \subseteq \texttt{B}_R\parentheses*{\sigma,\pi}$. 
The receiver's strategy $s$ is a \emph{best response to the sender's strategy} $\pi$ if $s\brackets*{\sigma}$ is a best response to $\sigma$ for every $\sigma\in\supp\parentheses*{\pi}$.

The sender's best response requires slightly more careful treatment. Upon observing a state $\omega$, the sender may wish to deviate from sending a signal $\sigma \sim \pi\brackets*{\omega}$ to sending an arbitrary signal $\sigma' \in \Sigma$. However, as noted in Remark~\ref{rem:deviate-in-support}, in cheap talk the sender can only deviate to signals $\sigma'\in\supp\parentheses*{\pi}$. Thus, $\pi$ is a best response if the sender does not benefit from any such deviations under any realized state $\omega$. Formally, given the state $\omega$ and the receiver's strategy $s$, the sender's \emph{best response signal set} is:
\begin{equation}\label{eq:sender-optimal-response}
    \texttt{B}_S\parentheses*{\omega,s} = \bigg\{ \sigma \in \supp\parentheses*{\pi}:   \mathbb{E}_{a 
    \sim s\brackets*{\sigma}} u_S(\omega, a) \geq \mathbb{E}_{a 
    \sim s\brackets*{\sigma'}} u_S\parentheses*{\omega, a} , \forall   \sigma' \in \supp\parentheses*{\pi}  \bigg \}. 
\end{equation}
The sender's signaling policy $\pi$ is a \emph{best response to receiver's strategy} $s$ if $\supp \parentheses*{\pi\brackets*{\omega}} \subseteq \texttt{B}_S\parentheses*{\omega,s}$ for any $\omega$. That is, conditional on any realized state $\omega$, the distribution of the signals $\pi\brackets*{\omega}$ only mixes over the sender's best response signals in the set $\texttt{B}_S\parentheses*{\omega,s}$.

Given the definitions of both players' best responses above, 
we are now ready to define the standard equilibrium concept in the cheap talk model~\cite{crawford1982strategic,FarrellR96,lipnowski2020cheap}:

\begin{definition}[Cheap talk equilibrium] 
Consider a cheap talk game $\langle A,\Omega,\mu,u_S,u_R \rangle$.
A pair of mixed strategies $\parentheses*{\pi, s}$ is a \emph{perfect Bayesian equilibrium} of the game, a.k.a.~a~\emph{cheap talk equilibrium},~if:\footnote{Note that the outcome in the Bayesian persuasion model with sender's commitment power does not require the first constraint on sender's incentives.}
\begin{itemize}
    \item  $\pi$ is a best response to $s$: $\supp \parentheses*{\pi\brackets*{\omega}} \subseteq \texttt{B}_S\parentheses*{\omega,s}$ for every $\omega \in \Omega$; 
    \item  $s$ is a best response to $\pi$: $\supp \parentheses*{s\brackets*{\sigma}} \subseteq \texttt{B}_R\parentheses*{\sigma,\pi}$ for every $\sigma \in \supp\parentheses*{\pi}$. 
\end{itemize}
\end{definition}

While a cheap talk equilibrium always exists, the intractability of computing a Nash equilibrium naturally raises the computational concern of finding one. Interestingly, it turns out that computing one equilibrium in cheap talk is a trivial task. Indeed, revealing no information always yields an equilibrium, which is known as a \emph{babbling equilibrium}.

\begin{observation}[Babbling equilibrium]
\label{obs:no_reveal}
Let $\Sigma = \braces*{\sigma}$ be a singleton set and $\pi\parentheses*{\sigma\mid\omega} = 1\; \forall \omega$ be the no-revelation signaling policy. 
Let $s$ be an action that best responds to the prior $\mu$ (that is played when the only possible signal $\sigma$ is observed). Then the strategy pair $\parentheses*{\pi, s}$ is a cheap talk~equilibrium.
\end{observation}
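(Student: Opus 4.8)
The plan is to verify directly that the strategy pair $\parentheses*{\pi, s}$ described in the statement satisfies both best-response conditions in the definition of a cheap talk equilibrium. Since the setup is deliberately degenerate --- there is a single signal $\sigma$ and the sender always sends it --- the bulk of the work is just unwinding the definitions and observing that the relevant constraints hold vacuously or by construction. I expect this to be short; the ``obstacle,'' such as it is, lies only in correctly identifying why the sender has no profitable deviation, which hinges on the no-deviation-off-support convention recorded in Remark~\ref{rem:deviate-in-support}.

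First I would check the receiver's best-response condition. Because $\pi\parentheses*{\sigma\mid\omega}=1$ for every $\omega$, the posterior update in Equation~\eqref{eq:posterior-update} gives $p_{\sigma}\parentheses*{\omega} = \mu\parentheses*{\omega}$ for all $\omega$, since the numerator $\pi\parentheses*{\sigma\mid\omega}\mu\parentheses*{\omega}=\mu\parentheses*{\omega}$ and the denominator $\sum_{\omega'}\pi\parentheses*{\sigma\mid\omega'}\mu\parentheses*{\omega'}=1$. That is, the single on-path signal induces exactly the prior belief. By the hypothesis of the observation, $s$ plays an action that best responds to $\mu$, so $\supp\parentheses*{s\brackets*{\sigma}}\subseteq\texttt{B}_R\parentheses*{\sigma,\pi}$. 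Since $\sigma$ is the only signal in $\supp\parentheses*{\pi}$, this is all that the receiver's best-response condition requires.

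Next I would check the sender's best-response condition. Here the key observation is that $\supp\parentheses*{\pi}=\braces*{\sigma}$ is a singleton, so the universal quantifier over $\sigma'\in\supp\parentheses*{\pi}$ in the definition~\eqref{eq:sender-optimal-response} of $\texttt{B}_S\parentheses*{\omega,s}$ ranges over the single element $\sigma$ itself. The required inequality $\mathbb{E}_{a\sim s\brackets*{\sigma}}u_S\parentheses*{\omega,a}\geq\mathbb{E}_{a\sim s\brackets*{\sigma'}}u_S\parentheses*{\omega,a}$ therefore compares $\sigma$ with itself and holds trivially (with equality), so $\sigma\in\texttt{B}_S\parentheses*{\omega,s}$ for every state $\omega$. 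Consequently $\supp\parentheses*{\pi\brackets*{\omega}}=\braces*{\sigma}\subseteq\texttt{B}_S\parentheses*{\omega,s}$ for all $\omega$, as needed. The intuition, consistent with Remark~\ref{rem:deviate-in-support}, is that there is simply no alternative on-path signal to deviate to.

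Having verified both conditions, I would conclude that $\parentheses*{\pi,s}$ is a perfect Bayesian equilibrium. I would also briefly note that such an action $s$ best responding to $\mu$ always exists --- the receiver is maximizing a finite expectation over a finite action set $A$ --- so the babbling equilibrium is always well-defined and hence a cheap talk equilibrium always exists. No step is genuinely hard; the only place to be careful is ensuring the sender's deviation set is correctly restricted to $\supp\parentheses*{\pi}$ rather than all of $\Sigma$, which is exactly what makes the single-signal argument go through.
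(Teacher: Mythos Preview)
Your proposal is correct; the paper states this observation without proof, treating it as immediate from the definitions, and your direct verification of the two best-response conditions is exactly the natural way to spell it out. There is nothing to compare, since the paper omits the argument entirely.
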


Therefore, the interesting computational question in cheap talk is optimizing over any \emph{non-trivial} equilibria that may exist in addition to the above no-information ones.

\paragraph{Example.} Consider a cheap talk instance with a binary state space $\Omega=\braces*{0,1}$ (i.e., $n=2$), a uniform prior distribution, $m=4$ actions and the utilities specified by Table~\ref{tab:bin}.\footnote{Examples with a similar spirit are considered, e.g., in~\cite{forges1994non,forges2020games}.}

\begin{table}[h]
\begin{center}
\begin{tabular}{|c|c|c|c|c|}
\hline
$u_S$      & $a_1$ & $a_2$ & $a_3$ & $a_4$ \\ \hline
$\omega=0$ & $-1$     & $2$     & $-2$    & $3$     \\ \hline
$\omega=1$ & $3$    & $-2$    & $2$     & $-1$     \\ \hline
\end{tabular}
\hspace{5mm}
\begin{tabular}{|c|c|c|c|c|}
\hline
$u_R$      & $a_1$ & $a_2$ & $a_3$ & $a_4$ \\ \hline
$\omega=0$ & $3$     & $2$     & $-2$    & $-5$     \\ \hline
$\omega=1$ & $-5$     & $-2$    & $2$     & $3$     \\ \hline
\end{tabular}
\end{center}
\caption{The utilities $u_S$ and $u_R$.}\label{tab:bin}
\end{table}

Even though we are interested in the cheap talk model, it will be useful to first illustrate the Bayesian persuasion analysis in this example. Due to binary states, any posterior distribution $p_\sigma$ is fully determined by the probability $p_\sigma\parentheses*{\omega_1}$, since  $p_\sigma\parentheses*{\omega_0} +  p_\sigma\parentheses*{\omega_1} = 1$. Thus, in the following discussion, we shall identify the posterior with $p_\sigma\parentheses*{\omega_1}$ to allow a more convenient geometric interpretation.  Figure~\ref{fig:receiver-br}(a) shows that the action $a_1$ is a receiver's best response when the posterior is in the segment $\brackets*{0,\frac{1}{4}}$, the action $a_2$ is a best response when the posterior is in $ \brackets*{\frac{1}{4},\frac{1}{2}}$, $a_3$ is a best response in $\brackets*{\frac{1}{2},\frac{3}{4}}$ and $a_4$ is a best response in $\brackets*{\frac{3}{4},1}$. The \emph{sender's indirect utility} (i.e., sender's utility as a function of receiver's posterior) is as demonstrated in Figure~\ref{fig:receiver-br}(b). 

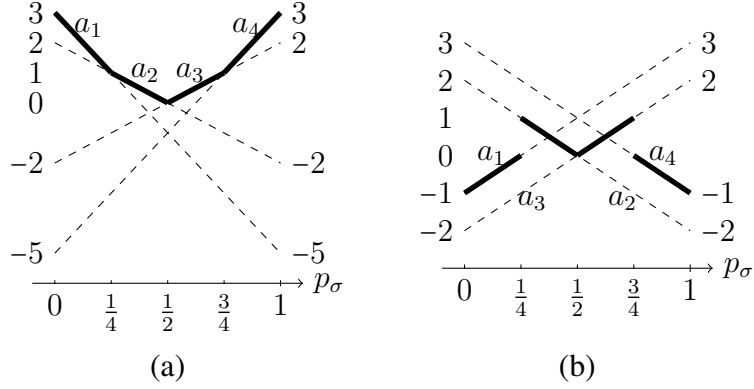
\begin{figure}[h]
    \centering
    \begin{tikzpicture}[xscale=3,yscale=0.4]
       \draw[->] (-0.1,-6) -- (1.1,-6);
       \node[right] at (1.1,-6) {$p_{\sigma}$};

       \draw[dashed] (0,3) -- (1,-5);
       \draw[line width=2] (0,3)--(0.25,1);

       \draw[dashed] (0,2) -- (1,-2);
       \draw[line width=2] (0.25,1) -- (0.5,0);

       \draw[dashed] (0,-2) -- (1,2);
       \draw[line width=2] (0.75,1) -- (0.5,0);

       \draw[dashed] (0,-5) -- (1,3);
       \draw[line width=2] (0.75,1) -- (1,3);      

       \draw (0,-5.9)--(0,-6.1);
       \node[below, fill=white] at (0,-6) {$0$};

       \draw (0.25,-5.9)--(0.25,-6.1);
       \node[below, fill=white] at (0.25,-6) {$\frac{1}{4}$};

       \draw (0.5,-5.9)--(0.5,-6.1);
       \node[below, fill=white] at (0.5,-6) {$\frac{1}{2}$};

       \draw (0.75,-5.9)--(0.75,-6.1);
       \node[below, fill=white] at (0.75,-6) {$\frac{3}{4}$};

        \draw (1,-5.9)--(1,-6.1);
       \node[below] at (1,-6) {$1$};

       \node[left] at (0,3) {$3$};
       \node[left] at (0,2) {$2$};
       \node[left] at (0,1) {$1$};
       \node[left] at (0,0) {$0$};
       \node[left] at (0,-2) {$-2$};
       \node[left] at (0,-5) {$-5$};

       \node[right] at (1,3) {$3$};
       \node[right] at (1,2) {$2$};
       \node[right] at (1,-2) {$-2$};
       \node[right] at (1,-5) {$-5$};

       \node[below] at (0.5,-8) {(a)};

       \node at (0.15,2.5) {$a_1$};

       \node at (0.4,1) {$a_2$};

       \node at (0.6,1) {$a_3$};

       \node at (0.85,2.5) {$a_4$};
              
    \end{tikzpicture}
    \hspace{5mm}
    \begin{tikzpicture}[xscale=3,yscale=0.5]
       \draw[->] (-0.1,-3) -- (1.1,-3);
       \node[right] at (1.1,-3) {$p_{\sigma}$};

       \draw[dashed] (0,-1) -- (1,3);
       \draw[line width=2] (0,-1)--(0.25,0);

       \draw[dashed] (0,2) -- (1,-2);
       \draw[line width=2] (0.25,1) -- (0.5,0);

       \draw[dashed] (0,-2) -- (1,2);
       \draw[line width=2] (0.75,1) -- (0.5,0);

       \draw[dashed] (0,3) -- (1,-1);
       \draw[line width=2] (0.75,0) -- (1,-1);
       
       \draw (0,-2.9)--(0,-3.1);
       \node[below, fill=white] at (0,-3) {$0$};

       \draw (0.25,-2.9)--(0.25,-3.1);
       \node[below, fill=white] at (0.25,-3) {$\frac{1}{4}$};

       \draw (0.5,-2.9)--(0.5,-3.1);
       \node[below, fill=white] at (0.5,-3) {$\frac{1}{2}$};

       \draw (0.75,-2.9)--(0.75,-3.1);
       \node[below, fill=white] at (0.75,-3) {$\frac{3}{4}$};

       \draw (1,-2.9)--(1,-3.1);
       \node[below] at (1,-3) {$1$};

       \node[left] at (0,3) {$3$};
       \node[left] at (0,2) {$2$};
        \node[left] at (0,1) {$1$};
        \node[left] at (0,0) {$0$};
        
       \node[left] at (0,-2) {$-2$};
       \node[left] at (0,-1) {$-1$};
       
       \node[right] at (1,3) {$3$};
       \node[right] at (1,2) {$2$};
       \node[right] at (1,-1) {$-1$};
       \node[right] at (1,-2) {$-2$};
    
       \node[below] at (0.5,-5) {(b)};

       \node at (0.12,0) {$a_1$};

       \node at (0.88,0) {$a_4$};       
       
       \node at (0.7,-1.2) {$a_2$};

       \node at (0.3,-1.2) {$a_3$};
       
    \end{tikzpicture}
    \caption{Figure~(a) demonstrates the receiver's utility as a function of the posterior probability $p_{\sigma}\parentheses*{\omega_1}$ (the horizontal axis). Dashed lines capture the utilities of all actions; solid lines capture the utility of the receiver's best action. Figure~(b) depicts the sender's indirect utility (i.e., her utility as a function of the receiver's posterior $p_{\sigma}\parentheses*{\omega_1}$). Again, dashed lines capture the utilities of all actions, while solid lines capture the sender's utility under a receiver's best response action.}
    \label{fig:receiver-br}
\end{figure}

The Bayesian persuasion solution is the concavification of the sender's indirect utility (see~\cite{kamenica2011bayesian}) evaluated at the prior. In this case, it equals $1$. The 
equilibrium (with sender's commitment) that arises is as follows: The sender splits the prior $\mu=\frac{1}{2}$ into the two posteriors $p_\sigma=\frac{1}{4}$ and $p_\sigma=\frac{3}{4}$ with equal probability $\frac{1}{2}$. As is standard in Bayesian persuasion, we assume that the receiver breaks ties in the sender's favor. Consequently, the receiver plays action $a_2$ when his posterior is $\frac{1}{4}$, and he plays action $a_3$ when his posterior is $\frac{3}{4}$.

Note, however, that the above profile does \emph{not} constitute an equilibrium in the cheap talk model (without commitment). At state $\omega=0$, the sender will prefer sending the signal inducing the posterior $\frac{1}{4}$ with probability $1$ (instead of mixing between the two signals). Similar inconsistency arises at state $\omega=1$ in which the receiver prefers the posterior $\frac{3}{4}$.

Nevertheless, the above equilibrium with sender's commitment can be carefully modified to form a cheap talk equilibrium. Notice that for the posterior $\frac{1}{4}$, the receiver is indifferent between the actions $a_1$ and $a_2$, and for the posterior $\frac{3}{4}$ the receiver is indifferent between the actions $a_3$ and $a_4$. Can the receiver play mixed strategies under these posteriors causing the sender to be indifferent between the two signals in both states? In this example, the answer is affirmative. If at the posterior $\frac{1}{4}$ the receiver plays both actions $a_1$ and $a_2$ with equal probability $\frac{1}{2}$, and at the posterior $\frac{3}{4}$ he plays both $a_3$ and $a_4$ with equal probability $\frac{1}{2}$, then the sender is indifferent between inducing both posteriors in each state. Indeed, at state $\omega=0$, the posterior $\frac{1}{4}$ will yield her an expected utility of $\frac{1}{2}\cdot 2 + \frac{1}{2} \cdot \parentheses*{-1}=\frac{1}{2}$, and the posterior $\frac{3}{4}$ will yield her an expected utility of $\frac{1}{2}\cdot \parentheses*{-2}+\frac{1}{2}\cdot 3 = \frac{1}{2}$. Similarly, one may verify it for the state $\omega=1$. It turns out that the aforementioned equilibrium is sender-optimal, with an expected sender's utility of $1/2$.\footnote{To verify its optimality, one can apply an exhaustive search on all possible supports of the equilibrium, which by Proposition~\ref{pro:signals} proof can be assumed to be binary.}

\subsection{A Bound on the Number of Signals}
\label{sub:signals}
We now show that, as far as the sender's equilibrium utility is concerned,  any cheap talk equilibrium needs not to use more than $n$ signals. The proof uses Carathéodory's Theorem. A similar result holds in the Bayesian persuasion setting as well, but in cheap talk, it requires slightly more delicate~arguments.

\begin{proposition}
\label{pro:signals}
For any expected sender's utility value $v$ that can arise at some cheap talk equilibrium, there exists a cheap talk equilibrium $\parentheses*{\pi,s}$ with $\absolute*{\supp\parentheses*{\pi}}\leq n$ leading to expected sender's utility of $v$.
\end{proposition}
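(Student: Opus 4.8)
The plan is to start from an arbitrary cheap talk equilibrium $(\pi,s)$ achieving sender's value $v$ and to thin out its signal support using Carathéodory's Theorem, while keeping the receiver's response to each surviving signal untouched. Write $\supp(\pi)=\{\sigma_1,\dots,\sigma_K\}$, let $q_j:=\mathbb{P}(\sigma_j)>0$ be the unconditional signal probabilities, and let $p_{\sigma_j}\in\Delta(\Omega)$ be the induced posteriors. Two identities drive the argument: first, the posteriors decompose the prior, $\sum_{j} q_j\, p_{\sigma_j}=\mu$; second, for every state $\omega$ the sender's equilibrium payoff is $V_\omega:=\max_{\sigma\in\supp(\pi)}\mathbb{E}_{a\sim s[\sigma]}u_S(\omega,a)$, and the best-response condition $\supp(\pi[\omega])\subseteq \texttt{B}_S(\omega,s)$ forces every signal whose posterior charges $\omega$ (equivalently, every $\sigma_j$ with $\pi(\sigma_j\mid\omega)>0$) to attain this maximum, i.e.\ $\mathbb{E}_{a\sim s[\sigma_j]}u_S(\omega,a)=V_\omega$ whenever $p_{\sigma_j}(\omega)>0$. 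The target value is then $v=\sum_\omega\mu(\omega)V_\omega$ (assuming, as we may, that $\mu$ has full support on $\Omega$; states of zero prior probability can be discarded, only lowering the bound).

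Next I would invoke Carathéodory. The posteriors lie in the simplex $\Delta(\Omega)$, which is contained in the $(n-1)$-dimensional affine hyperplane $\{x\in\mathbb{R}^n:\sum_\omega x_\omega=1\}$; since $\mu$ lies in the convex hull of $\{p_{\sigma_1},\dots,p_{\sigma_K}\}$, the affine form of Carathéodory's Theorem yields an index subset $J$ with $|J|\le (n-1)+1=n$ and nonnegative weights $(q'_j)_{j\in J}$ summing to $1$ such that $\sum_{j\in J}q'_j\,p_{\sigma_j}=\mu$. Define the reduced signaling policy $\pi'$ supported on $\{\sigma_j:j\in J\}$ by $\pi'(\sigma_j\mid\omega):=q'_j\,p_{\sigma_j}(\omega)/\mu(\omega)$, and let $s$ act on these signals exactly as before (with off-path beliefs handled as in Remark~\ref{rem:deviate-in-support}). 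A direct check shows $\sum_{j\in J}\pi'(\sigma_j\mid\omega)=1$ for each $\omega$, so $\pi'$ is a valid policy, and that the posterior it induces at each surviving $\sigma_j$ is again $p_{\sigma_j}$ (its unconditional probability becomes exactly $q'_j$).

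It remains to verify that $(\pi',s)$ is still an equilibrium with the same value $v$, which is the delicate step the Bayesian-persuasion analogue avoids. Receiver optimality is immediate: every surviving signal keeps its posterior $p_{\sigma_j}$, so $s[\sigma_j]$ remains a best response. The crux is sender optimality, because shrinking the support removes some of the signals the sender could deviate to. Here the two identities above combine favorably. On one hand, deleting signals can only shrink the deviation set, so $\max_{j\in J}\mathbb{E}_{a\sim s[\sigma_j]}u_S(\omega,a)\le V_\omega$ for every $\omega$. On the other hand, because $\sum_{j\in J}q'_j p_{\sigma_j}=\mu$ and $\mu(\omega)>0$, some surviving $\sigma_j$ has $p_{\sigma_j}(\omega)>0$, and that signal attains $\mathbb{E}_{a\sim s[\sigma_j]}u_S(\omega,a)=V_\omega$ by the original best-response condition; hence the maximum over $J$ equals $V_\omega$ and every surviving signal used in state $\omega$ is sender-optimal within the reduced support. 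Thus $\supp(\pi'[\omega])\subseteq\texttt{B}_S(\omega,s)$, the per-state values are unchanged, and the sender's value is still $\sum_\omega\mu(\omega)V_\omega=v$. I expect this incentive-preservation argument---ensuring every state stays ``covered'' by a retained signal that was already optimal there---to be the main obstacle, whereas the Carathéodory counting that delivers the bound $n$ is routine.
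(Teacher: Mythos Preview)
Your proof is correct and follows essentially the same approach as the paper: apply Carath\'eodory's Theorem to the decomposition $\mu=\sum_j q_j p_{\sigma_j}$ in the $(n-1)$-dimensional simplex, rebuild the signaling policy from the $\le n$ surviving posteriors (your explicit formula is exactly the content of the Aumann--Maschler splitting lemma the paper invokes), and verify that both best-response conditions and the sender's per-state payoffs $V_\omega$ are preserved because every retained signal used at $\omega$ already attained $V_\omega$ in the original equilibrium. Your treatment of the sender-optimality step is in fact more carefully spelled out than the paper's; the only cosmetic gap is that you write $\supp(\pi)=\{\sigma_1,\dots,\sigma_K\}$ as if finite, whereas the paper allows countable support---but the Carath\'eodory argument goes through unchanged since the simplex is compact.
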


\begin{proof}
Given an equilibrium $\parentheses*{\pi,s}$ with an arbitrary support size $\absolute*{\supp\parentheses*{\pi}}> n$ (or an infinite support $\absolute*{\supp\parentheses*{\pi}}= \infty$), we construct another equilibrium $\parentheses*{\pi',s}$ with $\absolute*{\supp\parentheses*{\pi'}}\leq n$ and with the same expected sender's utility. The receiver's strategy $s$ will remain unchanged. 

Note that the posterior beliefs $p_\sigma$, where $\sigma \sim \pi$, must form a split of the prior (this follows from the law of total probability). That is, we have $\mathbb{E}_{\sigma \sim \pi}\brackets*{p_\sigma}=\mu$. By Carathéodory's theorem, there exists another split $\nu$ of the prior with support size at most $n$ that uses only posteriors from $\braces*{p_\sigma}_{\sigma \in \supp \parentheses*{\pi}}$. We have $\absolute*{\supp\parentheses*{\nu}}\leq n$ and $\mathbb{E}_{p \sim \nu}\brackets*{p}=\mu$. The well-known splitting Lemma of Aumann and Maschler~\cite{aumann1995repeated} implies that there exists a signaling policy $\pi'$ whose induced distribution over posteriors is $\nu$, and the signaling policy $\pi'$ uses at most $n$ signals.

To see that $\parentheses*{\pi',s}$ is a cheap talk equilibrium, note that:

    \begin{itemize}
        \item Conditional on any state $\omega$, sender's best response signals are $\braces*{p_\sigma:\sigma \in \supp\parentheses*{\pi}, p_{\sigma}\parentheses*{\omega}>0}$. Under the strategy $\pi'$, the sender transmits a subset of these signals, as needed.
        \item In every posterior $p_\sigma$ for $\sigma \in \supp\parentheses*{\pi'}$, the receiver best responds. This is because both the posterior and the receiver's strategy remain unchanged compared to the equilibrium $\parentheses*{\pi,s}$. 
    \end{itemize}

    To see that the sender's expected utility remains unchanged, we observe that the sender is \emph{indifferent} between all the posteriors $\braces*{p_\sigma:\sigma \in \supp\parentheses*{\pi}, p_{\sigma}\parentheses*{\omega}>0}$ conditional on the state $\omega$. She uses a subset of the signals inducing these posteriors at the new equilibrium $\parentheses*{\pi',s}$; as all the posteriors induced by these signals, as well as the receiver's responses, are unchanged -- the sender's expected utility remains~unchanged.   
\end{proof}

By Proposition~\ref{pro:signals}, there always exists a sender-optimal equilibrium in which the sender uses at most $n$ signals. We immediately obtain a $\poly\parentheses*{m,n}$-sized witness for the sender's ability to ensure a certain amount of expected utility at equilibrium. Therefore, we have the following~corollary.

\begin{corollary}[NP Membership of Cheap Talk]
    \label{cor:NP}
    Given a cheap talk instance $\langle A,\Omega,\mu,u_S,u_R \rangle$ and a value $v$, the problem of deciding whether there exists an equilibrium yielding an expected sender's utility of at least $v$ belongs to NP.
\end{corollary}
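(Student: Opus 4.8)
The plan is to give a polynomially-sized certificate together with a polynomial-time verifier. By Proposition~\ref{pro:signals}, whenever some equilibrium attains expected sender's utility at least $v$, one does so using at most $n$ signals; I would therefore take the witness to be such an equilibrium $\parentheses*{\pi,s}$, encoded as the at most $n\times n$ matrix of conditional signal probabilities $\parentheses*{\pi\parentheses*{\sigma\mid\omega}}$ and the at most $n\times m$ matrix of conditional action probabilities $\parentheses*{s\parentheses*{a\mid\sigma}}$. Given such a profile with rational entries, the verifier recomputes each posterior $p_\sigma$ via Equation~\eqref{eq:posterior-update}, confirms that every action in $\supp\parentheses*{s\brackets*{\sigma}}$ lies in $\texttt{B}_R\parentheses*{\sigma,\pi}$ and every signal in $\supp\parentheses*{\pi\brackets*{\omega}}$ lies in $\texttt{B}_S\parentheses*{\omega,s}$, and finally checks $\sum_{\omega}\mu\parentheses*{\omega}\sum_{\sigma}\pi\parentheses*{\sigma\mid\omega}\sum_{a}s\parentheses*{a\mid\sigma}u_S\parentheses*{\omega,a}\geq v$. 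Each step is a polynomial number of arithmetic comparisons, so verification runs in polynomial time.

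The subtle point is that such a witness can be taken to have polynomially bounded bit-complexity, since the strategies are a priori real-valued. I would handle this by fixing the combinatorial \emph{support structure} of the equilibrium guaranteed by Proposition~\ref{pro:signals} --- namely the set of used signals, the signals that best respond in each state, and the actions that best respond to each signal --- and showing that the profiles consistent with this structure and the equilibrium conditions form a rational polyhedron. The receiver's optimality and indifference conditions become \emph{linear in $\pi$} once one clears the strictly positive Bayesian denominator in Equation~\eqref{eq:posterior-update}; the sender's conditions are already linear in $s$; and the simplex and support constraints are linear. Crucially, on this set the sender's indifference among all the signals she uses in state $\omega$ collapses her expected utility to $\sum_{\omega}\mu\parentheses*{\omega}\sum_{a}s\parentheses*{a\mid\sigma_\omega}u_S\parentheses*{\omega,a}$ for any fixed representative $\sigma_\omega\in\supp\parentheses*{\pi\brackets*{\omega}}$, so the requirement ``value at least $v$'' is \emph{also} linear (indeed, in $s$ alone). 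The $\pi$-constraints and $s$-constraints therefore decouple into two independent linear feasibility systems.

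Since this polyhedron is nonempty --- it contains the equilibrium from Proposition~\ref{pro:signals} --- and its defining inequalities have coefficients assembled from the input utilities and prior, standard bit-size bounds for rational polyhedra furnish a feasible point of polynomial bit-complexity, which serves as the witness; the strict support inequalities $\pi\parentheses*{\sigma\mid\omega}>0$ and $s\parentheses*{a\mid\sigma}>0$ are accommodated by imposing a positivity margin whose magnitude is polynomially lower-bounded by the granularity of the system. I expect this last step to be the main obstacle: one must check that clearing denominators genuinely linearizes the receiver's conditions, that the objective collapses as claimed so that the value constraint remains linear, and that enforcing the strict support constraints does not spoil the polynomial bit-size guarantee.
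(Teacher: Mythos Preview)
Your approach matches the paper's, which dispatches the corollary in a single sentence: Proposition~\ref{pro:signals} bounds the number of signals by $n$, so the equilibrium itself is a $\poly\parentheses*{m,n}$-sized witness. The paper does not address the bit-complexity issue you flag at all.

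Your treatment of that issue is correct and worth noting. The key observation---that once the support structure is fixed, the sender's indifference among the signals in $\supp\parentheses*{\pi\brackets*{\omega}}$ collapses the a~priori bilinear value $\sum_{\omega}\mu\parentheses*{\omega}\sum_{\sigma}\pi\parentheses*{\sigma\mid\omega}\sum_{a}s\parentheses*{a\mid\sigma}u_S\parentheses*{\omega,a}$ to the linear-in-$s$ expression $\sum_{\omega}\mu\parentheses*{\omega}\sum_{a}s\parentheses*{a\mid\sigma_\omega}u_S\parentheses*{\omega,a}$---is exactly what makes the two systems decouple, and it is correct. The receiver's constraints genuinely linearize in $\pi$ after clearing the denominator, and the sender's constraints and the value constraint live purely in $s$.

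On the strict-inequality worry, there is a cleaner route than the positivity-margin argument you sketch. Relax the support constraints to weak inequalities, take rational vertices $\pi^*$ and $s^*$ of the two (closed, rational) polyhedra, and observe that if some entries vanish you simply obtain an equilibrium with \emph{smaller} supports: the receiver constraints you imposed for each $\sigma$ remain valid for $\pi^*$ (vacuous if $\sigma$ drops out of $\supp\parentheses*{\pi^*}$, and otherwise recoverable by dividing back by the positive denominator), the sender constraints for $s^*$ a~fortiori cover the possibly smaller set $\supp\parentheses*{\pi^*}$, and the value is preserved because the very indifference constraints in the $s$-system force every $\sigma\in\supp\parentheses*{\pi\brackets*{\omega}}$---hence every $\sigma\in\supp\parentheses*{\pi^*\brackets*{\omega}}$---to achieve the same per-state utility as your representative $\sigma_\omega$. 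This sidesteps the margin bookkeeping entirely.
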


\section{The Hardness of Cheap Talk}
\label{sec:hard}
Following the approximation algorithm convention, when an equilibrium yields an expected sender's utility within an additive constant $c>0$ of the best possible, we say it is \emph{$c$-sender-optimal} (in the additive sense). Our main hardness result is as follows.

\begin{theorem}\label{thm:additive}
There exists an absolute constant $c > 0$ such that it is NP-hard to compute a $c$-sender-optimal equilibrium in the additive sense for normal form cheap talk games with normalized utilities in $\brackets*{0,1}$.\footnote{In our proof, the constant is  $c = \frac{1}{113792}$. We did not try to optimize this constant, though this gap can be increased by using the inapproximability of other $d$-regular $k$SAT instances (larger $d$ and smaller $k$ are preferred). However, even though this may lead to a better (larger) constant, it is a highly non-trivial open question to identify the tight~constant.} 
\end{theorem}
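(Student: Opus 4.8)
The plan is to reduce from a suitably inapproximable, bounded-occurrence variant of $3$SAT, exploiting the fact emphasized in the paper that a sender's signaling policy is a \emph{decomposition} of the prior $\mu$ into posteriors rather than a \emph{selection} of actions. Concretely, given a CNF formula $\phi$ over variables $x_1,\dots,x_N$ and clauses $C_1,\dots,C_M$ in which every variable occurs a bounded (ideally regular) number of times, I would build a cheap talk game whose states encode the variables together with a few auxiliary ``control'' states, whose candidate on-path posteriors encode the clauses, and whose receiver actions encode the receiver's \emph{acceptance} of a particular literal as the one satisfying a clause. Because $\mathbb{E}_{\sigma\sim\pi}[p_\sigma]=\mu$, the prior mass placed on the state of a variable $x_i$ must be split among exactly the posteriors corresponding to the clauses containing $x_i$; the regularity of the instance is what makes a clean, balanced such split feasible and is what lets the inapproximability gap survive the reduction.

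Next I would design the utility gadget so that the two cheap talk constraints pull against each other in the intended way. Following the mechanism illustrated in the paper's binary example, I would place the clause-posteriors exactly at points where the receiver is indifferent among several actions, so that the receiver's best-response set contains a tied menu and the mixing weights over it become free tuning parameters. The receiver's best-response condition then constrains which literal-acceptance actions are admissible at each clause-posterior (encoding ``this clause is satisfied by one of its literals''), while the sender's indifference condition -- that conditional on the state of $x_i$ all signals carrying $x_i$ must yield the same, maximal sender utility -- forces the receiver's free weights to be consistent with a \emph{single} truth value for $x_i$ across all its clauses. The crucial point is that the sender cannot invent new posteriors (Remark~\ref{rem:deviate-in-support}) and can only put mass on locally optimal signals, so an inconsistent assignment, using $x_i$ positively in one clause and negatively in another, cannot be patched by an off-path deviation and instead strictly lowers the sender's utility.

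For completeness, from a satisfying assignment I would read off a decomposition of $\mu$ supported on the clause-posteriors (each satisfied clause routed through a true literal), together with a receiver mixed strategy whose weights equalize the sender across states; by Proposition~\ref{pro:signals} it suffices to use at most $n$ signals, and a direct check verifies both best-response conditions and yields sender utility at least some target $v^*$. For soundness I would prove the contrapositive: any equilibrium attaining sender utility close to $v^*$ can be \emph{decoded} into a truth assignment satisfying almost all clauses, so if $\phi$ is far from satisfiable then every equilibrium loses a constant amount of utility. Tracking the gap of the source instance through the constant-blowup reduction, and normalizing all utilities into $[0,1]$, then yields the absolute additive constant $c>0$; the same construction, measured against the babbling value as a baseline, also gives the multiplicative version mentioned in the introduction.

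I expect the soundness direction to be the main obstacle, precisely because of the continuous nature of cheap talk: an equilibrium is a fixed point in which the sender's decomposition and the receiver's mixing are both real-valued and may be ``fractional'' in ways that do not obviously correspond to any Boolean assignment. The heart of the argument -- and where a technique beyond the standard clique or monotone-$1$-in-$3$SAT ``selection'' reductions is needed -- is to show that the interaction of the two cheap talk constraints rigidly pins the usable posteriors down to the intended discrete clause-set and forces the receiver's free weights to behave almost integrally, so that a near-optimal equilibrium robustly decodes to a near-satisfying assignment with only a controlled, constant loss.
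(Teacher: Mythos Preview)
Your plan shares the right top-level shape---reduce from a bounded-occurrence SAT variant and track an inapproximability gap through a decomposition-style encoding---but the concrete mechanism diverges from the paper's and leaves the decisive step unsubstantiated. The paper does \emph{not} make states correspond to variables, posteriors to clauses, and the receiver's mixing to the assignment. Instead, states are variable--clause occurrence pairs $x_{i,j},\neg x_{i,j}$ together with dedicated clause states $c_j$; the intended posteriors are three kinds of \emph{pools} (variable, clause, singleton); and a geometric receiver-utility gadget (Proposition~\ref{pro:rec-ut}) makes, for each pool $p$, an action $a_p$ a best response \emph{only} at $p$, while every non-pool posterior forces a ``bad'' action worth $-7$ to the sender. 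The sender earns $+1$ inside variable pools, $0$ inside clause/singleton pools, and $-7$ otherwise. The tension is that a clause state $c_j$ can avoid $-7$ only by entering a clause pool, which consumes one of $x_{i,j},\neg x_{i,j}$, competing with the sender's desire to place those states into variable pools. Correspondingly, the source problem is not Max-$3$SAT but a newly introduced \texttt{Max-Var-3SAT}: maximize the number of variables in a partial assignment that creates no contradictory clause. The optimal number of attractive variable pools \emph{is} this quantity, and Lemmas~\ref{lem:vcont}--\ref{lem:ccont} make that correspondence exact.

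The genuine gap is exactly where you flag it: soundness. Your proposed engine---receiver indifference at clause-posteriors with free mixing weights, then sender indifference across a variable's clauses ``forcing'' a single truth value---does not rule out fractional behavior. Sender indifference only equalizes the sender's utility across signals she actually sends; it does not select \emph{which} literal the receiver accepts, and nothing in your gadget penalizes a receiver who mixes so as to equalize the sender at a mediocre level without committing to any Boolean value for $x_i$. The paper sidesteps this entirely: in the constructed equilibrium the receiver plays $a_p$ deterministically; consistency is enforced not by indifference but by mass-scarcity on the occurrence states (an attractive pool absorbs all of $\{x_{i,j}\}_j$ with probability $1$, so $c_j$ is starved whenever both $P_v(i)$ and $P_v(\neg i)$ are attractive or whenever the induced partial assignment contradicts a clause); and the $-7$ penalty together with $d\le 6$ is what makes the counting in Lemmas~\ref{lem:vcont}--\ref{lem:ccont} strict. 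Without analogs of the clause states $c_j$, the bad-action penalty, and the pool-isolating gadget of Proposition~\ref{pro:rec-ut}, your construction has no lever forcing near-optimal equilibria onto a discrete, decodable structure, and the soundness claim remains a hope rather than an argument.
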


Theorem~\ref{thm:additive} immediately implies the same hardness result for multiplicative approximation. If an equilibrium yields expected sender's utility within a multiplicative factor $0<c<1$ of the best possible, we say it is \emph{$c$-sender-optimal in the multiplicative sense.} Since the utilities in Theorem~\ref{thm:additive} are normalized in $\brackets*{0,1}$, we immediately deduce the following corollary.

\begin{corollary}
    \label{rem:multi}
    There exists an absolute constant $0 < c <1$ such that it is NP-hard to compute a $c$-sender-optimal equilibrium in the multiplicative sense for normal form cheap talk games with non-negative~utilities.
\end{corollary}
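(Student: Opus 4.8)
The plan is to derive the multiplicative hardness directly from the additive hardness of Theorem~\ref{thm:additive}, exploiting the fact that the hard instances there have utilities normalized into $\brackets*{0,1}$. Concretely, let $c = \frac{1}{113792}$ be the additive constant guaranteed by Theorem~\ref{thm:additive}, and set the multiplicative constant to $c' := 1 - c \in (0,1)$. Since the instances produced by Theorem~\ref{thm:additive} have all utilities in $\brackets*{0,1}$, the sender's equilibrium utility is a convex combination of values in $\brackets*{0,1}$, so the optimal sender utility satisfies $\mathrm{OPT} \leq 1$. These instances also have non-negative utilities, so they are legitimate instances for the class appearing in the corollary.

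First I would show that, on such instances, any $c'$-sender-optimal equilibrium in the multiplicative sense is automatically a $c$-sender-optimal equilibrium in the additive sense. Indeed, if an equilibrium yields value $V \geq c' \cdot \mathrm{OPT}$, then
\[
\mathrm{OPT} - V \;\leq\; (1 - c')\,\mathrm{OPT} \;=\; c \cdot \mathrm{OPT} \;\leq\; c,
\]
where the final inequality uses $\mathrm{OPT} \leq 1$. Hence $V \geq \mathrm{OPT} - c$, which is exactly the additive guarantee.

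Consequently, any polynomial-time algorithm computing a $c'$-sender-optimal equilibrium in the multiplicative sense would, when run on the Theorem~\ref{thm:additive} instances, also produce a $c$-sender-optimal equilibrium in the additive sense, contradicting the NP-hardness established there (unless $\mathrm{P}=\mathrm{NP}$). Because those instances have non-negative utilities, this immediately yields the claimed multiplicative hardness for normal form cheap talk games with non-negative utilities, with the corollary's constant taken to be $c' = 1 - c$.

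The argument is essentially immediate, so there is no substantive obstacle; the only point that warrants care is that the transfer from the multiplicative to the additive guarantee must hold uniformly, including when $\mathrm{OPT}$ is small. This is precisely what the chain $\mathrm{OPT} - V \leq (1-c')\,\mathrm{OPT} \leq 1 - c' = c$ provides, as it never relies on $\mathrm{OPT}$ being bounded away from $0$. I would also note that a sharper (smaller) multiplicative constant could be extracted if the reduction's hard instances were known to have $\mathrm{OPT}$ bounded below by some $\alpha > 0$, in which case any $c'' > 1 - c/\alpha$ would suffice; but for the existential statement of the corollary, the clean choice $c' = 1 - c$ is enough.
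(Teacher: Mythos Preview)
Your argument is correct and is exactly the one-line deduction the paper has in mind: since the hard instances from Theorem~\ref{thm:additive} already have utilities in $\brackets*{0,1}$ (hence non-negative and with $\mathrm{OPT}\le 1$), any multiplicative $(1-c)$-approximation is automatically an additive $c$-approximation, contradicting Theorem~\ref{thm:additive}. The paper states this without spelling out the inequality chain; your write-up simply makes the implicit step explicit.
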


A corollary of the proof of Theorem~\ref{thm:additive} addresses the question: Can the sender do better than revealing no information (which results in a babbling equilibrium)?

\begin{proposition}\label{pro:babbling}
    It is NP-complete to decide whether there exists a cheap talk equilibrium yielding higher expected sender's utility than a babbling equilibrium.
\end{proposition}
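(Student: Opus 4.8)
The plan is to prove the two directions -- membership and hardness -- separately, reusing machinery already in place.

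\textbf{Membership in NP.} First I would argue that the quantity to beat is efficiently computable. At a babbling equilibrium the receiver observes no information and best responds to the prior $\mu$; hence the actions he may play are exactly the receiver's best-response set to the prior, $\texttt{B}_R(\mu)$, and the largest sender utility consistent with a babbling equilibrium is
\[
v_0=\max_{a\in \texttt{B}_R(\mu)}\ \mathbb{E}_{\omega\sim\mu}\brackets*{u_S(\omega,a)},
\]
computed in polynomial time by evaluating $m$ inner products and $m$ best-response comparisons. Then I would invoke Corollary~\ref{cor:NP}: by Proposition~\ref{pro:signals} every attainable equilibrium value is realized by an equilibrium supported on at most $n$ signals, whose strategies satisfy a system of linear best-response and indifference constraints over the rational data of the instance and therefore admit a rational solution of polynomial bit-length. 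A nondeterministic verifier thus guesses such an equilibrium $(\pi,s)$, checks the (linear) best-response conditions of the cheap talk equilibrium definition, and checks that the induced sender utility strictly exceeds $v_0$; all comparisons are between polynomially-bounded rationals. Hence the problem lies in NP.

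\textbf{Hardness.} For NP-hardness I would reuse the reduction constructed for Theorem~\ref{thm:additive}, whose proof maps an instance $\varphi$ of the $3$SAT variant to a cheap talk game so that a constant utility gap $c$ separates the satisfiable from the unsatisfiable case: if $\varphi$ is satisfiable some equilibrium attains sender utility at least some value $v_{\mathrm{yes}}$, while if $\varphi$ is unsatisfiable every equilibrium attains at most $v_{\mathrm{yes}}-c$. The additional observation needed is to pin down where the babbling value $v_0$ of the constructed instance lies relative to this gap. I would show -- adding, if necessary, a fixed gadget (e.g.\ a constant shift of the sender's payoffs or an auxiliary action that fixes the receiver's prior best response) that does not affect the above separation -- that the babbling value equals exactly the sender's best attainable utility in the unsatisfiable case, i.e.\ $v_0=v_{\mathrm{yes}}-c$, and moreover that \emph{every} equilibrium of an unsatisfiable instance is already matched in utility by babbling. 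With this alignment, $\varphi$ is satisfiable if and only if the constructed instance admits an equilibrium with sender utility strictly above $v_0$, which is precisely a ``yes'' instance of our decision problem. Combined with membership, this yields NP-completeness.

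\textbf{Main obstacle.} The delicate step is the last one: guaranteeing the \emph{two-sided} tightness of the babbling value. On the ``no'' side I must ensure that babbling is genuinely sender-optimal when $\varphi$ is unsatisfiable -- that the constant-gap upper bound from the proof of Theorem~\ref{thm:additive} is attained exactly by the babbling outcome and not merely bounded by it -- so that no non-babbling equilibrium can sneak above $v_0$. On the ``yes'' side I must ensure the satisfiability-encoding equilibrium strictly exceeds $v_0$ rather than merely weakly. Both hinge on controlling the receiver's best response to the prior $\mu$ in the constructed instance and on verifying that the gadget used to set $v_0$ opens no new profitable deviations for the sender; this is the part of the argument that requires reopening the construction of Theorem~\ref{thm:additive} rather than using it as a black box.
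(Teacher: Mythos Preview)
Your plan coincides with the paper's approach: the paper likewise adds to the pool set the prior $\mu$ itself and introduces a single auxiliary action $a_0$ that is a receiver best response only at posterior $\mu$, with state-independent sender payoff $u_S(\cdot,a_0)=\alpha$ equal to the ``no''-case threshold; this pins the babbling value at exactly $\alpha$, and the ``yes''-case equilibrium (which never induces the posterior $\mu$) survives unchanged with value at least $\beta>\alpha$.

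The obstacle you flag is real, and your proposal stops short of resolving it. A constant shift of $u_S$ cannot help (it moves the babbling value and every other equilibrium value by the same amount), so the auxiliary-action route is the only live option among those you list. The missing lemma, which the paper supplies, is this: in the modified instance, any equilibrium that induces the posterior $\mu$ via some signal $\sigma$ must send $\sigma$ with positive probability in \emph{every} state (since $p_\sigma=\mu$ forces $\pi(\sigma\mid\omega)\propto 1$ across $\omega$); hence, by the sender's best-response/indifference condition, her utility in \emph{every} state is at most the utility she obtains from $\sigma$, which is at most $\alpha$. Equilibria that do not induce $\mu$ never make $a_0$ a best response, so the original ``no''-case bound applies to them verbatim. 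Together these give the two-sided tightness you were worried about. Once you add this argument, your proof is complete and matches the paper's.
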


Proposition~\ref{pro:babbling}, though straightforward, uses some concrete characteristics of the reduction rather than using Theorem~\ref{thm:additive} as a black box. Thus, we relegate the proof to Appendix~\ref{append:prop:babbling}.

The proof of Theorem~\ref{thm:additive} is non-standard compared to existing hardness results on the complexity of equilibria -- see Subsection~\ref{sub:related}. The remainder of this section is devoted to the formal proof. We start with an overview of our proof structure and then dive into the details of each proof~step.

\begin{proof}[Theorem~\ref{thm:additive} proof overview]
At a high level, our proof has two major steps: (1)~identifying a suitable NP-hard problem instance to reduce from; (2)~establishing the reduction. Regarding step~(1), it appears challenging to establish an approximation-preserving reduction from classic NP-hard problems to cheap talk games. We thus introduced a new variant of the \texttt{Max-3SAT} problem, termed \texttt{Max-Var-3SAT}, defined~below. 

Recall that the classic \texttt{Max-3SAT} problem aims to find a Boolean variable assignment to maximize the number of satisfied clauses in a given 3CNF Boolean formula.\footnote{A Boolean formula is 3CNF (i.e., $3$-conjunctive normal form) if it is given by $C_1\wedge\dots \wedge C_m$, where each \emph{clause} $C_j$ is a disjunction ($\vee$) of exactly $3$ \emph{literals} (a literal is a variable $x_i$ or its negation $\neg x_i$), and no variable appears twice in a clause.} The problem instance we shall use is a variable-maximizing variant of this problem, hence the name \texttt{Max-Var-3SAT}, which maximizes the number of assigned Boolean variables without creating any contradictory~clauses.

Formally, let $\phi$ be a 3CNF formula with $n$ variables $ x_1,\ldots,x_n $  and $m$ clauses. A \emph{partial assignment} $x_S$ to $k$ variables is specified by a subset of variables $S\subseteq \brackets*{n} = \braces*{ 1, 2, \cdots, n}$ of size $k$, as well as an assignment $x_i\in \braces*{\texttt{True},\texttt{False} }$ for every $ i \in S$.  We say that a clause is \emph{contradictory} (to satisfiability) under partial assignment $x_S$ if $x_S$ has assigned values to all the 3 variables associated with this clause and their values evaluate the clause to \texttt{False}. The partial assignment $x_S$ is said to be  \emph{non-contradictory}  if it does not create any contradictory clauses. We are now ready to define the  \texttt{Max-Var-3SAT} problem. 
 \begin{definition} The \texttt{Max-Var-3SAT} problem  takes any 3CNF formula $\phi$ as input and outputs the largest integer $k$ such that there exists a non-contradictory partial assignment $x_S$ of size $k$ (i.e., $|S| = k$). 
 \end{definition}

 \texttt{Max-Var-3SAT} is clearly NP-hard since its decision variant of determining whether there exists a non-contradictory partial assignment of size $n$ or not is precisely the   \texttt{3SAT} problem. However, for our reduction, we shall need a stronger inapproximability result for a restricted version of \texttt{Max-Var-3SAT}. We say that a \texttt{Max-Var-3SAT} instance is \emph{$d$-regular} if, in the input 3CNF formula, every variable appears in exactly $d$ clauses. In Appendix \ref{append:prop:sat-hard},  we present a proof for the inapproximability of the following problem for $4$-regular \texttt{Max-Var-3SAT}.

\begin{proposition}\label{pro:regular}
Suppose that a $4$-regular \texttt{Max-Var-3SAT} instance is promised to satisfy one of the following two conditions:
\begin{enumerate}
    \item \texttt{Max-Var-3SAT}$\parentheses*{\phi}\geq \frac{3047.6}{3048} n $.
    \item \texttt{Max-Var-3SAT}$\parentheses*{\phi}< \frac{3047.1}{3048} n$.
\end{enumerate}
It is NP-complete to decide whether the instance satisfies the first condition or not.
\end{proposition}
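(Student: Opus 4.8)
The plan is to establish this inapproximability result by a gap-preserving reduction from the classic PCP-based hardness of approximating \texttt{Max-3SAT} on bounded-occurrence instances. The key conceptual observation is that a \emph{non-contradictory} partial assignment of size $k$ is essentially a partial truth assignment that leaves $n-k$ variables unassigned while no fully-assigned clause evaluates to \texttt{False}; I want to relate the maximum such $k$ to the maximum number of simultaneously satisfiable clauses of $\phi$. First I would recall a suitable starting point: it is NP-hard to distinguish satisfiable bounded-occurrence \texttt{3SAT} instances from those where at most a $(1-\epsilon_0)$-fraction of clauses can be satisfied, for some absolute constant $\epsilon_0>0$ (this follows from H\aa stad-style results combined with expander-based degree reduction that makes every variable occur a bounded number of times). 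Since I need a $4$-regular instance specifically — every variable in exactly $d=4$ clauses — I would invoke the standard gadget that regularizes variable occurrences (padding with tautological or balanced clauses, or using equality gadgets via an expander) while preserving the completeness/soundness gap up to a change in the absolute constants.

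The heart of the argument is the translation between the clause-satisfaction gap and the variable-assignment gap. In the \textbf{completeness} case, if $\phi$ is satisfiable, then the satisfying assignment is a non-contradictory partial assignment of size $n$, so $\texttt{Max-Var-3SAT}(\phi)=n$; more generally I would show that if a $(1-\epsilon)$-fraction of clauses is satisfiable, then by deleting the (few) variables touching the violated clauses one obtains a large non-contradictory partial assignment. Concretely, take an assignment satisfying all but $\epsilon m$ clauses; for each violated clause unassign one of its three variables. Because the instance is $4$-regular we have $m = \tfrac{4n}{3}$, so unassigning one variable per violated clause removes at most $\epsilon m = \tfrac{4\epsilon}{3}n$ variables, yielding a non-contradictory partial assignment of size at least $\left(1-\tfrac{4\epsilon}{3}\right)n$. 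This is the direction that produces the first promise threshold $\tfrac{3047.6}{3048}n$. For the \textbf{soundness} case I argue the contrapositive: any non-contradictory partial assignment of size $k$ can be extended to a full assignment (assign the remaining $n-k$ variables arbitrarily) whose only possibly-violated clauses are those containing at least one of the $n-k$ unassigned variables; by $4$-regularity there are at most $4(n-k)$ such clauses, so the full assignment satisfies at least $m-4(n-k)$ clauses. Hence a large $k$ forces $\phi$ to be nearly satisfiable, and the clause-satisfaction soundness gap propagates to the variable gap, producing the second threshold $\tfrac{3047.1}{3048}n$.

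I would then turn the two implications into the stated promise-problem form: choosing the underlying \texttt{Max-3SAT} soundness constant $\epsilon_0$ and tracking the constant $4$ from regularity, the completeness bound becomes $\texttt{Max-Var-3SAT}(\phi)\ge \left(1-\tfrac{4\epsilon}{3}\right)n$ and the soundness bound becomes $\texttt{Max-Var-3SAT}(\phi)< \left(1-c\,\epsilon_0\right)n$ for explicit constants, and I would verify these numerically match the target fractions $\tfrac{3047.6}{3048}$ and $\tfrac{3047.1}{3048}$ by selecting the appropriate bounded-occurrence \texttt{Max-3SAT} hardness parameters (possibly amplifying the gap by repetition/parallel composition to hit the exact numbers). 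NP-membership of the decision problem is immediate since a partial assignment of the claimed size is a polynomial-size certificate checkable in polynomial time, giving NP-completeness.

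The main obstacle I anticipate is twofold and lies entirely in the constant-chasing rather than the conceptual structure. First, I must start from a version of \texttt{Max-3SAT} hardness that is simultaneously (i) bounded-occurrence (so that regularization to exactly $4$ occurrences costs only controlled constants) and (ii) has an explicit, large-enough satisfiability gap; reconciling these — especially performing the degree-regularization via expander gadgets without diluting the gap below what the two thresholds demand — is the delicate part. Second, the razor-thin separation between $\tfrac{3047.6}{3048}$ and $\tfrac{3047.1}{3048}$ (a gap of only $\tfrac{0.5}{3048}n$) means the completeness and soundness constants must be tracked with care through every gadget; I expect this to require either starting from a quantitatively strong PCP/\texttt{Max-3SAT} inapproximability constant or boosting the gap by a product construction, and then arguing that the $4$-regularity gadget shifts both the completeness and soundness values by amounts small enough to preserve the stated separation.
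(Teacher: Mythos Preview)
Your approach is essentially the paper's: the completeness direction (unassign one variable per violated clause, losing at most $\tfrac{4\epsilon}{3}n$ variables) and the soundness direction (extend a non-contradictory partial assignment arbitrarily and bound the number of possibly-violated clauses by $4(n-k)$) are exactly the two translations the paper uses. The one divergence is the starting point, and it dissolves the obstacle you flagged: rather than building $4$-regular hardness from H\aa stad plus expander degree-reduction and then chasing constants, the paper invokes a result of Berman, Karpinski and Scott which \emph{already} supplies $4$-regular \texttt{Max-3SAT} instances with the explicit gap ``at least $(1016-\epsilon)M$ vs.\ at most $(1015+\epsilon)M$ satisfied out of $1016M$ clauses'', so that taking $\epsilon=0.1$ and applying your two translations gives the thresholds $\tfrac{3048-0.4}{3048}=\tfrac{3047.6}{3048}$ and $\tfrac{3047+0.1}{3048}=\tfrac{3047.1}{3048}$ on the nose, with no regularization gadgets or gap amplification needed.
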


The second, and also the most involved step, is to establish an approximation-preserving reduction from the above problem to cheap talk equilibrium computation. Our key conclusion is summarized in the following proposition.

\begin{proposition}\label{pro:reduction}
Suppose $d\leq 6$ and let $\phi$  be any $d$-regular 3CNF formula  with $n$ variables and $m$ clauses. Then there exists a $\poly\parentheses*{m,n}$-time algorithm that takes $\phi$ as input and outputs a cheap talk instance $C$ with $7m$ states, $O\parentheses*{mn}$ receiver's actions and state-dependent sender's utilities in $\brackets*{-7,1}$, such that the best expected utility that the sender can achieve in $C$ is $\frac{kd}{7m}$, where  $k = \texttt{Max-Var-3SAT}\parentheses*{\phi}$ is the solution to the $d$-regular \texttt{Max-Var-3SAT} problem.   
\end{proposition}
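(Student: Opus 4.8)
The plan is to give an explicit gadget in which the $7$ states attached to each clause $C_j$ are exactly the $7$ truth assignments of the three variables of $C_j$ that \emph{satisfy} $C_j$, with the unique falsifying assignment deliberately omitted, and the prior $\mu$ uniform over all $7m$ states. The receiver's $O(mn)$ actions come in two kinds: for every variable $x_i$ and value $b\in\{\texttt{True},\texttt{False}\}$ a ``report'' action $a_i^b$ that rewards the sender in those states whose clause contains $x_i$ and assigns it the value $b$, together with a family of auxiliary ``trap'' actions whose only role is to make misreports and off-structure posteriors unattractive to the sender (this is where payoffs as low as $-7$ are used, and where the restriction $d\le 6$ keeps every utility inside $[-7,1]$). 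The receiver's utilities are calibrated so that his best response to a posterior concentrated on $x_i=b$ states is precisely $a_i^b$, while his best response to an uninformative posterior yields the sender no reward. The rewards themselves are normalized so that a variable whose value is revealed consistently across all $d$ of its clauses accumulates total sender utility $d$ (one unit per clause); since $\sum_j(\#\text{assigned vars in }C_j)=kd$ for a $d$-regular formula, the target value $\frac{kd}{7m}$ is exactly $\frac{1}{7m}$ times the total reward collectible by a size-$k$ assignment.

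First I would establish the achievability (completeness) direction. Given an optimal non-contradictory partial assignment $x_S$ with $|S|=k$, I build an equilibrium $(\pi,s)$ in which each state is routed to a signal that reveals one of the \emph{assigned} variables of its clause, the receiver answering the corresponding report $a_i^{x_i}$, while states whose clause has no assigned variable are routed to an uninformative signal. I then check the two equilibrium conditions: the receiver best-responds because each informative posterior is supported only on states consistent with a single variable-value pair, making the matching report optimal; and the sender best-responds because --- exactly as in the worked example of Section~\ref{sec:model} --- the receiver's tie-breaking/mixing is chosen to keep the sender indifferent among the report signals available in each state. Because $x_S$ is non-contradictory, every state is consistent with $x_S$ on each assigned variable its clause mentions, so each $i\in S$ collects its full reward $d$, yielding expected sender utility $\frac{kd}{7m}$.

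The harder, and decisive, direction is the matching upper bound: every cheap talk equilibrium gives sender utility at most $\frac{kd}{7m}$. Invoking Proposition~\ref{pro:signals}, I may assume at most $7m$ signals. The strategy is to read off from an arbitrary equilibrium a non-contradictory partial assignment whose size $k'$ dominates the collected reward, i.e.\ total reward $\le k'd\le kd$. The two structural claims to prove are: (i) \emph{consistency} --- the sender cannot collect both the $x_i=\texttt{True}$ and the $x_i=\texttt{False}$ rewards, because the receiver's best-response sets for the two reports are disjoint and the sender's in-support optimality (the first cheap-talk condition) forbids sending a rewarded report in a state contradicting it; and (ii) \emph{non-contradiction} --- the sender cannot get all three variables of a clause $C_j$ rewarded under a falsifying combination, precisely because no state of $C_j$ realizes that combination, so at least one variable of each would-be-contradictory clause must go unrewarded. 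Together (i) and (ii) show that the fully-rewarded variables form a non-contradictory partial assignment of size $k'\le k$, each contributing at most $d$, bounding the total reward by $kd$.

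I expect the main obstacle to be this upper bound, and within it the calibration of the trap actions and penalties: one must guarantee that deviating posteriors --- mixing states across different clauses, or reporting a value in a state that does not support it --- are strictly worse for the sender, simultaneously exploiting the two features that distinguish cheap talk from Bayesian persuasion (the sender may only send signals optimal \emph{in the realized state}, and the receiver reacts only to \emph{on-path} posteriors). It is exactly this combination, rather than either feature alone, that renders ``revealing contradictory information'' unprofitable and forces every equilibrium to encode a genuine non-contradictory partial assignment; arranging the numeric payoffs so that this holds while all utilities remain in $[-7,1]$ for every $d\le 6$ is the delicate part of the construction.
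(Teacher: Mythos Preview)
Your state space is not the paper's, and that difference is fatal rather than cosmetic. The paper attaches to each clause $j$ six \emph{variable states} $x_{i,j},\neg x_{i,j}$ (one pair per variable in the clause) plus one \emph{clause state} $c_j$; you instead take the seven satisfying assignments of the clause. With your choice, the achievability argument breaks: your assertion that ``because $x_S$ is non-contradictory, every state is consistent with $x_S$ on each assigned variable its clause mentions'' is simply false (take $C_j=x_1\lor x_2\lor x_3$, $x_S$ assigning $x_1=\texttt{True}$, and the state $\alpha=(\texttt{False},\texttt{True},\texttt{True})$). More fundamentally, your reward bookkeeping never produces the number $kd$. Under your scheme the sender earns at most one unit per state routed to a report, so the total is essentially a count of informative states; there is no reason that count equals $kd$. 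In the paper the value $\frac{kd}{7m}$ arises because there are literally $kd$ dedicated variable states --- one state per (assigned variable, clause containing it) pair --- each earning exactly~$1$, while every other state earns~$0$. No analogous $kd$-sized subset exists among your satisfying-assignment states (indeed, the number of states with $x_i=b$ within a clause is $3$ or $4$ depending on whether $b$ falsifies or satisfies the literal, so even the per-variable counts are non-uniform).

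The paper's mechanism relies on two ingredients you do not have. First, the single clause state $c_j$ creates the essential tension: $c_j$ avoids the $-7$ penalty \emph{only} by being pooled with three variable states into a ``clause pool'', which competes directly with the sender's desire to place those variable states into reward-bearing ``variable pools''. Second, Proposition~\ref{pro:rec-ut} gives a geometric construction of receiver utilities making a designated action $a_p$ a best response \emph{only} at the specific posterior $p$ (a pool), so any off-pool posterior triggers a $-7$ action. The upper bound is then obtained via two re-pooling lemmas (Lemmas~\ref{lem:vcont} and~\ref{lem:ccont}): if an equilibrium has two contradicting attractive variable pools, or if its induced assignment falsifies some clause, one can explicitly modify the pooling and strictly increase the sender's utility --- the arithmetic $7>2d$ (resp.\ $7>d$) for $d\le 6$ is exactly where the hypothesis is used. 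Your ``trap actions'' gesture at the second ingredient, but without the clause-state device there is nothing forcing a loss when the three rewarded values falsify a clause: the sender can simply route different satisfying-assignment states of that clause to different report signals and collect all three rewards.
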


The proof of Proposition~\ref{pro:reduction} appears below.  Before that, let us show how the two propositions~\ref{pro:regular} and~\ref{pro:reduction} indeed complete the proof of Theorem~\ref{thm:additive}. Recall that for approximation, one should normalize the sender's utilities to belong to $\brackets*{0,1}$. For the instance $C$ obtained from the reduction in Proposition~\ref{pro:reduction}, this normalization can be done by first adding the constant $7$ to each entry of the sender's utility and then dividing them by $8$. Since such transformation would not change the set of equilibria, the corresponding sender-optimal cheap talk equilibrium now has expected sender's utility of $\parentheses*{\frac{kd}{7m} + 7}/8 = \frac{kd}{56m }+ \frac{7}{8}$, where $k =\texttt{Max-Var-3SAT}\parentheses*{\phi}$.  By Proposition~\ref{pro:regular}, it is NP-hard to decide between the two cases $$k\geq \frac{3047.6}{3048}n := q_1 n \text{\ \  and \ \ } k < \frac{3047.1}{3048} n := q_2 n.$$ Note that we have $n = \frac{3}{4}m$ for $4$-regular \texttt{Max-Var-3SAT}  instances ($d=4$), which implies that the expected sender's utility difference between the above two situations is at least $$ \frac{q_1 n \times d}{56m } -  \frac{q_2n \times d}{56m } = \frac{\parentheses*{q_1-q_2}4 \times 3/4}{56},$$ which equals $c = 1/113792$ for $d=4$. This proves that it is NP-hard to compute a $c$-sender-optimal equilibrium, in the additive sense, for the constructed instance.
\end{proof}

\subsection{Proof of Proposition~\ref{pro:reduction}: An Approximation-Preserving Reduction}
\label{sub:reduction}
In this subsection, we present the main part of our proof, which is an approximation-preserving reduction from $d$-regular~\texttt{Max-Var-3SAT} to the computation of a sender-optimal cheap talk equilibrium. Due to the intricacy of the arguments, we present the proof in three steps.

\subsubsection*{Step 1: Constructing the Cheap Talk Instance}
We start by constructing the cheap talk instance $\langle A, \Omega, \mu, u_S, u_R \rangle $ from any given $d$-regular 3CNF formula $\phi$ with $n$ variables and $m$ clauses. A graphical illustration of our construction can be found in Figure~\ref{fig:enter-label}. 

\vspace{2mm}
\noindent   \textbf{States $\Omega$ and prior $\mu$.}  For each variable $x_i$ that appears (possibly with negation) in the $j$th clause, we create two \emph{variable states} denoted as $x_{i,j}$ and $\neg x_{i,j}$. Additionally, we create a single \emph{clause state} $c_j$ for every clause $j\in \brackets*{m}$. 
In total, there are $7m$ states -- for every clause $j\in \brackets*{m}$ there are $6$ states for its three variables, as well as their negations, and one additional clause state for itself. Let $\Omega$ denote the set of all these $7m$ states. The prior $\mu$ over $\Omega$ is simply the uniform distribution. The constructions of the receiver's and sender's utilities are more delicate and hinge on a collection of special posterior distributions -- i.e., points in the simplex $\Delta\parentheses*{\Omega}$ -- that will play a special role in our reduction. We consider the simplex $\Delta\parentheses*{\Omega}$ to be embedded into $\mathbb{R}^{7m}$, with the $i$-th coordinate representing the probability for the state being $\omega_i$.

 \begin{figure}
     \centering
     \begin{tikzpicture}[scale=0.45]
\draw[rounded corners] (0, 0) rectangle (8, -1) {};
\node at (3.5,-0.5) {$x_{1,1}$};
\draw[rounded corners] (9, 0) rectangle (17, -1) {};
\node at (12.5,-0.5) {$x_{1,2}$};

\draw[blue, rounded corners]  (-3,0.2) rectangle (17.2,-1.2) {};
\node at (-1.7,-0.5) {$P_v(1)$};

\draw[rounded corners] (0, -2) rectangle (8, -3) {};
\node at (3.5,-2.5) {$\neg x_{1,1}$};
\draw[rounded corners] (9, -2) rectangle (17, -3) {};
\node at (12.5,-2.5) {$\neg x_{1,2}$};

\draw[blue, rounded corners]  (-3,-1.8) rectangle (17.2,-3.2) {};
\node at (-1.7,-2.5) {$P_v(\neg 1)$};

\draw[rounded corners] (0, -4) rectangle (8, -5) {};
\node at (3.5,-4.5) {$x_{2,1}$};
\draw[rounded corners] (9, -4) rectangle (17, -5) {};
\node at (12.5,-4.5) {$x_{2,2}$};

\draw[blue, rounded corners]  (-3,-3.8) rectangle (17.2,-5.2) {};
\node at (-1.7,-4.5) {$P_v(2)$};

\draw[rounded corners] (0, -6) rectangle (8, -7) {};
\node at (3.5,-6.5) {$\neg x_{2,1}$};
\draw[rounded corners] (9, -6) rectangle (17, -7) {};
\node at (12.5,-6.5) {$\neg x_{2,2}$};

\draw[blue, rounded corners]  (-3,-5.8) rectangle (17.2,-7.2) {};
\node at (-1.7,-6.5) {$P_v(\neg 2)$};

\draw[rounded corners] (0, -8) rectangle (8, -9) {};
\node at (3.5,-8.5) {$x_{3,1}$};

\draw[blue, rounded corners]  (-3,-7.8) rectangle (17.2,-9.2) {};
\node at (-1.7,-8.5) {$P_v(3)$};

\draw[rounded corners] (0, -10) rectangle (8, -11) {};
\node at (3.5,-10.5) {$\neg x_{3,1}$};

\draw[blue, rounded corners]  (-3,-9.8) rectangle (17.2,-11.2) {};
\node at (-1.7,-10.5) {$P_v(\neg 3)$};

\draw[rounded corners] (9, -12) rectangle (17, -13) {};
\node at (12.5,-12.5) {$x_{4,2}$};

\draw[blue, rounded corners]  (-3,-11.8) rectangle (17.2,-13.2) {};
\node at (-1.7,-12.5) {$P_v(4)$};

\draw[rounded corners] (9, -14) rectangle (17, -15) {};
\node at (12.5,-14.5) {$\neg x_{4,2}$};

\draw[blue, rounded corners]  (-3,-13.8) rectangle (17.2,-15.2) {};
\node at (-1.7,-14.5) {$P_v(\neg 4)$};

\draw[rounded corners] (0, -16) rectangle (8, -17) {};
\node at (3.5,-16.5) {$c_1$};
\draw[rounded corners] (9, -16) rectangle (17, -17) {};
\node at (12.5,-16.5) {$c_2$};

\node at (2,-19.5) {The pools $P_c(1,1),\ldots,P_c(1,7)$};
\draw[dashed, ->] (3,-19)--(0.6,-16.8);
\draw[dashed, ->] (3,-19)--(1.5,-16.8);
\draw[dashed, ->] (3,-19)--(2.4,-16.8);

\draw[dashed, ->] (3,-19)--(4.6,-16.8);
\draw[dashed, ->] (3,-19)--(5.6,-16.8);
\draw[dashed, ->] (3,-19)--(6.5,-16.8);
\draw[dashed, ->] (3,-19)--(7.4,-16.8);

\node at (15,-19.5) {The pools $P_c(2,1),\ldots,P_c(2,7)$};
\draw[dashed, ->] (14,-19)--(9.7,-16.8);
\draw[dashed, ->] (14,-19)--(10.7,-16.8);
\draw[dashed, ->] (14,-19)--(11.6,-16.8);

\draw[dashed, ->] (14,-19)--(13.8,-16.8);
\draw[dashed, ->] (14,-19)--(14.6,-16.8);
\draw[dashed, ->] (14,-19)--(15.5,-16.8);
\draw[dashed, ->] (14,-19)--(16.4,-16.8);

\filldraw (6.6,-2.5) circle(0.2);
\filldraw (7.5,-2.5) circle(0.2);

\filldraw (0.5,-0.5) circle(0.2);
\filldraw (0.5,-4.5) circle(0.2);
\filldraw (0.5,-8.5) circle(0.2);
\filldraw (0.5,-16.5) circle(0.2);
\draw (0.5,-0.5)--(0.5,-16.5);

\filldraw (1.4,-0.5) circle(0.2);
\filldraw (1.4,-4.5) circle(0.2);
\filldraw (1.4,-10.5) circle(0.2);
\filldraw (1.4,-16.5) circle(0.2);
\draw (1.4,-0.5)--(1.4,-16.5);

\filldraw (2.3,-0.5) circle(0.2);
\filldraw (2.3,-6.5) circle(0.2);
\filldraw (2.3,-8.5) circle(0.2);
\filldraw (2.3,-16.5) circle(0.2);
\draw (2.3,-0.5)--(2.3,-16.5);

\filldraw (4.8,-0.5) circle(0.2);
\filldraw (4.8,-6.5) circle(0.2);
\filldraw (4.8,-10.5) circle(0.2);
\filldraw (4.8,-16.5) circle(0.2);
\draw (4.8,-0.5)--(4.8,-16.5);

\filldraw (5.7,-2.5) circle(0.2);
\filldraw (5.7,-4.5) circle(0.2);
\filldraw (5.7,-8.5) circle(0.2);
\filldraw (5.7,-16.5) circle(0.2);
\draw (5.7,-2.5)--(5.7,-16.5);

\filldraw (6.6,-2.5) circle(0.2);
\filldraw (6.6,-4.5) circle(0.2);
\filldraw (6.6,-10.5) circle(0.2);
\filldraw (6.6,-16.5) circle(0.2);
\draw (6.6,-2.5)--(6.6,-16.5);

\filldraw (7.5,-2.5) circle(0.2);
\filldraw (7.5,-6.5) circle(0.2);
\filldraw (7.5,-10.5) circle(0.2);
\filldraw (7.5,-16.5) circle(0.2);
\draw (7.5,-2.5)--(7.5,-16.5);

\filldraw (9.5,-0.5) circle(0.2);
\filldraw (9.5,-4.5) circle(0.2);
\filldraw (9.5,-12.5) circle(0.2);
\filldraw (9.5,-16.5) circle(0.2);
\draw (9.5,-0.5)--(9.5,-16.5);

\filldraw (10.4,-0.5) circle(0.2);
\filldraw (10.4,-4.5) circle(0.2);
\filldraw (10.4,-14.5) circle(0.2);
\filldraw (10.4,-16.5) circle(0.2);
\draw (10.4,-0.5)--(10.4,-16.5);

\filldraw (11.3,-0.5) circle(0.2);
\filldraw (11.3,-6.5) circle(0.2);
\filldraw (11.3,-12.5) circle(0.2);
\filldraw (11.3,-16.5) circle(0.2);
\draw (11.3,-0.5)--(11.3,-16.5);

\filldraw (13.8,-2.5) circle(0.2);
\filldraw (13.8,-4.5) circle(0.2);
\filldraw (13.8,-12.5) circle(0.2);
\filldraw (13.8,-16.5) circle(0.2);
\draw (13.8,-2.5)--(13.8,-16.5);

\filldraw (14.7,-2.5) circle(0.2);
\filldraw (14.7,-4.5) circle(0.2);
\filldraw (14.7,-14.5) circle(0.2);
\filldraw (14.7,-16.5) circle(0.2);
\draw (14.7,-2.5)--(14.7,-16.5);

\filldraw (15.6,-2.5) circle(0.2);
\filldraw (15.6,-6.5) circle(0.2);
\filldraw (15.6,-12.5) circle(0.2);
\filldraw (15.6,-16.5) circle(0.2);
\draw (15.6,-2.5)--(15.6,-16.5);

\filldraw (16.5,-2.5) circle(0.2);
\filldraw (16.5,-6.5) circle(0.2);
\filldraw (16.5,-14.5) circle(0.2);
\filldraw (16.5,-16.5) circle(0.2);
\draw (16.5,-2.5)--(16.5,-16.5);

     \end{tikzpicture}
     \caption{The states, variable pools and clause pools of the $3$CNF formula with the variables $x_1,x_2,x_3,x_4$ and the clauses $x_1 \lor x_2 \lor \neg x_3$ and $\neg x_1 \lor x_2 \lor x_4$. The states are depicted as black rectangles; the variable pools are blue rectangles; the clause pools are lines with dots, with a dot indicating that the state belongs to the pool.} 
     \label{fig:enter-label}
 \end{figure}
 
\vspace{2mm}
\noindent   \textbf{Special posteriors -- pools.} For every variable $x_i$,  the uniform distribution over the states $\braces*{x_{i,j}}_{j}$ will be called \emph{the $x_i$ variable pool} and denoted by $P_v\parentheses*{i}$. Note that  $P_v\parentheses*{i}$ is a uniform distribution over $d$ states. Thus, we shall also refer to $P_v\parentheses*{i}$ as a point in the simplex $\Delta\parentheses*{\Omega}$. The name \emph{pool} captures the idea that if the sender decides to pool together the states $\braces*{x_{i,j}}_j$ (namely, to send the same deterministic signal in all these states only) then receiver's posterior belief will be exactly the distribution $\texttt{U}\parentheses*{\braces*{x_{i,j}}_{j}}$, where $\texttt{U}\parentheses*{S}$ denotes the uniform distribution over a subset $S$ of states.\footnote{The same posterior can arise in randomized signaling policies by pooling together fractions of the corresponding~states.} Similarly, we define \emph{the $\neg x_i$ variable pool} $P_v\parentheses*{\neg i}$ to be the uniform distribution over the states $\braces*{\neg x_{i,j}}_{j}$. Let $P_v := \braces*{P_v\parentheses*{i},P_v\parentheses*{\neg i}}_{i\in \brackets*{n}}$ denote the set of all variable pools.

We now describe \emph{clause pools}. For every clause $j\in \brackets*{m}$, there are 7 \texttt{True}/\texttt{False}  assignments for its variables $x_{i_1},x_{i_2},x_{i_3}$ that will satisfy the clause. For each one of these 7 assignments, we create a \emph{clause pool} that is the uniform distribution over four states -- the state $c_j$ and the three corresponding states $x_{i_l,j}$ or $\neg x_{i_l,j}$ based on whether $x_{i_l}=$\texttt{True} or $x_{i,l}=$\texttt{False}, respectively. For example, if $x_{i_1}=$\texttt{True}, $x_{i_2}=$\texttt{False}, $x_{i_3}=$\texttt{True} is a satisfying assignment for the $j$th clause,  then it induces the clause pool $\texttt{U}\parentheses*{\braces*{c_j,x_{i_1,j},\neg x_{i_2,j}, x_{i_3,j}}}$. These pools are denoted by $P_c\parentheses*{j,1},\ldots,P_c\parentheses*{j,7}\in \Delta\parentheses*{\Omega}$ for every $j\in \brackets*{m}$. We denote by $P_c := \braces*{P_c\parentheses*{j,1},\ldots,P_c\parentheses*{j,7}}_{j\in \brackets*{m}}$ the set of all clause pools.

We further introduce \emph{singleton variable pools} that are the Dirac distributions over some $x_{i,j}$ and the Dirac distributions over some $\neg x_{i,j}$.\footnote{A \emph{Dirac distribution} assigns probability $1$ to some element.} These pools are denoted by $P_s\parentheses*{i,j}\in \Delta\parentheses*{\Omega}$ and $P_s\parentheses*{\neg i,j}\in \Delta\parentheses*{\Omega}$, respectively. Let $P_s := \braces*{P_s\parentheses*{i,j},P_s\parentheses*{\neg i,j}}_{i,j}$ (where $i,j$ run over all $j\in\brackets*{m}$ and $i\in\brackets*{n}$ s.t.~the $j$th clause contains the variable $x_i$ or its negation) be the set of all singleton variable pools. Importantly, there are no singleton pools over the clause states $c_j$. The collection of all types of pools is denoted by $P := P_v \cup P_c \cup P_s$.

\vspace{2mm}
\noindent   \textbf{Receiver's actions $A$ and utility $u_R$.} At a high level, we would like to design Receiver's actions and utilities in a way that induces the following situation: Whenever the receiver's posterior $p$ is not one of the pools -- i.e., $p\notin P$ -- then any receiver's best response action will be a ``bad action’’ for the sender; yet for every posterior distribution $p\in P$, there will be a special action $a_p$ that will serve as a best response only at the point $p$; the actions $a_p$ might not be bad for the sender, depending on the state. We shall show that it is indeed possible to achieve these requirements using polynomially many actions for the receiver. Since this result about a decision maker under uncertainty (the receiver) may be of independent interest, we state it as a proposition. 

\begin{proposition}\label{pro:rec-ut}
    Given a finite state space $\Omega$ and  any finite collection of points $P\subseteq \Delta\parentheses*{\Omega}$, there exists a polynomial-time algorithm that outputs a collection of $\parentheses*{\absolute*{\Omega}+1}\absolute*{P}$ actions for the receiver $A=\{a_p\}_{p\in P}\cup \{a_{p,\omega}\}_{p\in P, \omega \in \Omega}$ and  receiver's utilities $u_R:\Omega \times A \to \mathbb{R}$ with the following properties:
    \begin{enumerate}
        \item Under any posterior $y=p\in P$, the receiver's action $a_p$ is a best response and any other action $a_{p'}$  for $p'\neq p$ is  not a best response.\footnote{Notably, some $a_{p',\omega}$ may also be best responses under posterior belief $p$. This property only ensures that $a_{p'}$  will not be a best response under $p\neq p'$, and this will suffice for our later argument.}
        \item Under every posterior $y\notin P$, none of the actions $a_p$ for $p\in P$ is a best response.
    \end{enumerate}
\end{proposition}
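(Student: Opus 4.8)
The plan is to exploit the fact that, once $\Delta\parentheses*{\Omega}$ is embedded in $\mathbb{R}^{\absolute*{\Omega}}$, the receiver's expected utility from any fixed action is an \emph{affine} function of the posterior $y$ (we may use $\sum_{\omega} y\parentheses*{\omega}=1$ to absorb additive constants). Consequently the receiver's value function $V\parentheses*{y}=\max_{a}\sum_{\omega} y\parentheses*{\omega} u_R\parentheses*{\omega,a}$ is convex and piecewise-linear, and an action is a best response at $y$ exactly when its affine piece attains $V\parentheses*{y}$. Thus the whole proposition reduces to designing finitely many affine pieces whose upper envelope has a prescribed tangency pattern: each ``pool action'' $a_p$ should touch the envelope \emph{only} at the single point $p$, while the auxiliary actions lift the envelope strictly above every $a_p$ off the finite set $P$.

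First I would define, for each $p\in P$, the action $a_p$ via $u_R\parentheses*{\omega,a_p}:=2p\parentheses*{\omega}-\|p\|^2$, so that its expected utility is $L_p\parentheses*{y}:=2\langle p,y\rangle-\|p\|^2=\|y\|^2-\|y-p\|^2$, which is exactly the tangent plane to the strictly convex map $y\mapsto\|y\|^2$ at $p$. Two elementary facts drive everything: $L_p\parentheses*{y}\le\|y\|^2$ with equality iff $y=p$; and $L_{p'}\parentheses*{p}=\|p\|^2-\|p-p'\|^2<\|p\|^2$ whenever $p'\neq p$. Next, for each $p$ and each state $\omega$, I would let $a_{p,\omega}$ have expected utility $L_p\parentheses*{y}+c\parentheses*{y\parentheses*{\omega}-p\parentheses*{\omega}}$ for a small constant $c>0$; concretely, $u_R\parentheses*{\omega',a_{p,\omega}}:=2p\parentheses*{\omega'}-\|p\|^2+c\parentheses*{\mathbbm{1}\brackets*{\omega'=\omega}-p\parentheses*{\omega}}$. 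The key observation is that $\max_{\omega}\brackets*{L_p\parentheses*{y}+c\parentheses*{y\parentheses*{\omega}-p\parentheses*{\omega}}}=L_p\parentheses*{y}+c\,g_p\parentheses*{y}$, where $g_p\parentheses*{y}:=\max_{\omega}\parentheses*{y\parentheses*{\omega}-p\parentheses*{\omega}}$, and that on the simplex $g_p\parentheses*{y}\ge 0$ with equality iff $y=p$. This is precisely what makes $\absolute*{\Omega}$ auxiliary pieces per pool suffice rather than $2\absolute*{\Omega}$: since the coordinates of $y-p$ sum to zero, their maximum is nonnegative and vanishes only at $y=p$.

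Finally I would verify the two properties against $V\parentheses*{y}=\max$ over all of the above actions. Property~2 is immediate: for $y\notin P$ and any $p$ we have $g_p\parentheses*{y}>0$, so $V\parentheses*{y}\ge L_p\parentheses*{y}+c\,g_p\parentheses*{y}>L_p\parentheses*{y}$, whence no $a_p$ is a best response. For Property~1 at $y=p$, the action $a_p$ attains $\|p\|^2$, every other pool action $a_{p'}$ attains $L_{p'}\parentheses*{p}<\|p\|^2$, and every auxiliary action of pool $p'$ attains at most $L_{p'}\parentheses*{p}+c\,g_{p'}\parentheses*{p}$.

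The one delicate point, and what I expect to be the main obstacle, is calibrating $c$ so that the auxiliary pieces do not raise the envelope above $\|p\|^2$ at a \emph{different} pool point $p$ (which would destroy $a_p$'s best-response status). Using $g_{p'}\parentheses*{p}\le\|p-p'\|_\infty\le\|p-p'\|$, it suffices to take any $c\le\min_{p\neq p'}\|p-p'\|$, which is positive (the points of $P$ are distinct) and polynomial-time computable; then $L_{p'}\parentheses*{p}+c\,g_{p'}\parentheses*{p}\le\|p\|^2-\|p-p'\|^2+\|p-p'\|^2=\|p\|^2$. Hence $V\parentheses*{p}=\|p\|^2$, attained by $a_p$ and by the same-pool auxiliary actions (which the proposition permits, per its footnote) but \emph{strictly} not by any $a_{p'}$ with $p'\neq p$, establishing Property~1. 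Counting one $a_p$ plus $\absolute*{\Omega}$ auxiliary actions for each of the $\absolute*{P}$ pools gives exactly $\parentheses*{\absolute*{\Omega}+1}\absolute*{P}$ actions, all output in polynomial time.
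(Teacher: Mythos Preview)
Your proof is correct and follows essentially the same approach as the paper: tangent hyperplanes to the strictly convex function $y\mapsto\|y\|^2$ define the actions $a_p$, and the auxiliary actions $a_{p,\omega}$ are obtained by adding the linear tilts $y_\omega-p_\omega$, with the same zero-sum observation ensuring $g_p(y)>0$ off $p$. The only difference is your calibration of the constant (you use $c\le\min_{p\neq p'}\|p-p'\|$ via the bound $g_{p'}(p)\le\|p-p'\|$, whereas the paper takes $\epsilon=\min_{p\neq p'}\|p-p'\|^2$ via the cruder bound $L'_{p',\omega}(p)\le 1$); both choices work.
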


In the proposition above, actions $a_{p,\omega}$ will play the role of ``bad actions’’ for the sender, which will be used by the receiver when $y \notin P$. To construct the receiver's actions and utilities in our reduction, we apply Proposition~\ref{pro:rec-ut} with $P$ being the set of all the constructed \emph{pools} above. It will incentivize the sender to induce only the receiver's posterior beliefs that belong to the pool set $P$, creating combinatorial structures over the sender's optimal signaling policy.

The proof of Proposition~\ref{pro:rec-ut} leverages the geometry of convex functions to design the receiver's actions and utility values. Its main challenge is to use only polynomially-many receiver's actions to achieve the two desirable properties. We defer its technical argument to Appendix~\ref{append:prop:receiver-utility-design}. For the remainder of the proof, we shall utilize Properties~(1) and~(2) of Proposition~\ref{pro:rec-ut} only, rather than the actual construction in the proof.

\vspace{2mm}
\noindent   \textbf{Sender's utility $u_S$.}  For every receiver's action $a_{p,\omega_i}$ ($p\in P, \omega_i\in\Omega$), we set sender's utility $u_S\parentheses*{\omega,a_{p,\omega_i}}=-7$ at every state $\omega$. As mentioned previously, these are ``bad actions’’ for the sender, giving her the least possible utility regardless of the state. Now we turn to the actions $a_{p}$ for $p\in P$. We remind that for any state $\omega$, the statement $\omega \in \supp\parentheses*{p}$ is equivalent to saying that the state $\omega$ appears in the pool $p$ (i.e., a subset of states). We define the sender's utility by:

\begin{align*}
    u_S\parentheses*{\omega,a_p}=\begin{cases} 
1 &\text{ if } p\in P_v \text{ and } \omega\in \supp\parentheses*{p} \\
0 &\text{ if } p\in P_c\cup P_s \text{ and } \omega\in \supp\parentheses*{p} \\
-7 &\text{ otherwise.}
    \end{cases}
\end{align*}

Namely, the sender can gain positive utility only under some \emph{variable pool} $p \in P_v$, since this $p$ is the only posterior belief under which the action $a_p$ can be played. 
While aiming to put as high as possible mass on variable pools, in parallel the sender has the challenge of avoiding the high penalty of $-7$.  This is particularly challenging at a clause state $c_j$ -- unlike the variable states, which can form singleton pools in $P_s$, clause states must be pooled with some variable states to avoid the large penalty. However, this may contradict the sender's use of variable states to obtain positive utility under variable pools. Such conflicting reward-penalty tradeoff is the intrinsic source of hardness, as we shall formalize next. 

\subsubsection*{Step 2: Upper-Bounding Sender's Utility via Equilibrium Analysis}
The main goal of our second step is establishing an upper bound of $\frac{kd}{7m}$ on the sender's expected utility at any equilibrium. We achieve this by proving several lemmas on the structure of the sender-optimal equilibrium. These lemmas also build the connection between our constructed instance and the \texttt{Max-Var-3SAT} problem. 

Given any cheap talk equilibrium, a variable pool $p\in P_v$ will be called \emph{attractive} if the posterior $p$ is induced with positive probability by the sender, and under the posterior $p$ the receiver plays the action $a_p$ with a probability strictly above $ \frac{7}{8}$. Note that this probability yields a strictly positive expected sender's utility in the relevant states. We observe that in an equilibrium, if a variable pool $p\in P_v$ is attractive, then the sender sends the signal that induces the posterior $p$ at all relevant states (i.e., at states $x_{i,j}$ if $p=P_v\parentheses*{i}$ and at states $\neg x_{i,j}$ if $p=P_v\parentheses*{\neg i}$) \emph{with probability $1$}. This is because at these states, only the posterior $p$ yields a strictly positive sender's utility; this follows from our instance construction, where variable pools form a partition of all variable states, and a variable state can enjoy a positive utility if and only if the variable pool containing it is attractive.

Our first lemma states that there cannot be two contradicting attractive pools at any sender-optimal~equilibrium.

\begin{lemma}\label{lem:vcont}
 There is no sender-optimal equilibrium at which both pools $P_v\parentheses*{i}$ and $P_v\parentheses*{\neg i}$ are~attractive.
\end{lemma}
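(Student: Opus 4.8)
\textbf{Proof proposal for Lemma~\ref{lem:vcont}.}

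The plan is to derive a contradiction by showing that if both $P_v\parentheses*{i}$ and $P_v\parentheses*{\neg i}$ were attractive, then the sender could strictly improve her expected utility, contradicting sender-optimality. First I would use the observation stated just before the lemma: if a variable pool $p \in P_v$ is attractive, then at every relevant state (the states $x_{i,j}$ when $p = P_v\parentheses*{i}$, and the states $\neg x_{i,j}$ when $p = P_v\parentheses*{\neg i}$) the sender induces the posterior $p$ with probability $1$. So under the hypothesis that both $P_v\parentheses*{i}$ and $P_v\parentheses*{\neg i}$ are attractive, \emph{all} $2d$ variable states associated with $x_i$ (both $x_{i,j}$ and $\neg x_{i,j}$) are fully committed to inducing their respective variable pools. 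The essential point to extract is that attractiveness of both pools simultaneously forces a rigid allocation of all these states, which I will show is globally suboptimal.

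The key step is a counting/feasibility argument connecting this rigidity to the clause states. Each clause state $c_j$ must be pooled with variable states to avoid the $-7$ penalty (there are no singleton pools over clause states), and the only pools containing $c_j$ are the clause pools $P_c\parentheses*{j,\cdot}$, each of which pairs $c_j$ with exactly one variable state per variable in clause $j$. Since a satisfying assignment of clause $j$ must set at least one literal to \texttt{True}, a clause pool for $c_j$ always requires using at least one variable state of the ``correct'' polarity. When both $P_v\parentheses*{i}$ and $P_v\parentheses*{\neg i}$ are attractive, however, every state $x_{i,j}$ and every state $\neg x_{i,j}$ is entirely consumed by its variable pool, leaving no probability mass of these states available to form clause pools. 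The plan is to argue that this starves at least one clause state $c_j$ (for a clause $j$ containing $x_i$) of any legitimate pooling partner, forcing $c_j$ into a posterior $y \notin P$; by Property~(2) of Proposition~\ref{pro:rec-ut}, the receiver then plays some bad action $a_{p,\omega}$, inflicting utility $-7$ on the sender at that state.

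I would then quantify the tradeoff to reach the contradiction. Having both pools attractive yields sender's utility close to $1$ on the $2d$ variable states of $x_i$ (each contributing a value strictly above $\tfrac{7}{8}$), but it simultaneously forces a $-7$ penalty on at least one clause state. I would compare this to the alternative in which the sender makes only \emph{one} of the two pools attractive (say $P_v\parentheses*{i}$) and reallocates the $\neg x_{i,j}$ states into clause pools (or singleton pools) so that the endangered clause states can be covered without penalty; the net change is to trade the gain on $d$ states of one polarity against the avoided $-7$ penalty, and since $7 > d$ under the hypothesis $d \leq 6$, avoiding even a single clause penalty strictly dominates. The main obstacle I anticipate is the bookkeeping needed to verify feasibility of the reallocation — namely, confirming that the freed variable states can actually be recombined into valid pools (clause pools $P_c$ or singleton pools $P_s$) that keep the posterior decomposition a genuine split of the uniform prior while preserving all equilibrium best-response conditions. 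Making this reallocation explicit, and checking that it neither breaks the receiver's incentives (via Property~(1) of Proposition~\ref{pro:rec-ut}) nor the sender's indifference conditions, is the delicate part; the inequality $7 > d$ is what ultimately drives the contradiction once feasibility is secured.
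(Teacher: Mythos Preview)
Your overall strategy --- show that the hypothesis forces clause states into $-7$ penalties and then construct a strictly better equilibrium by re-pooling --- matches the paper's. The gap is in your specific modification. You propose to break only $P_v(\neg i)$ while keeping $P_v(i)$ attractive, then reallocate the freed $\neg x_{i,j}$ states into clause pools covering the endangered $c_j$'s. But forming a clause pool for $c_j$ requires one state for \emph{each} of the three variables in clause $j$, not just for $x_i$. If another variable $x_{i'}$ in clause $j$ also has both $P_v(i')$ and $P_v(\neg i')$ attractive, then $x_{i',j}$ and $\neg x_{i',j}$ remain fully committed and no clause pool for $c_j$ can be formed even after your modification. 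Thus you cannot guarantee fixing even a single clause, and the comparison ``$7>d$'' never gets off the ground. (Your feasibility worry in the last paragraph is exactly the right place to be worried --- it is not mere bookkeeping, it is where the plan breaks.)

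The paper handles this by breaking \emph{all} contradicting pairs at once: letting $C$ be the set of every $i$ with both $P_v(i)$ and $P_v(\neg i)$ attractive, it dismantles both pools for each $i\in C$, and additionally separates out every variable state not sitting in any attractive pool. After these two steps, for every variable at least one polarity is fully available, so every clause $j$ touching $C$ can be re-pooled into a legitimate clause pool. The accounting is then global: there are at least $|C|d/3$ such clauses (each gaining $7$) against $2|C|d$ variable states each losing at most $1$, and $7\cdot\frac{|C|d}{3} - 2|C|d = \frac{|C|d}{3} > 0$ holds unconditionally in $d$. In particular this lemma does not rely on $d\le 6$; that hypothesis is needed only for Lemma~\ref{lem:ccont}.
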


The proof appears in Appendix~\ref{append:prop:vcont}. Lemma~\ref{lem:vcont} implies that the attractive pools form a partial assignment for the variables. Formally, given an equilibrium, we define the \emph{induced (partial) assignment} as follows.

\begin{itemize}
    \item If $P_v\parentheses*{i}$ is an attractive pool, the induced assignment will be $x_i=$\texttt{False}.
    \item If $P_v\parentheses*{\neg i}$ is an attractive pool, the induced assignment will be $x_i=$\texttt{True}.
    \item If neither $P_v\parentheses*{i}$ nor $P_v\parentheses*{\neg i}$ is an attractive pool, we do not assign any value to $x_i$. 
\end{itemize}

Note that the fourth option of both being attractive is excluded by Lemma~\ref{lem:vcont}. The next lemma shows that in a sender-optimal equilibrium, contradictions cannot be created within a clause; for the proof, see Appendix~\ref{append:prop:ccont}.

\begin{lemma}\label{lem:ccont}
    For every sender-optimal equilibrium, its induced assignment to the variables of $\phi$ does not contradict any clause.  
\end{lemma}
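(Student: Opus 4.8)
\textbf{Proof plan for Lemma~\ref{lem:ccont}.}

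The plan is to argue by contradiction: suppose at some sender-optimal equilibrium the induced assignment contradicts clause $j$, and derive that this equilibrium cannot in fact be optimal, because there is a way for the sender to rearrange her signaling policy to capture strictly more utility. First I would unpack what ``the induced assignment contradicts clause $j$'' means in terms of attractive pools. By the definition of the induced assignment (via Lemma~\ref{lem:vcont}), a clause $C_j = \ell_{i_1}\vee \ell_{i_2}\vee \ell_{i_3}$ is contradicted precisely when all three of its literals are set to \texttt{False} by the induced assignment. Translating through the encoding, this means that for each of the three variables, the variable pool corresponding to the \emph{false-making literal} is attractive. Concretely, if the literal is $x_{i_l}$ then $P_v(i_l)$ is attractive with $x_{i_l}=\texttt{True}$ being the satisfying value, so the contradiction corresponds to $x_{i_l}=\texttt{False}$, i.e.\ $P_v(i_l)$ attractive; I would carefully line up the sign conventions so that ``clause contradicted'' is equivalent to ``all three variable states $x_{i_l,j}$ (or $\neg x_{i_l,j}$) that appear in every satisfying clause pool $P_c(j,\cdot)$ are pooled away into attractive variable pools with probability $1$.''

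The crux is then to examine the clause state $c_j$. Since there are no singleton pools over $c_j$, the sender must place $c_j$ into some pool; the only pools whose support contains $c_j$ are the seven clause pools $P_c(j,1),\dots,P_c(j,7)$. I would show that each such clause pool $P_c(j,t)$ requires mixing $c_j$ together with the three variable states dictated by the $t$-th satisfying assignment of $C_j$. The key observation is that the contradiction assumption forces \emph{all} of the variable states that could serve as partners for $c_j$ in a clause pool to already be fully committed (probability $1$) to their attractive variable pools. Consequently, no clause pool $P_c(j,t)$ can actually be induced, because forming it would demand positive mass from at least one variable state that is entirely consumed by an attractive variable pool. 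Here I expect to use the partition structure noted before Lemma~\ref{lem:vcont}: variable pools partition the variable states, and an attractive pool claims its states with probability $1$. The remaining possibility is that $c_j$ is routed into a posterior $p\notin P$, or into a pool that is not a best-response point --- but by Proposition~\ref{pro:rec-ut}(2), any posterior outside $P$ makes the receiver play a ``bad action'' $a_{p,\omega}$, yielding sender utility $-7$ on $c_j$, and more generally any signal carrying $c_j$ must land on one of the $P_c(j,t)$ to avoid the $-7$ penalty.

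I would complete the argument by a swapping/improvement construction: starting from the contradicting equilibrium, I release a small amount of mass from one of the three attractive variable pools --- say the pool associated with variable $x_{i_1}$ --- just enough to free some mass of the state $x_{i_1,j}$ (or $\neg x_{i_1,j}$), and redirect it together with $c_j$ into the appropriate clause pool $P_c(j,t)$ that flips only $x_{i_1}$'s literal to \texttt{True} while keeping the other two literals' states as they sit. This converts penalty-bearing mass on $c_j$ (utility $-7$) into clause-pool mass (utility $0$ on those states), a strict gain, while only marginally reducing the variable-pool reward; choosing the transferred mass small enough keeps the net change strictly positive, contradicting sender-optimality. The main obstacle I anticipate is the bookkeeping for incentive compatibility: I must verify that after the swap the resulting strategy profile can still be completed to an \emph{equilibrium} (the sender's indifference/best-response conditions and the receiver's best responses must be maintainable via appropriate tie-breaking mixing, exactly as in the worked binary example), rather than merely a feasible signaling policy with higher utility. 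Handling this cleanly --- arguing that the improved policy is realizable at equilibrium, not just as an off-equilibrium deviation --- is the delicate part, and I would lean on the fact that optimality is taken over the set of equilibria and on the receiver's indifference at pool posteriors guaranteed by Proposition~\ref{pro:rec-ut}(1) to install the necessary mixed best responses.
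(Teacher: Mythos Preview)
Your outline is on target through the diagnosis of $c_j$: you correctly unwind what ``clause $j$ is contradicted'' means in pool language and correctly argue that every clause pool $P_c(j,t)$ requires at least one variable state already fully committed to an attractive variable pool, so that $c_j$ is forced into a non-pool posterior and incurs utility $-7$. This matches the paper.

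The gap is in the improvement step. Your plan is to ``release a small amount of mass'' from $P_v(i_1)$, re-route a little of $x_{i_1,j}$ into a clause pool together with $c_j$, and argue the gain $\sim 7\varepsilon$ beats the loss $\sim \varepsilon$. This cannot be made into an equilibrium. A clause pool is the \emph{uniform} distribution over four specified states; to induce it, all four states, in particular $c_j$, must send that signal with the same conditional probability. But once the clause-pool signal is on path, the sender at state $c_j$ strictly prefers it (utility $0$) to every alternative (utility $-7$), so best response forces $c_j$ to put \emph{all} its mass there --- hence so must $x_{i_1,j}$. With $x_{i_1,j}$ fully removed, the old $P_v(i_1)$ signal no longer induces the uniform posterior $P_v(i_1)$, and by Proposition~\ref{pro:rec-ut}(2) the receiver abandons $a_{P_v(i_1)}$, so the remaining $x_{i_1,j'}$ crash to $-7$. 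There is no ``small $\varepsilon$'' regime: the pool structure is discrete, and the sender's strict preference at $c_j$ rules out partial transfers. (Trying instead to make the sender indifferent by having the receiver mix at $P_v(i_1)$ drops the utility of every $x_{i_1,j'}$ to $0$ regardless of how little mass you move, so you still pay the full price of the pool.)

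The paper's fix is to break $P_v(i_1)$ entirely: send $x_{i_1,j}$ (with all of $c_j,\neg x_{i_2,j},\neg x_{i_3,j}$) to the new clause pool, send each $x_{i_1,j'}$ for $j'\neq j$ to its singleton pool, and have the receiver play $a_p$ deterministically at every newly created pool. The accounting is then $+7$ on $c_j$ versus a loss of at most $1$ on each of the $d$ states of $P_v(i_1)$; this is strictly positive precisely because of the standing hypothesis $d\le 6$ in Proposition~\ref{pro:reduction}. Your plan never invokes $d\le 6$, and without it the swap need not help --- so this numerical condition is not incidental but the crux of the lemma.
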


Lemma~\ref{lem:ccont} implies that in a sender-optimal equilibrium, the induced partial assignment must not create contradictory clauses. Moreover, the sender's utility is at most $0$ for any posterior that is not an attractive pool. Therefore, we can bound from above the sender's utility in any equilibrium by $\frac{kd}{7m}$, where $k$ is the maximum size of a non-contradicting variable assignment. Note that $kd$ counts the total number of states appearing in all attractive pools, and $\frac{1}{7m}$ is a normalization factor due to the uniform prior over the $7m$~states.

\subsubsection*{Step 3: Lower-Bounding Sender's Utility via Equilibrium Construction} 
Finally, we construct an equilibrium that indeed achieves the $\frac{kd}{7m}$ expected sender's utility upper bound proved above. Given a partial assignment $x_S$ for $k$ variables with $\absolute*{S}=k$ and $x_i=$\texttt{True}/\texttt{False} for every $i\in S\subseteq \brackets*{n}$, we construct a cheap talk equilibrium as follows:

\begin{itemize}
    \item First, we create $k$ variable pools based on the partial assignment for the $k$ variables. If $x_i=$\texttt{True}, we create the pool $P_v\parentheses*{\neg x_i}$, whereas if $x_i=$\texttt{False} -- we create the pool $P_v\parentheses*{x_i}$.
    \item Second, for every clause $j\in \brackets*{m}$, the sender pools $c_j$ with the corresponding states $x_{i,j}$ or $\neg x_{i,j}$ in a way that is consistent with the assignment to create a clause pool for every clause. That is, if $x_i=$\texttt{True}, the sender pools $c_j$ with $x_{i,j}$ (since $\neg x_i$ was used to create the pool $P_v\parentheses*{\neg x_i}$ above); and if $x_i=$\texttt{False}, then the sender pools $c_j$ with $\neg x_{i,j}$. If $x_i$ is not assigned a value, then we include in $c_j$'s pool the state corresponding to that literal from the pair $\braces*{x_{i,j}, \neg x_{i,j}}$ which appears in the $j$th clause (we can consider this as allowing to give the variable $x_i$ different values to satisfy different clauses). In this way, we guarantee that each clause belongs to some clause pool.   
    \item Third, all the remaining variable states that have not been pooled yet are pooled into singleton pools. This completes the description of the sender's signaling policy.
    \item Finally, since the sender's signaling policy only leads to posteriors in the pool set $P$, the receiver's response is simply set to take action $a_p$ deterministically for every created pool $p$.
\end{itemize}

To see that the construction above is a cheap talk equilibrium, note that the receiver is obviously best responding by the construction of his utility as in Proposition~\ref{pro:rec-ut}. Moreover, the receiver's deduction of the posterior beliefs is Bayesian given the sender's strategy. To see that the sender cannot improve her utility by sending a different signal at any state, note that her utility at any state is either the largest possible utility $1$ (at a variable pool) or $0$ (at any other pools). Whenever her utility is $0$, she cannot induce a strictly positive utility by deviating to a different pool by our construction. Finally, note that the sender's utility at this equilibrium is precisely $\frac{kd}{7m}$, concluding Proposition~\ref{pro:reduction} proof.

\subsection{Extension to the Receiver's Perspective}
\label{sub:receiver}
We first show that Proposition~\ref{pro:signals} can be extended to show that there exists a receiver-optimal equilibrium using at most $n+1$ signals. Then we deduce from Proposition~\ref{pro:babbling} proof that it is NP-hard to decide whether the receiver can achieve a given level of utility at an equilibrium, and even to decide whether the babbling equilibrium is receiver-optimal.

\begin{proposition}
    \label{pro:receiver}
    For any expected receiver's utility value $v$ that arises at some cheap talk equilibrium, there exists a cheap talk equilibrium $\parentheses*{\pi,s}$ with $\absolute*{\supp\parentheses*{\pi}}\leq n+1$ and expected receiver's utility of $v$.
\end{proposition}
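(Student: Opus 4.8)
The plan is to mimic the proof of Proposition~\ref{pro:signals}, but to account for the fact that the receiver's expected utility -- unlike the sender's -- need not be constant across the posteriors the sender pools together in a given state. Recall that in Proposition~\ref{pro:signals}, the key fact was that at equilibrium the sender is indifferent among all posteriors she induces at a given state, so replacing the distribution over posteriors by any Carath\'eodory sub-representation preserves both the equilibrium conditions and the sender's utility. For the receiver, the analogous ``freedom'' is that the receiver's utility depends on \emph{which} posteriors appear and with what weights, so an arbitrary Carath\'eodory reduction of the splitting $\mathbb{E}_{\sigma\sim\pi}[p_\sigma]=\mu$ could change the induced distribution over posteriors and hence alter the receiver's expected utility. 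The main obstacle is therefore to reduce the number of signals to at most $n+1$ \emph{while preserving the receiver's expected utility exactly}.

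First I would record that the receiver's expected utility under an equilibrium $(\pi,s)$ is a linear functional of the signal distribution: writing $r(\sigma):=\mathbb{E}_{\omega\sim p_\sigma}\mathbb{E}_{a\sim s[\sigma]}[u_R(\omega,a)]$ for the conditional receiver utility at signal $\sigma$, the overall receiver utility is $\mathbb{E}_{\sigma\sim\pi}[r(\sigma)]$, i.e.\ an average of the points $p_\sigma$ \emph{together with} an extra scalar coordinate $r(\sigma)$ attached to each. This suggests the standard trick: lift each posterior $p_\sigma\in\Delta(\Omega)\subseteq\mathbb{R}^n$ to the augmented point $(p_\sigma,\,r(\sigma))\in\mathbb{R}^{n+1}$, and observe that the prior-plus-utility pair $(\mu,\,V_R)$, where $V_R$ is the target receiver utility, equals the $\pi$-average of these lifted points. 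Applying Carath\'eodory's theorem in $\mathbb{R}^{n+1}$ yields a sub-distribution $\nu$ supported on at most $n+1$ of the original posteriors whose lifted average is still exactly $(\mu,V_R)$; by the splitting lemma of Aumann and Maschler this $\nu$ is realized by a signaling policy $\pi'$ using at most $n+1$ signals.

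Next I would verify that $(\pi',s)$ is still an equilibrium and still gives receiver utility $v$. The receiver's best-response condition is immediate exactly as in Proposition~\ref{pro:signals}: the surviving posteriors and the receiver's responses are unchanged, so $\mathrm{supp}(s[\sigma])\subseteq\texttt{B}_R(\sigma,\pi')$ for each surviving $\sigma$ (here one uses that the posterior $p_\sigma$ depends only on the support and conditional weights of the pooled states, which are inherited). The matching on the first $n$ coordinates guarantees $\mathbb{E}_{p\sim\nu}[p]=\mu$ so $\pi'$ is a genuine splitting of the prior, and the matching on the $(n+1)$st coordinate guarantees $\mathbb{E}_{\sigma\sim\pi'}[r(\sigma)]=V_R=v$, i.e.\ the receiver's utility is preserved exactly. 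The sender's best-response condition requires the same care as before: conditional on each state $\omega$, the sender was indifferent among all induced posteriors with $p_\sigma(\omega)>0$, and $\pi'$ only retains a subset of these, so $\mathrm{supp}(\pi'[\omega])\subseteq\texttt{B}_S(\omega,s)$ still holds.

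The one point that needs slightly more attention than in Proposition~\ref{pro:signals} -- and which accounts for the ``$+1$'' -- is precisely the dimension of the lift: since we must match both the $n$-dimensional prior constraint and the additional scalar receiver-utility constraint, Carath\'eodory in $\mathbb{R}^{n+1}$ gives $n+2$ points in general, but because the points lie in the affine hyperplane $\sum_\omega p(\omega)=1$, the effective ambient dimension is $n$ for the simplex coordinates plus one for the utility coordinate, yielding the bound $n+1$. I would state this affine-dimension count explicitly, as it is the sole quantitative difference from the sender-side argument and the place a careless application of Carath\'eodory would give a weaker bound.
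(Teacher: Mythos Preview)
Your proposal is correct and follows essentially the same approach as the paper: lift each posterior $p_\sigma$ to the augmented point $(p_\sigma,r(\sigma))\in\Delta(\Omega)\times\mathbb{R}$, apply Carath\'eodory in this $n$-dimensional affine space to extract at most $n+1$ signals whose weighted average matches both $\mu$ and the receiver's utility $v$, then invoke the splitting lemma and reuse the equilibrium verification from Proposition~\ref{pro:signals}. Your exposition is more detailed than the paper's (which compresses the argument to a few lines), and your explicit discussion of the affine-dimension count is a helpful clarification, though the phrasing ``$n$ for the simplex coordinates plus one for the utility coordinate'' reads ambiguously---it would be clearer to say the simplex contributes affine dimension $n-1$, plus one for the utility coordinate, giving affine dimension $n$ and hence the Carath\'eodory bound $n+1$.
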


\begin{proof}
Fix an equilibrium $\parentheses*{\pi,s}$ with expected receiver's utility of $v$. As in Proposition~\ref{pro:signals} proof, we have $\mathbb{E}_{\sigma \sim \pi}\brackets*{p_\sigma}=\mu$. For any posterior $p_\sigma$, let $v\parentheses*{p_\sigma,s}$ be the corresponding expected receiver's utility. By Carathéodory's theorem applied on $\Delta\parentheses*{\Omega}\times \mathbb{R}$, there is a split $\nu$ of $\parentheses*{\mu,v}$ with support size at most $n+1$ that only uses posteriors from $\braces*{p_\sigma}_{\sigma \in \supp \parentheses*{\pi}}$. We have $\mathbb{E}_{\sigma \sim \nu}\brackets*{p_\sigma}=\mu$ and $\mathbb{E}_{\sigma \sim \nu}\brackets*{v\parentheses*{p_\sigma,s}}=v$. By the splitting lemma, we get that there exists a signaling policy $\pi'$ inducing a distribution over posteriors which is exactly $\nu$, having an expected receiver's utility of $v$ and using at most $n+1$ signals. Exactly as in Proposition~\ref{pro:signals} proof, we further get that $\parentheses*{\pi',s}$ is a cheap talk~equilibrium.
\end{proof}

Similarly to Proposition~\ref{pro:signals}, it follows from Proposition~\ref{pro:receiver} that the problem of finding a receiver-optimal cheap talk equilibrium belongs to NP. We show now that the following hardness result is implied by Proposition~\ref{pro:babbling} proof and a convexity argument.

\begin{proposition}
     \label{pro:receiver2}
     Deciding whether there is an equilibrium with expected receiver's utility higher than at a babbling equilibrium is NP-complete.
\end{proposition}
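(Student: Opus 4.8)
\textbf{Proof proposal for Proposition~\ref{pro:receiver2}.}

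The plan is to show membership in NP first and then hardness, reusing the reduction underlying Proposition~\ref{pro:babbling}. NP-membership is immediate from Proposition~\ref{pro:receiver}: a receiver-optimal equilibrium needs at most $n+1$ signals, so a cheap talk equilibrium strictly improving on the babbling value admits a polynomial-size certificate (the signaling policy's posteriors together with the receiver's responses), whose equilibrium conditions and whose expected receiver utility can both be checked in polynomial time. The babbling value itself is simply $\max_a \mathbb{E}_{\omega \sim \mu}[u_R(\omega,a)]$, computable directly. Thus the decision problem lies in NP.

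For hardness, I would reduce from the same $d$-regular \texttt{Max-Var-3SAT} instance used in the proof of Proposition~\ref{pro:babbling}, rather than invoking Theorem~\ref{thm:additive} as a black box. The key idea is to augment the constructed cheap talk instance with a receiver utility that is aligned with the sender on exactly the structure that distinguishes satisfiable from unsatisfiable instances. Concretely, I would design the receiver's utility so that the ``good'' action $a_p$ at each variable pool $p \in P_v$ yields the receiver strictly more than what he can guarantee at the babbling equilibrium, while the bad actions $a_{p,\omega}$ and the non-variable pools give him no such advantage. The convexity argument the statement alludes to is where one leverages that the receiver's babbling utility equals the expected receiver utility of a single pooled posterior at the prior $\mu$; any nontrivial equilibrium that induces attractive variable pools splits $\mu$ into posteriors at which the receiver earns a convex-combination value, and by construction this exceeds the babbling value if and only if at least one attractive variable pool is present. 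Since attractive variable pools exist in a sender-optimal equilibrium precisely when the instance admits a large non-contradictory partial assignment, the receiver can beat babbling if and only if the \texttt{Max-Var-3SAT} instance has $\texttt{Max-Var-3SAT}(\phi) \geq 1$ (equivalently, the underlying \texttt{3SAT}-type promise separating the two cases), which is NP-hard to decide.

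The main obstacle I anticipate is the careful calibration of the receiver's utilities so that (i) the two properties of Proposition~\ref{pro:rec-ut} are preserved for the sender's side of the argument, and simultaneously (ii) the receiver strictly prefers at least one equilibrium with an attractive variable pool over babbling, with a strict gap that survives the convex averaging over the full split of $\mu$. In particular, one must ensure that the receiver's gain from the attractive variable pools is not washed out by the many singleton and clause pools he is forced to respond to in the remaining states, and that inducing those other pools does not by itself already beat babbling (which would make the reduction vacuous). I expect this requires choosing the receiver's payoff at each $a_p$ for $p \in P_v$ to be a small positive increment above the baseline while keeping all other responses exactly at the babbling-indifferent level, so that the strict improvement is in one-to-one correspondence with the existence of an attractive variable pool. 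Once this calibration is fixed, the equivalence between ``receiver beats babbling'' and the \texttt{Max-Var-3SAT} promise follows from the equilibrium-structure lemmas (Lemmas~\ref{lem:vcont} and~\ref{lem:ccont}) already established, completing the NP-completeness proof.
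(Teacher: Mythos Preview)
Your NP-membership argument matches the paper. The hardness argument, however, diverges from the paper's proof in a way that introduces a genuine gap.

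The paper does \emph{not} modify the receiver's utility at all. The key observation you are missing is that the receiver's utilities produced by Proposition~\ref{pro:rec-ut} are already built on a \emph{strictly convex} function $f$ (e.g., $f(y)=\sum_i y_i^2$): each action $a_p$ is the tangent plane to $f$ at $p$, so the receiver's indirect utility $v(q)=\max_a \mathbb{E}_{\omega\sim q}[u_R(\omega,a)]$ is the upper envelope of these tangents and is strictly convex on the pool set. Combined with Blackwell's convexity (which gives $v(\mu)\le \mathbb{E}_{\sigma\sim\pi}[v(p_\sigma)]$ for any equilibrium), this immediately yields that \emph{any} equilibrium whose posteriors are pools distinct from $\mu$ strictly beats babbling for the receiver. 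The paper's proof simply argues: in the satisfiable case, the equilibrium $(\pi,s)$ constructed in Proposition~\ref{pro:babbling} uses posteriors $p_\sigma\neq\mu$; if the receiver's value there equaled $v(\mu)$, then $a_\mu$ would have to be a best response at each $p_\sigma$, contradicting Property~(1) of Proposition~\ref{pro:rec-ut}. No calibration is needed, and no distinction between variable pools and other pools is relevant.

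Your proposal to redesign receiver utilities so that only variable pools carry a receiver bonus is both unnecessary and risky: the receiver's best-response structure is what drives the sender-side analysis (Lemmas~\ref{lem:vcont} and~\ref{lem:ccont} and the $-7$ penalty mechanism), and flattening $u_R$ at non-variable pools to the ``babbling-indifferent level'' would break the strict-convexity property underpinning Proposition~\ref{pro:rec-ut}, hence those lemmas. Finally, your stated equivalence ``receiver beats babbling iff $\texttt{Max-Var-3SAT}(\phi)\ge 1$'' cannot be the reduction target, since any single variable assignment is non-contradictory; the correct target is the satisfiability promise inherited from Proposition~\ref{pro:babbling}.
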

   
\begin{proof}
    We already know from Proposition~\ref{pro:receiver}  and the discussion after it that this problem belongs to NP. To show the hardness part, we first note that by~\cite{blackwell1953equivalent}, the receiver's utility as a function of the posterior, assuming that the receiver plays one of his expected utility-maximizing strategies over this posterior, is convex.

    In particular, it follows that at a babbling equilibrium, the receiver gets the minimum possible expected utility over all the cheap talk equilibria. Since determining whether a given 3CNF formula is satisfiable is NP-hard, it is enough to prove that the reduction in Proposition~\ref{pro:babbling} proof from Max-Var-3SAT to determining whether the \emph{sender} can achieve higher equilibrium utility than at a babbling equilibrium from Proposition~\ref{pro:babbling} proof satisfies the following property: The receiver's expected utility at the equilibrium $\parentheses*{\pi,s}$ constructed when the given 3CNF formula is satisfiable is strictly above the receiver's utility at a babbling equilibrium. Since the receiver's utility is convex as a function of the posterior, it suffices to prove that the expected receiver's utility of $\parentheses*{\pi,s}$ is not exactly as at a babbling equilibrium.

    Indeed, suppose that it does not hold. For any posterior $q$, let $v\parentheses*{q}$ be the maximal expected utility the receiver can achieve when the posterior is $q$ (where the expectation is over the posterior). We have $\mathbb{E}_{\sigma \sim \pi}\brackets*{p_\sigma}=\mu$ and $\mathbb{E}_{\sigma \sim \pi}\brackets*{v\parentheses*{p_\sigma}}=v\parentheses*{\mu}$. Therefore, the action $a_{\mu}$ provided by Proposition~\ref{pro:rec-ut}, which is a receiver's best-response to the posterior being the prior $\mu$, must also be a best response to each of the posteriors $p_\sigma\in\supp\parentheses*{\pi}$: Otherwise, we must have $\mathbb{E}_{\sigma \sim \pi}\brackets*{v\parentheses*{p_\sigma}}>v\parentheses*{\mu}$. This is a contradiction to the first condition of Proposition~\ref{pro:rec-ut}, which was applied on $\supp\parentheses*{\pi}\cup \braces*{\mu}$.
\end{proof}

\section{Tractable Cases}
\label{sec:tractble}
Following the negative result from the previous section, we turn to analyzing well-motivated special cases. We note that the previous intractability result holds when both the number of states and the number of actions are variable, and moreover -- sender's utility is state-dependent. In this section, we show that in the following two cases, computing a sender-optimal cheap talk equilibrium is tractable:  (i)~there are $n=O\parentheses*{1}$ states; (ii)~there are $m=2$ actions for the receiver.

\subsection{A Constant Number of States}
\label{sub:states}
In this subsection, we prove that finding a sender-optimal cheap talk equilibrium is computationally tractable when the number of states $n$ is constant. The proof is based on the bound of $n=O\parentheses*{1}$ on $\absolute*{\supp\parentheses*{\pi}}$ at a sender-optimal equilibrium, as ensured by Proposition~\ref{pro:signals}. This bound allows a translation of the cheap talk problem to a normal form game between the sender and the receiver in which the sender has $O\parentheses*{1}$ actions. The main result of this subsection is as follows.

\begin{proposition}
\label{pro:states}
    Suppose that the number of states $n$ is constant. Then one can compute a sender-optimal equilibrium $\parentheses*{\pi,s}$ in $\poly\parentheses*{m}$-time.
\end{proposition}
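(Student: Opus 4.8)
The plan is to leverage Proposition~\ref{pro:signals}, which guarantees that some sender-optimal equilibrium uses at most $n$ signals. Since $n=O(1)$, this means the sender effectively has a \emph{constant} number of ``meta-actions'' available: each signal $\sigma_1,\dots,\sigma_n$ can be thought of as a row in a game where the sender's strategy assigns, for each state $\omega$, a distribution over these $n$ signals. First I would fix the signal set to $\Sigma=\{\sigma_1,\dots,\sigma_n\}$ with $|\Sigma|=n$ and recast the problem. The sender's signaling policy $\pi$ is then described by the $n\times n$ stochastic matrix $\parentheses*{\pi\parentheses*{\sigma_\ell\mid\omega}}_{\omega,\ell}$, which has $O(1)$ entries; the receiver's strategy $s$ assigns to each of the $n$ signals a distribution in $\Delta\parentheses*{A}$, which is $\poly(m)$-sized data.

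The key reduction is to a bilinear/normal-form structure. For a fixed signal $\sigma_\ell$, the receiver's best response depends only on the posterior $p_{\sigma_\ell}$, and by the equilibrium conditions the receiver mixes only over $\texttt{B}_R\parentheses*{\sigma_\ell,\pi}$. The crucial combinatorial object is the \emph{partition structure}: which action subset the receiver plays at each signal, and which signals the sender sends in each state. The plan is to enumerate over the $O(1)$-many relevant combinatorial configurations. Specifically, for each signal I would guess the receiver's best-response action \emph{set} (the support of $s\brackets*{\sigma_\ell}$); since a receiver best-response set at a given posterior is determined by which actions are utility-maximizing, I would enumerate over the $\poly(m)$ candidate posteriors induced by the $n$-signal decomposition structure, or more cleanly, guess for each of the $n$ signals a support of size at most $n$ (by a Carathéodory-type argument, an optimal receiver mixture at a fixed posterior needs at most $n$ actions in its support to make the sender indifferent across her signals). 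This yields $\parentheses*{\binom{m}{\leq n}}^{n}=\poly(m)$ candidate support profiles.

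Given a fixed guess of the sender's signal-support structure (for each state, which signals are sent) and the receiver's action-support structure (for each signal, which actions are mixed), the remaining problem becomes a \emph{feasibility linear program}. The variables are the $O(1)$ entries of $\pi$ together with the $O(m\cdot n)=\poly(m)$ receiver mixing probabilities; the constraints are: (i)~$\pi$ is a valid stochastic policy consistent with $\mu$; (ii)~for each signal, the guessed action set consists exactly of receiver best responses at the induced posterior $p_{\sigma_\ell}$ (linear inequalities in the receiver's utilities evaluated at $p_{\sigma_\ell}$, which are linear in $\pi$); (iii)~for each state, every signal the sender sends with positive probability is a sender best response, i.e.\ yields the same expected sender's utility against $s$, and no other signal in support yields strictly more (these are the sender-indifference equalities and inequalities). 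The objective is to maximize the expected sender's utility, which is linear once the supports are fixed. Then I would solve this LP for every guess and return the best feasible solution.

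The main obstacle I anticipate is handling the \emph{bilinearity}: the receiver's posterior $p_{\sigma_\ell}$ depends on $\pi$ through the Bayesian update in Equation~\eqref{eq:posterior-update}, and the sender's payoff depends on the product of $\pi$-entries with the receiver's mixing probabilities $s\parentheses*{a\mid\sigma_\ell}$. Both the best-response constraints and the objective are therefore products of decision variables. The saving grace is that once I fix the receiver's action supports \emph{and} the sender's signal supports, the receiver-indifference and sender-indifference conditions become linear after clearing the Bayesian-update denominators (multiplying through by $\mathbb{P}(\sigma_\ell)$, which is nonnegative), so the constraints in~(ii) are linear in $\pi$; the subtlety is that the sender's objective and the indifference constraints in~(iii) still contain $\pi\cdot s$ products. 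To linearize, I would fix the receiver's strategy on a fine enough structure: the key insight is that the receiver's mixture probabilities are themselves free variables constrained only to lie in a polytope (the simplex over the guessed support), and the sender-indifference constraints are \emph{linear in $s$ for fixed $\pi$-support pattern} because the sender's expected utility from signal $\sigma_\ell$ equals $\sum_a s\parentheses*{a\mid\sigma_\ell}u_S\parentheses*{\omega,a}$, which does not involve $\pi$ at all. Thus the genuine bilinearity is confined to the objective and to constraint~(ii) via the posterior; I would resolve the former by noting that, given the sender's signal-support pattern, the sender's expected utility is $\sum_{\omega,\ell}\mu\parentheses*{\omega}\pi\parentheses*{\sigma_\ell\mid\omega}\sum_a s\parentheses*{a\mid\sigma_\ell}u_S\parentheses*{\omega,a}$, and with the receiver-support guess fixed, the inner sum over $a$ is a linear function of $s$, making the objective bilinear in $\parentheses*{\pi,s}$. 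This residual bilinearity is the real difficulty; the plan is to break it by an additional $O(1)$-dimensional discretization or by observing that, at an equilibrium with fixed supports, the sender-indifference conditions pin down enough degrees of freedom that one of the two factor-blocks can be treated as a constant determined by the other via a constant-sized system, reducing to a family of $\poly(m)$ linear programs solvable in $\poly(m)$ total time.
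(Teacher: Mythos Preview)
Your proposal and the paper agree on the opening move (use Proposition~\ref{pro:signals} to fix $|\Sigma|=n$), but diverge sharply afterwards. The paper's proof is much shorter and avoids the obstacle you struggle with: it simply recasts the cheap talk game as a \emph{bimatrix} game in which the sender's pure strategies are the $n^n=O(1)$ deterministic maps $\Omega\to\Sigma$ and the receiver's pure strategies are the $m^n=\poly(m)$ deterministic maps $\Sigma\to A$. A cheap talk equilibrium is then a Nash equilibrium of this bimatrix game, and because one side has constantly many pure strategies, standard support enumeration (the support-testing algorithm for bimatrix Nash) finds all equilibrium payoff pairs, and hence the sender-optimal one, in $\poly(m)$ time. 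There is no residual bilinearity to worry about, because it is absorbed into the well-understood Nash support-testing framework.

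Your direct $(\pi,s)$ parameterization is viable in principle, but the proposal leaves the central step unresolved. You correctly identify the bilinear objective as the obstacle, yet your two suggested fixes do not work as stated: ``$O(1)$-dimensional discretization'' would only yield an approximation, and ``the sender-indifference conditions pin down enough degrees of freedom'' is too vague to be a proof. The observation you are missing is precisely the one that makes bimatrix support testing work: once the sender's signal supports are fixed, the sender is \emph{indifferent} across every signal she uses in state $\omega$, so her utility in state $\omega$ equals $\sum_a s(a\mid\sigma)u_S(\omega,a)$ for \emph{any single} $\sigma\in\supp(\pi[\omega])$. The $\pi$-dependence therefore drops out of the objective entirely, which becomes linear in $s$ alone. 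The problem then decouples into two independent linear programs (one in $\pi$ for the receiver's best-response constraints, one in $s$ for the sender's best-response constraints plus the linear objective), and your enumeration goes through. As written, though, the proposal stops exactly where the real argument has to begin.
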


\begin{proof}
Fix a set of signals $\Sigma$ of size $n$. By Proposition~\ref{pro:signals}, this set is rich enough to include the support of the sender's signaling policy at some sender-optimal equilibrium. 

The cheap talk interaction can be viewed as a two-player interaction in which:

\begin{itemize}
    \item The sender has $n^n$ pure strategies (i.e., mappings from $\Omega$ to $\Sigma$), and the signaling policies are mixtures over these.
    \item The receiver has $m^n$ pure strategies (i.e., mappings from $\Sigma$ to $A$), and his mixed strategies are mixtures over these.
\end{itemize}

  Notice that the sender has $n^n=O\parentheses*{1}$ actions, and the receiver has $m^n=\poly\parentheses*{m}$ actions. Therefore, one can apply an exhaustive search over all supports (sets of pure strategies played with a positive probability) of size $n^n=O\parentheses*{1}$ for both players to find all the possible equilibrium outcomes (in terms of the two agents' utilities), including the sender-optimal one -- see, e.g., the \emph{support testing} algorithm in~\cite{von2007equilibrium}.\footnote{If the matrix of receiver's utilities does not have the full rank of $\min\braces*{m,n}$ then the exhaustive search might miss some equilibria, but it is still done over every equilibrium outcome in terms of utilities~\cite{von2007equilibrium}. More precisely, for any mixed Nash equilibrium, there exists an equilibrium yielding the same respective sender's and receiver's expected utilities s.t.~each player mixes over at most $n^n$ pure strategies.}
\end{proof}

\subsection{A Binary-Action Receiver}
\label{sub:binary}
In this subsection, we show that when the number of actions is $m=2$, finding a sender-optimal cheap talk equilibrium is computationally tractable, and it even can be done in time linear in $n$. Moreover, at most two signals suffice for the sender, with each signal being an incentive-compatible recommendation for the receiver to take a specific action. Namely, either a babbling equilibrium is optimal, or an optimal equilibrium is obtained when the sender greedily recommends the receiver to take the sender's preferred action in the current state, with ties broken in the receiver's favor.

\begin{definition}
    \label{def:greedy}
    A \emph{sender-greedy signaling policy} uses a signal per action -- the \emph{recommended action}. In each state, it sends a signal equal to a most preferred action by the sender in that state, with ties broken in the receiver's favor.
\end{definition}

\begin{proposition}
    \label{pro:binary}
    When the receiver has $m=2$ actions, there exists an algorithm computing a sender-optimal cheap talk equilibrium $\parentheses*{\pi,s}$ in time linear in $n$. Furthermore, either $\pi$ is a sender-greedy signaling policy and $s$ obeys the recommended actions, or $\parentheses*{\pi,s}$ is a babbling equilibrium.
\end{proposition}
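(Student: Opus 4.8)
The plan is to exploit the one-dimensionality of the posterior belief when $m=2$. With two actions $a_1, a_2$, the receiver's best response depends only on a single threshold: there is some belief cutoff at which he switches from strictly preferring $a_1$ to strictly preferring $a_2$ (breaking ties arbitrarily, in the sender's or receiver's favor as needed). First I would note that, by Proposition~\ref{pro:signals}, we may restrict attention to equilibria with at most $n$ signals, but in fact I expect to argue something stronger: each signal is payoff-equivalent to a pure action recommendation, so without loss of generality the sender uses at most two signals, one recommending $a_1$ and one recommending $a_2$. This collapses the problem to deciding, for each state, which of the two actions the sender wants the receiver to play, and whether the induced strategy profile is incentive-compatible for both players.

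The key structural step is to characterize the sender's incentives state by state. In each state $\omega$, the sender has a \emph{most preferred action} — the one maximizing $u_S(\omega, \cdot)$ — and the sender-greedy policy sends, in state $\omega$, the signal recommending that action (with ties broken in the receiver's favor, per Definition~\ref{def:greedy}). I would then verify the two best-response conditions of the cheap talk equilibrium definition. For the sender: because the greedy policy already assigns each state to the sender's favorite action, and because with only two signals the only possible deviation is to send the \emph{other} recommendation, the sender never strictly benefits from deviating, \emph{provided} the receiver actually obeys each recommendation. For the receiver: obeying recommendation $a_k$ must be a best response given the posterior induced by the set of states recommending $a_k$; this is exactly the incentive-compatibility (persuasiveness) constraint, and I would check it using the threshold structure of the receiver's best-response correspondence.

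The central obstacle — and the reason the statement offers a dichotomy rather than a single clean construction — is that the sender-greedy policy may fail to be incentive-compatible for the receiver: the posterior induced by the states where the sender recommends $a_1$ might still make the receiver strictly prefer $a_2$ (or vice versa). The plan here is to argue that when the greedy recommendations are not both persuasive, one can iteratively/greedily adjust without loss: either the receiver can be made to obey by an appropriate tie-breaking and the sender-greedy profile works, or the only remaining equilibrium is babbling. Concretely, I would compare the sender's utility under the greedy profile (when persuasive) against the babbling value — the sender's expected utility when the receiver best-responds to the prior $\mu$ — and show that no intermediate equilibrium can beat both, again leveraging that the posterior space is a single interval and the receiver's action is monotone in the posterior. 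Because everything reduces to a single scan over the $n$ states (computing each state's preferred action and accumulating the two induced posteriors, then checking two threshold inequalities), the resulting algorithm runs in time linear in $n$, which is what we must establish to finish.

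Finally, to complete the dichotomy I would handle the remaining case: if the sender-greedy signaling policy is not persuasive, I expect to show that any non-babbling equilibrium is dominated by babbling from the sender's perspective, so the sender-optimal equilibrium is babbling. I anticipate the monotonicity of the receiver's best response in the scalar posterior to be the workhorse throughout, and the most delicate point to be the tie-breaking convention (in the receiver's favor) that makes the greedy recommendations exactly incentive-compatible at the boundary — this is where the ``ties broken in the receiver's favor'' clause in Definition~\ref{def:greedy} does the real work.
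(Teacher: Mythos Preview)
Your high-level plan --- collapse to at most two signals, check whether the sender-greedy recommendations are receiver-IC, and fall back to babbling otherwise --- matches the paper's. But the reasoning you propose to get there has a real gap.

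The repeated appeal to ``one-dimensionality of the posterior belief,'' a ``single threshold,'' and ``monotonicity of the receiver's best response in the scalar posterior'' conflates binary \emph{states} with binary \emph{actions}. Here $|\Omega|=n$ is arbitrary, so posteriors live in the $(n-1)$-dimensional simplex; the receiver's indifference set is a hyperplane, not a cutoff, and there is no scalar in which his best response is monotone. The object that \emph{is} one-dimensional is the receiver's mixed action: for each signal $\sigma$ the only payoff-relevant quantity is the probability with which he plays $a_1$. The paper's proof pivots on exactly this. It fixes a sender-optimal equilibrium with minimal signal support, lets $\sigma_1$ (resp.\ $\sigma_2$) be the signal on which the receiver places the highest probability on $a_1$ (resp.\ $a_2$), and observes that in any state where the sender strictly prefers $a_1$ she must send $\sigma_1$ with probability $1$ (else she deviates profitably). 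Any third signal is used only in sender-indifferent states and can be merged into $\sigma_1$ or $\sigma_2$ using convexity of the receiver's best-response regions in $\Delta(\Omega)$, which yields the two-signal reduction without any scalar-posterior argument.

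Your proposal also skips the step that actually delivers the dichotomy: why can't the receiver \emph{mix} after one or both signals, giving the sender something strictly between babbling and greedy? You gesture at ``no intermediate equilibrium can beat both'' via posterior monotonicity, which is unavailable. The paper's argument is different and short: once only $\sigma_1,\sigma_2$ remain, write $q_1,q_2$ for the receiver's probabilities of obeying; the sender's expected utility is linear and nondecreasing in each of $q_1,q_2$, so the optimum is at an extreme. Either both receiver-IC inequalities hold and one can set $q_1=q_2=1$ (greedy is an equilibrium and sender-optimal), or at least one fails, forcing $q_1 q_2 = 0$, i.e.\ the receiver plays a constant action and we are at babbling. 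This linearity-in-$(q_1,q_2)$ step is the missing ingredient in your plan.
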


The proof outline relies on the possibility of reducing the analysis of mixtures over two actions to a single-dimensional optimization problem (and, therefore, does not generalize to more than two actions). Specifically, let $\sigma_1$ be a signal s.t.~$s\brackets*{\sigma_1}$ assigns the highest probability to the receiver taking action $a_1$. Since there are just two actions, the sender weakly betters off deviating to deterministically transmitting $\sigma_1$ in any state in which she prefers action $a_1$ over $a_2$. Therefore, in all the states in which the sender strictly prefers $a_1$ over $a_2$, she must deterministically induce some posterior $p$ (the same for all these states); this posterior may further be induced in some states in which the sender is indifferent between the two actions. A simple analysis shows that if under $p$ the receiver weakly prefers $a_1$ over $a_2$, then he must deterministically play $a_1$ as a response to $p$ or we get a contradiction to sender-optimality;\footnote{Similar considerations can also be applied to the states in which the receiver prefers $a_2$ over $a_1$.} and if it is not possible to induce such $p$, then the sender cannot do better than a babbling equilibrium. For the formal proof, see Appendix~\ref{append:tractable:pro-binary}.

\section{Discussion}
\label{sec:discussion}
\paragraph{Maximization objective.} We have mainly focused on maximizing the sender's utility in equilibrium. Maximizing the receiver's utility has also been considered (Subsection~\ref{sub:receiver}). While for maximizing the sender's utility we have proved the hardness of a constant approximation, for the receiver we have shown only the hardness of computing the exact maximum. The hardness of approximating the maximal receiver's utility (or conversely a polynomial approximation algorithm) remains an open problem. 

Another possible objective is \emph{social welfare}, which is defined as the sum of the sender's and receiver's utilities $u_S+u_R$. It immediately follows from Theorem~\ref{thm:additive} that computing a certain additive or multiplicative constant approximation for the social welfare-maximizing equilibrium is NP-hard. This is because one can scale down the receiver's utility function to make it negligible (e.g., smaller than $c/2$ in Theorem~\ref{thm:additive} notations) compared to the sender's utility. Moreover, note that the proof outline of Proposition~\ref{pro:receiver} works for any linear combination of the sender's and the receiver's utilities instead of the receiver's utility $v$. In particular, there always exists a social welfare-maximizing equilibrium using at most $n+1$ signals. Therefore, the problem of deciding whether the social welfare can reach a certain threshold at an equilibrium belongs to NP. We have yet to find a positive computational result on maximizing social welfare or the receiver's utility in a cheap talk equilibrium. 

\vspace{2mm}
\noindent \textbf{Further research on algorithmic cheap talk.} Our results lead to many interesting opportunities for future research. An immediate question following our hardness result is whether there is a quasi-polynomial time algorithm for finding a sender-optimal or welfare-maximizing cheap talk equilibrium, and what the tight inapproximability constant $c$ is.

Given the rich body of algorithmic studies of Bayesian persuasion, it is not difficult to see the various possible follow-up algorithmic directions in extensions of the cheap talk model. Examples include introducing many receivers, communicating with the sender privately or publicly, with or without receivers' externalities. Another direction involves studying approximate equilibria in cheap talk (cf.~\cite{rubinstein2016settling}) and the complexity of finding an optimal such solution. Furthermore, we prove that when there are constantly-many states, it is possible to compute a sender-optimal equilibrium in a polynomial time. However, it is an interesting open question on whether an analogous result holds for constantly many actions, even for the constant being three. It will also be interesting to investigate computational applications of cheap talk in domains such as entertainment games with communications~\cite{meta2022human}, security games and negotiations, in which commitment is often impossible.

\vspace{2mm}
\noindent \textbf{Alternative frameworks for strategic information transmission.}  Another interesting research direction is analyzing the computational aspects of intermediate models between the two extreme settings -- Bayesian persuasion (full sender's commitment power) and cheap talk (no commitment at all). A natural question is whether the sender's optimization problem is tractable for persuasion with partial sender's commitment -- e.g., in the credible persuasion framework~\cite{lin2022credible}.

\paragraph{Acknowledgements.} Yakov Babichenko is supported by the Binational Science Foundation BSF grant No.~2018397. This work is funded by the European Union (ERC, ALGOCONTRACT, 101077862, PI: Inbal Talgam-Cohen). Inbal Talgam-Cohen is supported by a Google Research Scholar award and by the Israel Science Foundation grant No.~336/18. Yakov and Inbal are supported by the Binational Science Foundation BSF grant No.~2021680. Haifeng Xu is supported by NSF award No.~CCF-2303372, Army Research Office Award No.~W911NF-23-1-0030 and Office of Naval Research Award No.~N00014-23-1-2802. Konstantin Zabarnyi is supported by a PBC scholarship for Ph.D. students in data science. The authors are grateful to Mirela Ben-Chen, Dirk Bergemann, Yang Cai, Françoise Forges, Elliot Lipnowski and anonymous reviewers for helping them to improve the paper.

\bibliographystyle{plainnat}
\bibliography{main}

\appendix

\section{Omitted Proofs in Section \ref{sec:hard}}
\subsection{Proof of Proposition~\ref{pro:babbling}}
\label{append:prop:babbling}
The reduction in Theorem~\ref{thm:additive} proof constructs cheap talk instances in which it is NP-hard to decide whether there exists a cheap talk equilibrium with an expected utility of at least some $\beta\in\parentheses*{0,1}$ for the sender or there is no cheap talk equilibrium with an expected utility of at most $\alpha$ for some $\alpha<\beta$. Moreover, the cheap talk equilibrium $\parentheses*{\pi,s}$ with expected sender's utility of at least $\beta$ constructed in Theorem~\ref{thm:additive} proof does not induce a posterior equal to the prior with a positive probability (i.e., $p_{\sigma}\neq \mu$ for any $\sigma \in \supp\parentheses*{\pi}$).

We modify our reduction as follows. We add the prior $\mu$ to the set of special posteriors (i.e., pools). We further add an action $a_0$ for the receiver that is a best reply only when $p_{\sigma}=\mu$; namely, we apply Proposition~\ref{pro:rec-ut} with one additional point $\mu$. We set the sender's utility for the action $a_0$ being state-independent and equal to $\alpha$: $u_S\parentheses*{a_0}=\alpha$. After the modification, the sender's utility in the sender-optimal babbling equilibrium is $\alpha$.

If the original instance has an equilibrium with expected sender's utility of at least $\beta$, the same must be true for the new instance, as we know from Theorem~\ref{thm:additive} proof that there must exist such an equilibrium with no signal inducing a posterior equal to $\mu$. Therefore, in such a case, there exists an equilibrium with sender's expected utility strictly above $\alpha$ in the modified instance.

Assume now that in the original instance, all equilibria have expected sender's utility of at most $\beta$. Then no equilibrium in the modified instance can have expected sender's utility larger than $\alpha$ without inducing a posterior equal to the prior with a positive probability -- if no such posterior is induced, then the new action $a_0$ is irrelevant. It remains to consider equilibria $\parentheses*{\pi,s}$ in the modified instance with some $\sigma \in \supp\parentheses*{\pi}$ satisfying $p_{\sigma}=\mu$. For such $\sigma$, the sender gets utility of at most $\alpha$. Moreover, $\sigma$ is sent in every state with a positive probability by its definition. We get from the sender's indifference constraints that in every state, the sender's expected utility over $\pi$ is at most $\alpha$. In any case, the sender's expected utility in any equilibrium is at most $\alpha$, as desired.  

\subsection{Proof of Proposition~\ref{pro:regular}}\label{append:prop:sat-hard}
Our reduction starts from the following known hard problem.

\begin{lemma}\label{lem:gap-sat}[adapted from Theorem~2 of~\cite{berman2004approximation}]
For every $\epsilon \in \parentheses*{0, 1/2}$, there exists a family of $4$-regular $3$SAT instances such that each instance $\phi$ consists of $m = 1016M$ clauses for some $M$, and it is NP-hard to decide whether $\phi$, which is promised to be in one of the following two situations, is actually in the first of them:
 
  (\textbf{A1})  at least  $\parentheses*{1016 - \epsilon}M$ clauses of $\phi$ can be satisfied  simultaneously;
 
   (\textbf{A2}) at most $\parentheses*{1015 + \epsilon}M$ clauses in  $\phi$ can be satisfied  simultaneously. 
\end{lemma}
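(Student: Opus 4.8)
The plan is to derive this bounded-occurrence gap hardness from a standard PCP-type inapproximability result for unrestricted \texttt{Max-3SAT} (or a closely related constraint satisfaction problem) by an \emph{occurrence-reduction} reduction based on expander/amplifier gadgets, and then to bookkeep the resulting clause counts so that they land on the stated constants. First I would invoke a fixed constant-gap hardness: for some absolute $\eta>0$ it is NP-hard to distinguish instances where nearly all clauses are satisfiable from instances where at most a $(1-\eta)$-fraction can be satisfied simultaneously (Håstad-type inapproximability, used as a black box). In such instances variables may occur an unbounded number of times, and this is precisely the obstacle to bounded-degree hardness, so the heart of the argument is to eliminate high occurrences while preserving a constant gap.

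Second, I would apply the classical amplifier replacement to bound variable occurrences. For each variable $x$ of the starting formula occurring $t$ times, introduce fresh copies $x^{(1)},\dots,x^{(t)}$, replace the $\ell$-th occurrence of $x$ by $x^{(\ell)}$, and add equality constraints $(x^{(i)}\Leftrightarrow x^{(j)})$ whose underlying constraint graph is a constant-degree expander on the copies. Writing each equality as the two clauses $(\neg x^{(i)}\vee x^{(j)})$ and $(x^{(i)}\vee\neg x^{(j)})$, the edge-expansion guarantees that any assignment in which the copies of $x$ disagree must violate a number of equality clauses proportional to the minority set; hence in the soundness case one may, at the cost of only an $\epsilon$-loss, assume all copies of each variable agree, recover an assignment to the original formula, and thereby inherit the original constant gap.

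Third, I would enforce the two structural normalizations the statement demands -- exactly three literals per clause and exactly four occurrences per variable. The equality $2$-clauses are padded to $3$-clauses by a standard device (duplicating a literal, or attaching a dedicated auxiliary variable that is then projected out), and the amplifier degree together with a layer of dummy/padding clauses is chosen so that every variable, original and auxiliary alike, appears in \emph{exactly} four clauses. This is the delicate part: $4$-regularity is a rigid combinatorial requirement, and simultaneously (i) meeting it exactly for all vertices, (ii) keeping every added clause satisfiable in the completeness case, and (iii) preserving enough expansion for soundness, forces the precise per-variable clause budget that yields the total $m = 1016M$ for the appropriate scaling parameter $M$.

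Finally, a counting argument converts the two cases of the starting gap into (A1) and (A2): in the completeness case a satisfying assignment of the original formula extends to satisfy all equality and padding clauses and all but an $\epsilon$-fraction of the core clauses, giving at least $(1016-\epsilon)M$ satisfied clauses, while in the soundness case the expander lower bound on violated equality clauses together with the inherited $(1-\eta)$-gap on the core forces a linear number of failures, capping the optimum at $(1015+\epsilon)M$. I expect the main obstacle to be exactly this twofold rigidity -- achieving exact $4$-regularity while making the constant arithmetic close -- since one must tune the amplifier's degree, its checker-vertex count, and the padding so that every vertex's occurrence count equals four and so that the completeness-minus-soundness bookkeeping leaves the claimed $\Theta(1/1016)$ relative gap uniformly for every $\epsilon\in(0,1/2)$.
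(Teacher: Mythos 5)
First, a point of reference: the paper does not prove this lemma at all --- it is imported, with notation adapted, from Theorem~2 of \cite{berman2004approximation}, so the ``paper's proof'' is a citation to Berman, Karpinski and Scott. Your proposal is therefore an attempt to reprove an external black-box result, and while it correctly identifies the proof paradigm of that literature (constant-gap PCP starting point, expander/amplifier-based occurrence reduction, then constant bookkeeping), it has genuine gaps at exactly the points where the cited theorem is hard. The decisive one is exact $4$-regularity. In your template, the $\ell$-th occurrence of $x$ is replaced by a fresh copy $x^{(\ell)}$, which thus appears once in an original clause, and each equality constraint $x^{(i)}\Leftrightarrow x^{(j)}$ expands into two $2$-clauses, contributing \emph{two} occurrences to each endpoint; a copy of expander-degree $D$ therefore occurs $1+2D$ times, which can never equal $4$, and $D\leq 1$ (a matching or path) has no expansion. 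A directed implication cycle gives $3$ occurrences per copy but is not an expander, so soundness fails (one violated implication lets half the copies flip). Escaping this tension is precisely what Berman--Karpinski wheel amplifiers are for: most amplifier vertices (``checkers'') carry no original occurrences, changing the occurrence arithmetic entirely. Your padding devices also break the instance format: duplicating a literal violates the paper's 3CNF convention that no variable appears twice in a clause, and padding a $2$-clause with an auxiliary variable creates a new variable whose occurrence count must itself be exactly four, which you do not arrange.

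Second, the constants $1016$ and $1015$ are asserted rather than derived. They do not come from Håstad's $7/8$-hardness for \texttt{Max-3SAT}, which is where you start; in \cite{berman2004approximation} they arise from a reduction from \texttt{E3-LIN-2} (gap $1-\epsilon$ vs.\ $\frac{1}{2}+\epsilon$) through gadgets with exact per-equation clause counts, combined with the wheel amplifiers. Relatedly, your soundness step --- ``one may, at the cost of only an $\epsilon$-loss, assume all copies of each variable agree'' --- is not right for generic expanders: repairing a disagreeing assignment costs a number of violated equality clauses proportional to the minority side, so the \emph{relative} gap degrades by the blowup factor of the construction and must be recomputed; it is exactly this computation that produces the $\Theta(1/1016)$ gap, and your sketch leaves it undone. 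In short, the proposal is a reasonable roadmap to \emph{some} bounded-occurrence gap hardness, but the two claims that constitute the lemma --- occurrence count exactly four and the specific $(1016-\epsilon)M$ vs.\ $(1015+\epsilon)M$ thresholds --- are precisely the parts not established.
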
 

To avoid carrying cumbersome numbers, let $q_1 := \frac{1016 - \epsilon}{1016}$ be the fraction of the $m=1016M$ clauses that case \textbf{A1} can satisfy, whereas $q_2 := \frac{  1015 + \epsilon}{1016}$ is the fraction for case \textbf{A2}. Note that $3m = 4n$. We reduce the problem in Lemma~\ref{lem:gap-sat} to distinguishing between the following two situations for our 4-regular \texttt{Max-Var-3SAT} problem, when promised that $\phi$ is in one of the following two~cases: 
 
  (\textbf{B1})  $\phi$ admits a non-contradictory partial assignment to at least $\frac{4 q_1 -1}{3}n = \frac{3048 - 4 \epsilon}{3048} n $ variables; 
 
   (\textbf{B2}) any partial assignment to strictly more than  $\frac{2 + q_2}{3}n =\frac{3047 +  \epsilon}{3048} n  $ variables must lead to contradictory clauses in~$\phi$.

The proposition proof follows by taking $\epsilon=0.1$. The reduction goes by proving that, for any $4$-regular $3$SAT formula $\phi$, case \textbf{A1} implies case \textbf{B1} and case \textbf{A2} implies case \textbf{B2}. Consequently, if there is an algorithm for solving the above distinguishing problem for 4-regular \texttt{Max-Var-3SAT} problem, this algorithm -- when used for the instances from Lemma \ref{lem:gap-sat} --  can also distinguish case \textbf{A1} (which satisfies \textbf{B1}) from case \textbf{A2} (which satisfies \textbf{B2}). It, thus, implies the NP-hardness of distinguishing \textbf{B1} from \textbf{B2} for 4-regular \texttt{Max-Var-3SAT} instances, when promised that the instance is at one of the two cases. 

We first argue that case \textbf{A1} implies case \textbf{B1}. Suppose that there is a full variable assignment that can satisfy at least $q_1m$ clauses. This means there are at most $\parentheses*{1-q_1}m$ unsatisfied clauses. We can modify this full assignment by removing any one variable from each of the unsatisfied clauses and make its value unassigned. This results in a partial assignment with at least $n - \parentheses*{1-q_1}m = n-\frac{4\parentheses*{1-q_1}n}{3}=\frac{4q_1 - 1}{3}n $ assigned variables. Therefore, if the  instance $\phi$ belongs to case \textbf{A1}, it must belong to  case \textbf{B1}. 

Next, we argue that case \textbf{A2} implies case \textbf{B2}. We prove its contrapositive -- that is, if there is a non-contradictory partial assignment of size strictly larger than  $\frac{2 + q_2}{3}n $, then strictly more than $q_2m$ clauses can be satisfied in $\phi$. Indeed, this non-contradictory partial assignment leaves less than $\frac{1-q_2}{3}n$ variables unassigned, and these variables can span less than $\frac{4(1-q_2)}{3}n$ clauses. Therefore, more than  $m - \frac{4(1-q_2)}{3}n = q_2m$ clauses are satisfied by this partial assignment, as desired.

\subsection{Proof of Proposition \ref{pro:rec-ut}}\label{append:prop:receiver-utility-design}

 \begin{figure}[h] 
		\centering
		\includegraphics[width=7cm]{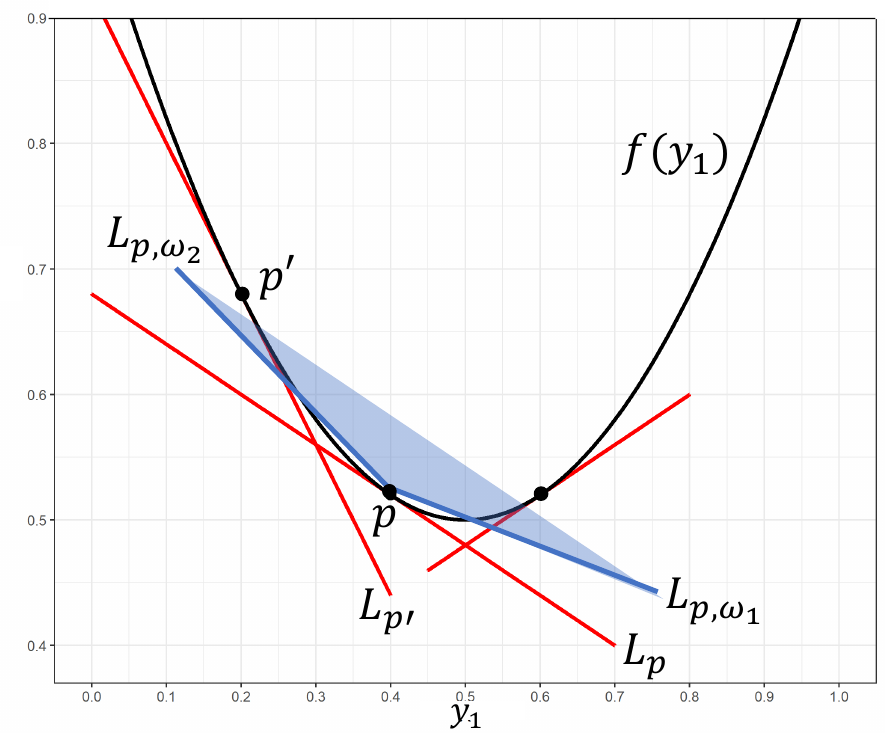}
		\caption{Geometric Illustration for the Proof of Proposition \ref{pro:rec-ut}.}
		\label{fig:GeoReceiverU}
	\end{figure} 
 
   The proof is geometric. We start with considering some strictly convex function $f:\Delta\parentheses*{\Omega}\to \mathbb{R}$. For instance, one can pick $f\parentheses*{y}=\sum_i y_i^2$, though any strictly convex function will be equally good. An illustration for the two-state situation is provided in Figure~\ref{fig:GeoReceiverU}, where the input variable $y_1$ to $f\parentheses*{y_1}$ denotes the probability of the first state. 

   For every $p\in P$, let $L_p: \Delta\parentheses*{\Omega}\to \mathbb{R}$ be the tangent plane to the plot of $f$ at the point $p$. Notice that $L_p$ uniquely defines an action $a_p$ for the receiver in terms of receiver's utility. Indeed, given some state $\omega_i$ ($i\in\brackets*{n}$), consider the vector $e_i$ of length $n$ with the $i$th coordinate being $1$ and the other coordinates being $0$. Setting $u_R\parentheses*{\omega_i, a_p}=L_p\parentheses*{e_i}$ ensures, by linearity of receiver's expected utility as a function of the posterior $y\in \Delta\parentheses*{\Omega}$, that $L_p\parentheses*{y}$ exactly equals the expected receiver's utility at the posterior $y$.
   
   Fix $p'\in\Delta\parentheses*{\Omega}\setminus p$. The strict convexity of $f$ implies that $L_{p'}\parentheses*{y}<f\parentheses*{y}$ for every $y\neq p'$. In particular, for $y=p\neq p'$,  we  get  $L_{p'}\parentheses*{p}<f\parentheses*{p}=L_{p}\parentheses*{p}$, where $L_{p'}\parentheses*{p}$ is receiver's utility of action $a_{p'}$ at  the posterior belief $p$. This implies that $a_{p}$ is the unique best response among the actions $\{a_{p'}\}_{p'\in P}$ at the posterior $p$, which ensures Property~(1).

   Now we turn to define the additional actions $a_{p,\omega}$ to ensure Property~(2). Let us set\\$\epsilon:=\min_{p, p' \in P: p\neq p'} \braces*{L_p\parentheses*{p}-L_{p'}\parentheses*{p}}$. Since the set $P$ is finite, $\epsilon$ is well-defined and strictly positive. We want to make the action $a_p$ sub-optimal for every $y\neq p$. Geometrically, we create a reverse pyramid whose ``button vertex’’ is $p$ (the shadow region of Figure~\ref{fig:GeoReceiverU}). The facets of the pyramid correspond to these additional actions. However, additional carefulness is needed when defining this pyramid. We must avoid the situations in which one the new facets will strictly exceed the value of $L_{p'}\parentheses*{p'}$ for some point $p'\neq p$, which will ruin Property~(1). Under the illustration of Figure~\ref{fig:GeoReceiverU}, we must have $L_{p, \omega} \parentheses*{p'} \leq L_{p'} \parentheses*{p'}$ for every $\omega$ and $p'$. It turns out that such actions can be explicitly constructed.

   Indeed, for every $p=\parentheses*{p_{\omega_i}}_{i\in\brackets*{n}}\in P$ and $\omega \in \Omega$, let us define the linear function $L'_{p,\omega}\parentheses*{y}:=y_{\omega}-p_{\omega}$. Note that $L'_{p,\omega}\parentheses*{y}\leq 1$ and $L'_{p,\omega}\parentheses*{p}=0$. Moreover,  for every $y\neq p$, there exists $\omega \in \Omega$ such that $L'_{p,\omega}\parentheses*{y}>0$. This is because $\sum_{\omega \in \Omega} L'_{p,\omega}(y) = \sum_{\omega \in \Omega}  \parentheses*{y_{\omega}-p_{\omega}} = 1 - 1 = 0$, and thus we know that at least one of $\braces*{L'_{p,\omega}\parentheses*{y}}_{\omega}$ is strictly positive when $y\neq p$.   
   
   We now define the linear function $L_{p,\omega}:=L_p+\epsilon L'_{p,\omega}$. For a fixed $p$ and variable values of $\omega$, these linear functions can be thought of as pyramid facets (see Figure~\ref{fig:GeoReceiverU}). Denote the action that corresponds to some $L_{p,\omega}$ by $a_{p,\omega}$ -- i.e., $u_R\parentheses*{\omega,a_{p,\omega_i}}=L_{p,\omega_i}\parentheses*{e_i}$ ($i\in\brackets*{n}$). We argue that these additional actions $a_{p,\omega}$ ensure Property~(2) without violating Property~(1).

To see that Property~(2) holds, we observe that for every $y\notin P$ and every $p\in P$, we have $L_{p,\omega}\parentheses*{y}=L_p\parentheses*{y}+\epsilon L'_{p,\omega}\parentheses*{y}>L_p\parentheses*{y}$ for $\omega\in \Omega$ satisfying $L'_{p,\omega}\parentheses*{y}>0$. Namely, any $a_p$ is not a best response under any posterior belief $y \notin P$.  To see that Property~(1) is not violated, we observe the following two facts:

\begin{itemize}
    \item For any $p' \not = p$, we have $L_{p',\omega}(p)= L_{p'}\parentheses*{p}+\epsilon L'_{p',\omega}\parentheses*{p}\leq L_{p'}\parentheses*{p}+\epsilon \leq L_p\parentheses*{p}$, where the last inequality follows from the definition of $\epsilon$.
    \item Additionally,  we have $L_{p,\omega}\parentheses*{p}= L_{p}\parentheses*{p}+\epsilon L'_{p,\omega}\parentheses*{p} = L_p\parentheses*{p}$.
\end{itemize}

Thus, even with introduction of the $a_{p', \omega}$ actions,  $a_p$ remains a best response under the posterior $p$ (though it is not the unique best response, since all the $a_{p, \omega}$ actions are also best responses). This concludes the proof.

\subsection{Proof of Lemma~\ref{lem:vcont}}
\label{append:prop:vcont}
We shall show that any cheap talk equilibrium containing contradicting attractive pools cannot be sender-optimal. Specifically, we can adjust it to a strictly better equilibrium for the sender. 
 
Let $C\subseteq \brackets*{n}$ denote the set of indices of contradicting attractive pools (i.e., $i \in C$ means both $P_v\parentheses*{i}$ and $P_v\parentheses*{\neg i}$ are attractive). For every clause $j$ that contains a variable $x_i$ from $i\in C$, we know that the sender's utility at the state $c_j$ is necessarily $-7$. This is because the only way to avoid the penalty of $-7$ is by pooling $c_j$ into a clause pool; however, since in both states  $x_{i,j}$ and $\neg x_{i,j}$ the sender sends their corresponding variable pool signal $P_v\parentheses*{i}, P_v\parentheses*{\neg i}$ with probability $1$, the mass of these states is taken and cannot be pooled with $c_j$.

We can apply the following modification to avoid this penalty (see Figure~\ref{fig:vcont}  for illustration). First, we break all the contradicting attractive pools in $C$. Second, for every variable state that currently is not part of any attractive variable pool,  we separate out all its probability mass from the posteriors it belongs to. Note that after these two modifications,  for every variable $x_{i}$ (not necessarily in $C$), at least one of $x_{i, j}, \neg x_{i, j}$ is fully available in the sense that its probability mass is not used in any of the remaining pools or posteriors. Third, for every clause $j$ that contains a variable $x_i$ from $i\in C$, we pool every clause state $c_j$ with the currently available variable states into a clause pool. Since we have broken both attractive variable pools $P_v\parentheses*{i}$ and $P_v\parentheses*{\neg i}$, at least one of them can be pooled with $c_j$, together with the other two available variables (as for at least one of them we shall, indeed, get a clause pool). Fourth, all the remaining variable states that have not been pooled after this operation are set as singleton pools. Finally, we adjust the receiver's strategy to best respond to all the new pools, and in the sender's optimal manner when there is indifference.

To see that the modifications above still result in a cheap talk equilibrium, note that all the variable states in the modified strategies are deterministically set to one of the following three types of signals: (1)~an attractive variable pool; (2)~a singleton variable pool consisting of the variable state itself; (3)~a clause pool containing this variable. Thus, the remaining posteriors can only be supported on some clause states that do not appear in any of the clause pools, and they will lead to the sender's utility of $-7$ under \emph{any} signal. By construction, the sender's strategy after the above modification is a best response to the receiver's strategy.

\begin{figure}[ht]
     \centering
     \begin{tikzpicture}[scale=0.32]
\draw[rounded corners] (0, 0) rectangle (8, -1) {};
\node at (3.5,-0.5) {$x_{1,1}$};
\draw[rounded corners] (9, 0) rectangle (17, -1) {};
\node at (12.5,-0.5) {$x_{1,2}$};

\draw[blue, rounded corners]  (-3.3,0.2) rectangle (17.2,-1.2) {};
\node at (-1.7,-0.5) {$P_v(1)$};

\draw[rounded corners] (0, -2) rectangle (8, -3) {};
\node at (3.5,-2.5) {$\neg x_{1,1}$};
\draw[rounded corners] (9, -2) rectangle (17, -3) {};
\node at (12.5,-2.5) {$\neg x_{1,2}$};

\draw[blue, rounded corners]  (-3.3,-1.8) rectangle (17.2,-3.2) {};
\node at (-1.7,-2.5) {$P_v(\neg 1)$};

\draw[rounded corners] (0, -4) rectangle (8, -5) {};
\node at (3.5,-4.5) {$x_{2,1}$};
\draw[rounded corners] (9, -4) rectangle (17, -5) {};
\node at (12.5,-4.5) {$x_{2,2}$};

\draw[blue, rounded corners]  (-3.3,-3.8) rectangle (17.2,-5.2) {};
\node at (-1.7,-4.5) {$P_v(2)$};

\draw[rounded corners] (0, -6) rectangle (8, -7) {};
\node at (3.5,-6.5) {$\neg x_{2,1}$};
\draw[rounded corners] (9, -6) rectangle (17, -7) {};
\node at (12.5,-6.5) {$\neg x_{2,2}$};

\draw[blue, rounded corners]  (-3.3,-5.8) rectangle (17.2,-7.2) {};
\node at (-1.7,-6.5) {$P_v(\neg 2)$};

\draw[rounded corners] (0, -8) rectangle (8, -9) {};
\node at (3.5,-8.5) {$x_{3,1}$};

\draw[rounded corners] (0, -10) rectangle (8, -11) {};
\node at (3.5,-10.5) {$\neg x_{3,1}$};

\draw[rounded corners] (9, -12) rectangle (17, -13) {};
\node at (12.5,-12.5) {$x_{4,2}$};

\draw[rounded corners] (9, -14) rectangle (17, -15) {};
\node at (12.5,-14.5) {$\neg x_{4,2}$};

\draw[rounded corners] (0, -16) rectangle (8, -17) {};
\node at (3.5,-16.5) {$c_1$};
\draw[rounded corners] (9, -16) rectangle (17, -17) {};
\node at (12.5,-16.5) {$c_2$};

\node at (7.5,-19.5) {Pre-modification};

\filldraw (6.6,-2.5) circle(0.2);
\filldraw (7.5,-2.5) circle(0.2);

\filldraw (0.5,-0.5) circle(0.2);
\filldraw (0.5,-4.5) circle(0.2);
\filldraw (0.5,-8.5) circle(0.2);
\filldraw (0.5,-16.5) circle(0.2);

\filldraw (1.4,-0.5) circle(0.2);
\filldraw (1.4,-4.5) circle(0.2);
\filldraw (1.4,-10.5) circle(0.2);
\filldraw (1.4,-16.5) circle(0.2);

\filldraw (2.3,-0.5) circle(0.2);
\filldraw (2.3,-6.5) circle(0.2);
\filldraw (2.3,-8.5) circle(0.2);
\filldraw (2.3,-16.5) circle(0.2);

\filldraw (4.8,-0.5) circle(0.2);
\filldraw (4.8,-6.5) circle(0.2);
\filldraw (4.8,-10.5) circle(0.2);
\filldraw (4.8,-16.5) circle(0.2);

\filldraw (5.7,-2.5) circle(0.2);
\filldraw (5.7,-4.5) circle(0.2);
\filldraw (5.7,-8.5) circle(0.2);
\filldraw (5.7,-16.5) circle(0.2);

\filldraw (6.6,-2.5) circle(0.2);
\filldraw (6.6,-4.5) circle(0.2);
\filldraw (6.6,-10.5) circle(0.2);
\filldraw (6.6,-16.5) circle(0.2);

\filldraw (7.5,-2.5) circle(0.2);
\filldraw (7.5,-6.5) circle(0.2);
\filldraw (7.5,-10.5) circle(0.2);
\filldraw (7.5,-16.5) circle(0.2);

\filldraw (9.5,-0.5) circle(0.2);
\filldraw (9.5,-4.5) circle(0.2);
\filldraw (9.5,-12.5) circle(0.2);
\filldraw (9.5,-16.5) circle(0.2);

\filldraw (10.4,-0.5) circle(0.2);
\filldraw (10.4,-4.5) circle(0.2);
\filldraw (10.4,-14.5) circle(0.2);
\filldraw (10.4,-16.5) circle(0.2);

\filldraw (11.3,-0.5) circle(0.2);
\filldraw (11.3,-6.5) circle(0.2);
\filldraw (11.3,-12.5) circle(0.2);
\filldraw (11.3,-16.5) circle(0.2);

\filldraw (13.8,-2.5) circle(0.2);
\filldraw (13.8,-4.5) circle(0.2);
\filldraw (13.8,-12.5) circle(0.2);
\filldraw (13.8,-16.5) circle(0.2);

\filldraw (14.7,-2.5) circle(0.2);
\filldraw (14.7,-4.5) circle(0.2);
\filldraw (14.7,-14.5) circle(0.2);
\filldraw (14.7,-16.5) circle(0.2);

\filldraw (15.6,-2.5) circle(0.2);
\filldraw (15.6,-6.5) circle(0.2);
\filldraw (15.6,-12.5) circle(0.2);
\filldraw (15.6,-16.5) circle(0.2);

\filldraw (16.5,-2.5) circle(0.2);
\filldraw (16.5,-6.5) circle(0.2);
\filldraw (16.5,-14.5) circle(0.2);
\filldraw (16.5,-16.5) circle(0.2);
     \end{tikzpicture}
    \hspace{14mm}
     \begin{tikzpicture}[scale=0.32]
\draw[rounded corners] (0, 0) rectangle (8, -1) {};
\node at (3.5,-0.5) {$x_{1,1}$};
\draw[rounded corners] (9, 0) rectangle (17, -1) {};
\node at (12.5,-0.5) {$x_{1,2}$};

\draw[rounded corners] (0, -2) rectangle (8, -3) {};
\node at (3.5,-2.5) {$\neg x_{1,1}$};
\draw[rounded corners] (9, -2) rectangle (17, -3) {};
\node at (12.5,-2.5) {$\neg x_{1,2}$};

\draw[red, rounded corners] (-0.2, -1.8) rectangle (8.2, -3.2) {};
\draw[red, rounded corners] (8.8, -1.8) rectangle (17.2, -3.2) {};

\draw[rounded corners] (0, -4) rectangle (8, -5) {};
\node at (3.5,-4.5) {$x_{2,1}$};
\draw[rounded corners] (9, -4) rectangle (17, -5) {};
\node at (12.5,-4.5) {$x_{2,2}$};

\draw[red, rounded corners] (-0.2, -3.8) rectangle (8.2, -5.2) {};
\draw[red, rounded corners] (8.8, -3.8) rectangle (17.2, -5.2) {};

\draw[rounded corners] (0, -6) rectangle (8, -7) {};
\node at (3.5,-6.5) {$\neg x_{2,1}$};
\draw[rounded corners] (9, -6) rectangle (17, -7) {};
\node at (12.5,-6.5) {$\neg x_{2,2}$};

\draw[rounded corners] (0, -8) rectangle (8, -9) {};
\node at (3.5,-8.5) {$x_{3,1}$};

\draw[rounded corners] (0, -10) rectangle (8, -11) {};
\node at (3.5,-10.5) {$\neg x_{3,1}$};

\draw[rounded corners] (9, -12) rectangle (17, -13) {};
\node at (12.5,-12.5) {$x_{4,2}$};

\draw[rounded corners] (9, -14) rectangle (17, -15) {};
\node at (12.5,-14.5) {$\neg x_{4,2}$};

\draw[rounded corners] (0, -16) rectangle (8, -17) {};
\node at (3.5,-16.5) {$c_1$};
\draw[rounded corners] (9, -16) rectangle (17, -17) {};
\node at (12.5,-16.5) {$c_2$};

\node at (7.5,-19.5) {Post-modification};

\filldraw (6.6,-2.5) circle(0.2);
\filldraw (7.5,-2.5) circle(0.2);

\filldraw (0.5,-0.5) circle(0.2);
\filldraw (0.5,-4.5) circle(0.2);
\filldraw (0.5,-8.5) circle(0.2);
\filldraw (0.5,-16.5) circle(0.2);

\filldraw (1.4,-0.5) circle(0.2);
\filldraw (1.4,-4.5) circle(0.2);
\filldraw (1.4,-10.5) circle(0.2);
\filldraw (1.4,-16.5) circle(0.2);

\filldraw (2.3,-0.5) circle(0.2);
\filldraw (2.3,-6.5) circle(0.2);
\filldraw (2.3,-8.5) circle(0.2);
\filldraw (2.3,-16.5) circle(0.2);
\draw (2.3,-0.5)--(2.3,-16.5);

\filldraw (4.8,-0.5) circle(0.2);
\filldraw (4.8,-6.5) circle(0.2);
\filldraw (4.8,-10.5) circle(0.2);
\filldraw (4.8,-16.5) circle(0.2);

\filldraw (5.7,-2.5) circle(0.2);
\filldraw (5.7,-4.5) circle(0.2);
\filldraw (5.7,-8.5) circle(0.2);
\filldraw (5.7,-16.5) circle(0.2);

\filldraw (6.6,-2.5) circle(0.2);
\filldraw (6.6,-4.5) circle(0.2);
\filldraw (6.6,-10.5) circle(0.2);
\filldraw (6.6,-16.5) circle(0.2);

\filldraw (7.5,-2.5) circle(0.2);
\filldraw (7.5,-6.5) circle(0.2);
\filldraw (7.5,-10.5) circle(0.2);
\filldraw (7.5,-16.5) circle(0.2);

\filldraw (9.5,-0.5) circle(0.2);
\filldraw (9.5,-4.5) circle(0.2);
\filldraw (9.5,-12.5) circle(0.2);
\filldraw (9.5,-16.5) circle(0.2);

\filldraw (10.4,-0.5) circle(0.2);
\filldraw (10.4,-4.5) circle(0.2);
\filldraw (10.4,-14.5) circle(0.2);
\filldraw (10.4,-16.5) circle(0.2);

\filldraw (11.3,-0.5) circle(0.2);
\filldraw (11.3,-6.5) circle(0.2);
\filldraw (11.3,-12.5) circle(0.2);
\filldraw (11.3,-16.5) circle(0.2);
\draw (11.3,-0.5)--(11.3,-16.5);

\filldraw (13.8,-2.5) circle(0.2);
\filldraw (13.8,-4.5) circle(0.2);
\filldraw (13.8,-12.5) circle(0.2);
\filldraw (13.8,-16.5) circle(0.2);

\filldraw (14.7,-2.5) circle(0.2);
\filldraw (14.7,-4.5) circle(0.2);
\filldraw (14.7,-14.5) circle(0.2);
\filldraw (14.7,-16.5) circle(0.2);

\filldraw (15.6,-2.5) circle(0.2);
\filldraw (15.6,-6.5) circle(0.2);
\filldraw (15.6,-12.5) circle(0.2);
\filldraw (15.6,-16.5) circle(0.2);

\filldraw (16.5,-2.5) circle(0.2);
\filldraw (16.5,-6.5) circle(0.2);
\filldraw (16.5,-14.5) circle(0.2);
\filldraw (16.5,-16.5) circle(0.2);

     \end{tikzpicture}
     \caption{Illustration of a possible re-pooling in Lemma~\ref{lem:vcont} proof for $n=4$, $m=2$, with the clauses being $c_1 = x_1 \lor x_2 \lor \neg x_3$ and $c_2 = \neg x_1 \lor x_2 \lor x_4$. The black rectangles represent states; the blue rectangles -- attractive variable pools (note that $P_v(1)$ and $P_v(\neg 1)$ are contradicting, and so are $P_v(2)$ and $P_v(\neg 2)$); the black lines -- clause pools (with dots marking the states in these pools); and the red rectangles -- singleton pools. The modification breaks all contradicting attractive variable pools and creates new $2$ clause and $4$ singleton pools. The clause pools correspond to the assignment $\parentheses*{x_1,x_2,x_3,x_4}=\parentheses{\texttt{True},\texttt{False},\texttt{True},\texttt{True}}$ that satisfies both clauses. The signaling policy in states $\neg x_{3,1}$ and $\neg x_{4,2}$ remains unchanged during the~modification.}
	\label{fig:vcont}
 \end{figure}
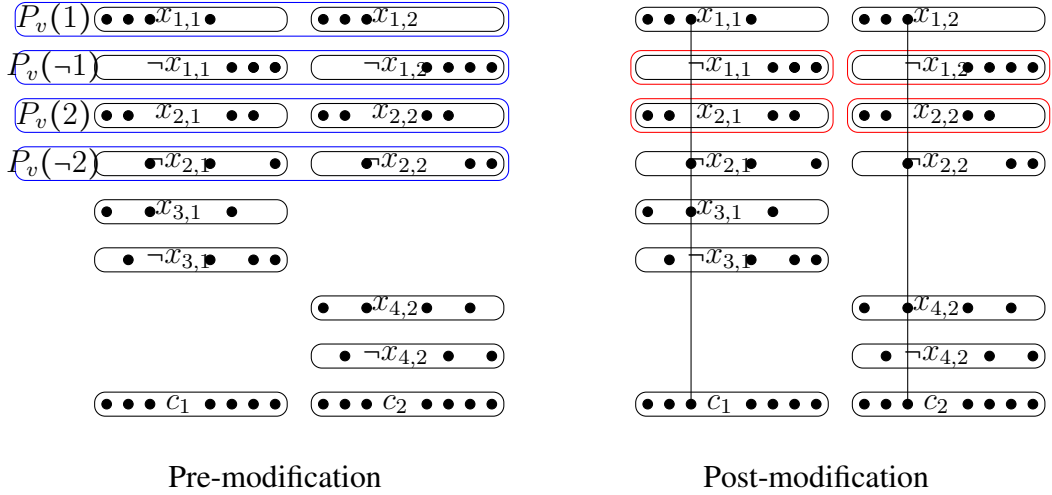

 Finally, let us count the sender's gains and losses from the above modification.  For every clause $j$ that contains a  variable $x_i$ for some $i\in C$, we have increased the utility at clause state $c_j$ by $7$. The total number of such clauses is at least $|C|d/3$. Meanwhile, there are $|C|d \times 2$ states at which we have decreased the utility by $1$ due to breaking all the contradicting attractive variable pools in $C$. In all other states, the utility remained unchanged. Since $7 \times \frac{|C|d}{3} - |C|d \times 2 > 0$,  such a modification strictly improves the sender's~utility.

\subsection{Proof of Lemma~\ref{lem:ccont}}
\label{append:prop:ccont}
Given an equilibrium that contradicts a clause, we modify it to a better one for the sender. Let $j\in \brackets*{m}$ be a contradictory clause that contains the variables $x_{i_1},x_{i_2},x_{i_3}$. For simplicity of notations, we assume that clause $j$ is $x_{i_1} \vee x_{i_2} \vee x_{i_3}$ and the induced assignment is $x_{i_1}=x_{i_2}=x_{i_3}=\texttt{False}$, which leads to contradiction; same arguments apply for the other seven cases. This means that $P_v\parentheses*{i_1}, P_v\parentheses*{i_2}$ and $P_v\parentheses*{i_3}$ are attractive variable pools, and that the pool that consists of the states $\braces*{c_j , \neg x_{i_1,j}, \neg x_{i_2,j}, \neg x_{i_3,j}}$ is not one of the clause pools of the $j$th clause.

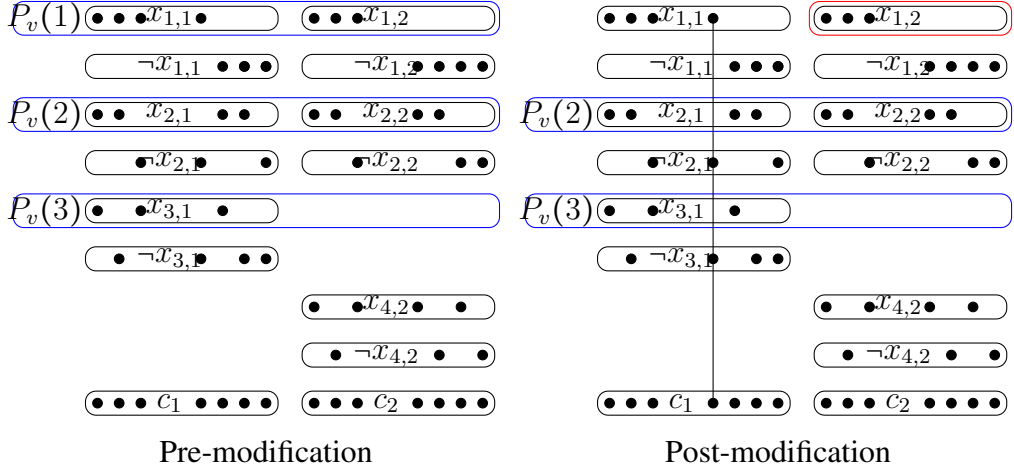
\begin{figure}[ht]
     \centering
     \begin{tikzpicture}[scale=0.32]
\node at (7.5,-18.5) {Pre-modification};
     
\draw[rounded corners] (0, 0) rectangle (8, -1) {};
\node at (3.5,-0.5) {$x_{1,1}$};
\draw[rounded corners] (9, 0) rectangle (17, -1) {};
\node at (12.5,-0.5) {$x_{1,2}$};

\draw[blue, rounded corners]  (-3,0.2) rectangle (17.2,-1.2) {};
\node at (-1.7,-0.5) {$P_v(1)$};

\draw[rounded corners] (0, -2) rectangle (8, -3) {};
\node at (3.5,-2.5) {$\neg x_{1,1}$};
\draw[rounded corners] (9, -2) rectangle (17, -3) {};
\node at (12.5,-2.5) {$\neg x_{1,2}$};

\draw[rounded corners] (0, -4) rectangle (8, -5) {};
\node at (3.5,-4.5) {$x_{2,1}$};
\draw[rounded corners] (9, -4) rectangle (17, -5) {};
\node at (12.5,-4.5) {$x_{2,2}$};

\draw[blue, rounded corners]  (-3,-3.8) rectangle (17.2,-5.2) {};
\node at (-1.7,-4.5) {$P_v(2)$};

\draw[rounded corners] (0, -6) rectangle (8, -7) {};
\node at (3.5,-6.5) {$\neg x_{2,1}$};

\draw[rounded corners] (9, -6) rectangle (17, -7) {};
\node at (12.5,-6.5) {$\neg x_{2,2}$};

\draw[rounded corners] (0, -8) rectangle (8, -9) {};
\node at (3.5,-8.5) {$x_{3,1}$};

\draw[blue, rounded corners]  (-3,-7.8) rectangle (17.2,-9.2) {};
\node at (-1.7,-8.5) {$P_v(3)$};

\draw[rounded corners] (0, -10) rectangle (8, -11) {};
\node at (3.5,-10.5) {$\neg x_{3,1}$};

\draw[rounded corners] (9, -12) rectangle (17, -13) {};
\node at (12.5,-12.5) {$x_{4,2}$};

\draw[rounded corners] (9, -14) rectangle (17, -15) {};
\node at (12.5,-14.5) {$\neg x_{4,2}$};

\draw[rounded corners] (0, -16) rectangle (8, -17) {};
\node at (3.5,-16.5) {$c_1$};
\draw[rounded corners] (9, -16) rectangle (17, -17) {};
\node at (12.5,-16.5) {$c_2$};

\filldraw (6.6,-2.5) circle(0.2);
\filldraw (7.5,-2.5) circle(0.2);

\filldraw (0.5,-0.5) circle(0.2);
\filldraw (0.5,-4.5) circle(0.2);
\filldraw (0.5,-8.5) circle(0.2);
\filldraw (0.5,-16.5) circle(0.2);

\filldraw (1.4,-0.5) circle(0.2);
\filldraw (1.4,-4.5) circle(0.2);
\filldraw (1.4,-10.5) circle(0.2);
\filldraw (1.4,-16.5) circle(0.2);

\filldraw (2.3,-0.5) circle(0.2);
\filldraw (2.3,-6.5) circle(0.2);
\filldraw (2.3,-8.5) circle(0.2);
\filldraw (2.3,-16.5) circle(0.2);

\filldraw (4.8,-0.5) circle(0.2);
\filldraw (4.8,-6.5) circle(0.2);
\filldraw (4.8,-10.5) circle(0.2);
\filldraw (4.8,-16.5) circle(0.2);

\filldraw (5.7,-2.5) circle(0.2);
\filldraw (5.7,-4.5) circle(0.2);
\filldraw (5.7,-8.5) circle(0.2);
\filldraw (5.7,-16.5) circle(0.2);

\filldraw (6.6,-2.5) circle(0.2);
\filldraw (6.6,-4.5) circle(0.2);
\filldraw (6.6,-10.5) circle(0.2);
\filldraw (6.6,-16.5) circle(0.2);

\filldraw (7.5,-2.5) circle(0.2);
\filldraw (7.5,-6.5) circle(0.2);
\filldraw (7.5,-10.5) circle(0.2);
\filldraw (7.5,-16.5) circle(0.2);

\filldraw (9.5,-0.5) circle(0.2);
\filldraw (9.5,-4.5) circle(0.2);
\filldraw (9.5,-12.5) circle(0.2);
\filldraw (9.5,-16.5) circle(0.2);

\filldraw (10.4,-0.5) circle(0.2);
\filldraw (10.4,-4.5) circle(0.2);
\filldraw (10.4,-14.5) circle(0.2);
\filldraw (10.4,-16.5) circle(0.2);

\filldraw (11.3,-0.5) circle(0.2);
\filldraw (11.3,-6.5) circle(0.2);
\filldraw (11.3,-12.5) circle(0.2);
\filldraw (11.3,-16.5) circle(0.2);

\filldraw (13.8,-2.5) circle(0.2);
\filldraw (13.8,-4.5) circle(0.2);
\filldraw (13.8,-12.5) circle(0.2);
\filldraw (13.8,-16.5) circle(0.2);

\filldraw (14.7,-2.5) circle(0.2);
\filldraw (14.7,-4.5) circle(0.2);
\filldraw (14.7,-14.5) circle(0.2);
\filldraw (14.7,-16.5) circle(0.2);

\filldraw (15.6,-2.5) circle(0.2);
\filldraw (15.6,-6.5) circle(0.2);
\filldraw (15.6,-12.5) circle(0.2);
\filldraw (15.6,-16.5) circle(0.2);

\filldraw (16.5,-2.5) circle(0.2);
\filldraw (16.5,-6.5) circle(0.2);
\filldraw (16.5,-14.5) circle(0.2);
\filldraw (16.5,-16.5) circle(0.2);
     \end{tikzpicture}
     \begin{tikzpicture}[scale=0.32]
\node at (7.5,-18.5) {Post-modification};
     
\draw[rounded corners] (0, 0) rectangle (8, -1) {};
\node at (3.5,-0.5) {$x_{1,1}$};
\draw[rounded corners] (9, 0) rectangle (17, -1) {};
\draw[red, rounded corners] (8.8, 0.2) rectangle (17.2, -1.2) {};
\node at (12.5,-0.5) {$x_{1,2}$};

\draw[rounded corners] (0, -2) rectangle (8, -3) {};
\node at (3.5,-2.5) {$\neg x_{1,1}$};
\draw[rounded corners] (9, -2) rectangle (17, -3) {};
\node at (12.5,-2.5) {$\neg x_{1,2}$};

\draw[rounded corners] (0, -4) rectangle (8, -5) {};
\node at (3.5,-4.5) {$x_{2,1}$};
\draw[rounded corners] (9, -4) rectangle (17, -5) {};
\node at (12.5,-4.5) {$x_{2,2}$};

\draw[blue, rounded corners]  (-3,-3.8) rectangle (17.2,-5.2) {};
\node at (-1.7,-4.5) {$P_v(2)$};

\draw[rounded corners] (0, -6) rectangle (8, -7) {};
\node at (3.5,-6.5) {$\neg x_{2,1}$};
\draw[rounded corners] (9, -6) rectangle (17, -7) {};
\node at (12.5,-6.5) {$\neg x_{2,2}$};

\draw[rounded corners] (0, -8) rectangle (8, -9) {};
\node at (3.5,-8.5) {$x_{3,1}$};

\draw[blue, rounded corners]  (-3,-7.8) rectangle (17.2,-9.2) {};
\node at (-1.7,-8.5) {$P_v(3)$};

\draw[rounded corners] (0, -10) rectangle (8, -11) {};
\node at (3.5,-10.5) {$\neg x_{3,1}$};

\draw[rounded corners] (9, -12) rectangle (17, -13) {};
\node at (12.5,-12.5) {$x_{4,2}$};

\draw[rounded corners] (9, -14) rectangle (17, -15) {};
\node at (12.5,-14.5) {$\neg x_{4,2}$};

\draw[rounded corners] (0, -16) rectangle (8, -17) {};
\node at (3.5,-16.5) {$c_1$};
\draw[rounded corners] (9, -16) rectangle (17, -17) {};
\node at (12.5,-16.5) {$c_2$};

\filldraw (6.6,-2.5) circle(0.2);
\filldraw (7.5,-2.5) circle(0.2);

\filldraw (0.5,-0.5) circle(0.2);
\filldraw (0.5,-4.5) circle(0.2);
\filldraw (0.5,-8.5) circle(0.2);
\filldraw (0.5,-16.5) circle(0.2);

\filldraw (1.4,-0.5) circle(0.2);
\filldraw (1.4,-4.5) circle(0.2);
\filldraw (1.4,-10.5) circle(0.2);
\filldraw (1.4,-16.5) circle(0.2);

\filldraw (2.3,-0.5) circle(0.2);
\filldraw (2.3,-6.5) circle(0.2);
\filldraw (2.3,-8.5) circle(0.2);
\filldraw (2.3,-16.5) circle(0.2);

\filldraw (4.8,-0.5) circle(0.2);
\filldraw (4.8,-6.5) circle(0.2);
\filldraw (4.8,-10.5) circle(0.2);
\filldraw (4.8,-16.5) circle(0.2);
\draw (4.8,-0.5)--(4.8,-16.5);

\filldraw (5.7,-2.5) circle(0.2);
\filldraw (5.7,-4.5) circle(0.2);
\filldraw (5.7,-8.5) circle(0.2);
\filldraw (5.7,-16.5) circle(0.2);

\filldraw (6.6,-2.5) circle(0.2);
\filldraw (6.6,-4.5) circle(0.2);
\filldraw (6.6,-10.5) circle(0.2);
\filldraw (6.6,-16.5) circle(0.2);

\filldraw (7.5,-2.5) circle(0.2);
\filldraw (7.5,-6.5) circle(0.2);
\filldraw (7.5,-10.5) circle(0.2);
\filldraw (7.5,-16.5) circle(0.2);

\filldraw (9.5,-0.5) circle(0.2);
\filldraw (9.5,-4.5) circle(0.2);
\filldraw (9.5,-12.5) circle(0.2);
\filldraw (9.5,-16.5) circle(0.2);

\filldraw (10.4,-0.5) circle(0.2);
\filldraw (10.4,-4.5) circle(0.2);
\filldraw (10.4,-14.5) circle(0.2);
\filldraw (10.4,-16.5) circle(0.2);

\filldraw (11.3,-0.5) circle(0.2);
\filldraw (11.3,-6.5) circle(0.2);
\filldraw (11.3,-12.5) circle(0.2);
\filldraw (11.3,-16.5) circle(0.2);

\filldraw (13.8,-2.5) circle(0.2);
\filldraw (13.8,-4.5) circle(0.2);
\filldraw (13.8,-12.5) circle(0.2);
\filldraw (13.8,-16.5) circle(0.2);

\filldraw (14.7,-2.5) circle(0.2);
\filldraw (14.7,-4.5) circle(0.2);
\filldraw (14.7,-14.5) circle(0.2);
\filldraw (14.7,-16.5) circle(0.2);

\filldraw (15.6,-2.5) circle(0.2);
\filldraw (15.6,-6.5) circle(0.2);
\filldraw (15.6,-12.5) circle(0.2);
\filldraw (15.6,-16.5) circle(0.2);

\filldraw (16.5,-2.5) circle(0.2);
\filldraw (16.5,-6.5) circle(0.2);
\filldraw (16.5,-14.5) circle(0.2);
\filldraw (16.5,-16.5) circle(0.2);
     \end{tikzpicture}   
     \caption{Illustration of re-pooling in Lemma~\ref{lem:ccont} proof for $n=4$, $m=2$, demonstrated on the clause $x_1 \lor x_2 \lor x_3$ with the index $j=1$. The black rectangles represent states, the blue rectangles -- attractive variable pools, the black lines -- clause pools (with dots in the states belonging to them) and the red rectangle -- a singleton pool. Under the transformation, the attractive variable pool $P_v\parentheses*{1}$ is removed, and a new clause pool $\braces*{c_1 , x_{1,1}, \neg x_{2,1}, \neg x_{3,1}}$ is created, together with a new singleton pool $\braces*{x_{1,2}}$. The signaling policy in states $\neg x_{1,1}, \neg x_{1,2}, \neg x_{2,2}, x_{4,2},\neg x_{4,2}$ and $c_2$ remains unchanged during the modification.} 
     \label{fig:ccont}
 \end{figure}

Since in all three states $x_{i_1,j}$, $x_{i_2,j}$ and $x_{i_3,j}$ the sent signal is the one that induces the variable pool posterior, one cannot use these states to pool $c_j$ together with them. Therefore, at state $c_j$, the sender's utility must be $-7$. We apply the following modification to the equilibrium (see Figure~\ref{fig:ccont} for illustration). We break the pool $P_v\parentheses*{i_1}$. Now we pool the states $\braces*{c_j , x_{i_1,j}, \neg x_{i_2,j}, \neg x_{i_3,j}}$ into a clause pool. Finally, for all the now un-pooled states $x_{i,j'}$ ($j'\neq j$), we pool them into singleton pools. The resultant sender's strategy still yields a cheap talk equilibrium, assuming that the receiver takes the corresponding action $a_p$ for each newly created pool $p$.

The sender's utility gain from this modification is $7$ for state $c_j$, whereas her losses from this modification appear in the $d$ states of $P_v\parentheses*{i_1}$ in which the utility has been reduced from $1$ to $0$. Since we assume $d\leq 6$, such a modification strictly increases the sender's utility, as desired. 

\section{Omitted Proofs in Section \ref{sec:tractble}} 
\subsection{Proof of Proposition~\ref{pro:binary}}\label{append:tractable:pro-binary}
By Proposition~\ref{pro:signals}, one can restrict attention to cheap talk equilibria in which the sender uses finitely many signals. Assume w.l.o.g.~that there is no state $\omega\in\Omega$ with $u_S\parentheses*{\omega,a_1}=u_S\parentheses*{\omega,a_2}$ and $u_R\parentheses*{\omega,a_1}=u_R\parentheses*{\omega,a_2}$. Indeed, the utility of both agents in such a state is independent of $\pi$ and $s$, implying that such a state can be ignored in the equilibrium analysis. Let $\Omega^1$, $\Omega^2$ and $\Omega^{1,2}$ be, respectively, the sets of states in which the sender strictly prefers action $a_1$ over action $a_2$, strictly prefers $a_2$ over $a_1$, and is indifferent between the two actions.

Consider a sender-optimal equilibrium $\parentheses*{\pi, s}$ with $\supp\parentheses*{\pi}$ being the smallest among all possibilities for a sender-optimal equilibrium. Note that no two signals $\sigma^1,\sigma^2\in\supp\parentheses*{\pi}$ might have $s\brackets*{\sigma^1}=s\brackets*{\sigma^2}$, as replacing them by a single signal leads to an equivalent equilibrium in terms of utilities, contradicting the minimality of $\supp\parentheses*{\pi}$. Indeed, given two such signals $\sigma^1$ and $\sigma^2$ -- one can define a new strategy profile $\parentheses*{\pi', s'}$ s.t.: (i)~$\pi'$ is obtained from $\pi$ by replacing $\sigma^2$ with $\sigma^1$; (ii)~$s'=s|_{\supp\parentheses*{\pi'}}$ is obtained from $s$ by setting $s'\brackets*{\sigma}=s\brackets*{\sigma}$ for $\sigma\neq \sigma^2$. Since the posterior $p_{\sigma^1}^{\pi'}$ is a convex combination of the posteriors $p_{\sigma^1}^{\pi}$ and $p_{\sigma^2}^{\pi}$, we get that $s\brackets*{\sigma^1}$ is a best response to getting the signal $\sigma_1$ under $\pi'$.\footnote{\label{foot:1}Here we use the well-known fact that the set of posteriors for which a specific action is (weakly) best for the receiver is convex (see, e.g.,~\cite{kamenica2011bayesian}).} Moreover, as the sender's expected utility upon transmitting any signal under $\pi'$ is the same as under $\pi$, and the receiver's mixed strategy upon getting a specific signal under $\pi'$ is the same as upon getting this signal under $\pi$, we get that $\pi'$ is a best response to $s'$. To sum up, $\parentheses*{\pi', s'}$ is a cheap talk equilibrium that has the same expected sender's and receiver's utilities as $\parentheses*{\pi, s}$, but $\supp\parentheses*{\pi'}<\supp\parentheses*{\pi}$, a contradiction.

Let $\sigma_1\in\supp\parentheses*{\pi}$ be the signal upon getting which the receiver takes action $a_1$ with the highest probability over all signals in $\supp\parentheses*{\pi}$, say with probability $q_1$. Note that $\sigma_1$ is uniquely defined by the previous paragraph. Similarly, let $\sigma_2$ be the unique signal resulting in the receiver taking action $a_2$ with the highest probability among the signals in $\supp\parentheses*{\pi}$, say with probability $q_2$.

We claim that for any $\omega\in\Omega^1$, $\pi\parentheses*{\sigma_1\mid \omega}=1$. Indeed, if that condition is not satisfied for some $\omega_i\in\Omega^1$, then the sender will better off deviating from $\pi$ to $\tilde{\pi}$ with $\tilde{\pi}\brackets*{\omega}=\pi\brackets*{\omega}$ for $\omega\neq\omega_i$ and $\tilde{\pi}\parentheses*{\sigma_1\mid \omega_i}=1$, a contradiction. Thus, $\sigma_1$ is sent with probability $1$ for any $\omega\in\Omega^1$. Similarly, $\sigma_2$ is sent with probability $1$ for any $\omega\in\Omega^2$.

Consider some $\omega\in\Omega^{1,2}$ and let $\sigma_3$ be some signal sent with a positive probability in the state $\omega$. If $\sigma_3\neq \sigma_1, \sigma_2$, then $\sigma_3$ is sent with a positive probability exclusively for $\omega\in\Omega^{1,2}$. Note that it is w.l.o.g.~to assume $\absolute*{\supp\parentheses*{s\brackets*{\sigma_3}}}=1$, since if both agents are indifferent between both actions upon $\sigma_3$, then making the receiver choose action $a_2$ upon observing $\sigma_3$ does not break the equilibrium conditions, and the expected utilities of both agents are unchanged. Assume, w.l.o.g., that $\supp\parentheses*{s\brackets*{\sigma_3}}=\braces*{a_2}$. Then sending $\sigma_2$ whenever the sender should have transmitted $\sigma_3$ (under $\pi$) still yields a posterior for which $a_2$ is a best response for the receiver (see Footnote~\ref{foot:1}). Therefore, modifying $\pi$ by sending $\sigma_2$ whenever one should have sent $\sigma_3$ yields a sender-optimal equilibrium using a smaller number of signals, a contradiction. Therefore, $\sigma_1$ and $\sigma_2$ are the only signals used by the sender.

Moreover, let $\Omega^1_2:=\braces*{\omega\in\Omega^{1,2}:\; u_R\parentheses*{\omega,a_1}>u_R\parentheses*{\omega,a_2}}$ and $\Omega^2_1:=\Omega^{1,2}\setminus \Omega^1_2$. Then, by the discussion above, $\sigma_1$ is sent with probability $1$ for $\omega\in\Omega^1\cup\Omega^1_2$ and $\sigma_2$ is sent with probability $1$ for $\omega\in\Omega^2\cup\Omega^2_1$. Note that the expected sender's utility is linear in $q_1,q_2$ and non-decreasing in each of~them.

Recall that in each state, the signal sent is chosen deterministically. Therefore, we are only left with the receiver's incentive-compatibility constraints. Let us normalize the receiver's utility for action $a_2$ to be $0$ in any state.\footnote{It is possible since subtracting the same number from $u_R\parentheses*{\omega,a_1}$ and $u_R\parentheses*{\omega,a_2}$ for a fixed $\omega\in\Omega$ does not affect the set of equilibria and sender's preferences over them.} Note that if $q_1>0$ and $q_2>0$, we must have: $\sum_{\omega\in\Omega^1\cup\Omega^1_2} u_R\parentheses*{\omega,a_1}\cdot \mu\parentheses*{\omega}\geq 0$ and  $\sum_{\omega\in\Omega^2\cup\Omega^2_1} u_R\parentheses*{\omega,a_1}\cdot \mu\parentheses*{\omega}\leq 0$. If both these conditions hold, then in the sender-optimal equilibrium we have $q_1=q_2=1$, which corresponds to a sender-greedy signaling policy and the receiver obeying the recommended actions. If at least one of these conditions does not hold, then $q_1q_2=0$, implying that the receiver takes one action regardless of the signal. In this case, the sender cannot benefit from cheap talk, and the sender-optimal equilibrium is a babbling equilibrium using just one signal; if the receiver is indifferent between the two actions when the posterior equals the prior, then he should break the tie in the sender's favor.

Since there is no state in which both the sender and the receiver are indifferent between the two actions, there is only one possible sender-greedy signaling policy $\pi$. If the receiver chooses the strategy $s$ that obeys the sender's recommendations according to $\pi$, then from the conditions $\sum_{\omega\in\Omega^1\cup\Omega^1_2} u_R\parentheses*{\omega,a_1}\cdot \mu\parentheses*{\omega}\geq 0$ and  $\sum_{\omega\in\Omega^2\cup\Omega^2_1} u_R\parentheses*{\omega,a_1}\cdot \mu\parentheses*{\omega}\leq 0$ we get that the receiver best-replies to the sender. Moreover, the sender-greedy signaling policy best-replies to $s$ by definition. Therefore, the resultant $\parentheses*{\pi,s}$ is an equilibrium. Moreover, the sender's expected utility can be computed in time linear in $n$. As the sender's expected utility in the two babbling equilibria in which the receiver plays a pure strategy can also be computed in time linear in $n$, we are done.
\end{document}